\newcommand{\qedhere}{}
\title[Fast Moment-Based Estimation for Hierarchical Models]{%
  Fast moment-based estimation for hierarchical models%
}
\author[P.\ O.\ Perry]{Patrick O.\ Perry}
\address{Stern School of Business, New York University, USA}
\email{pperry@stern.nyu.edu}
\begin{document}

\begin{abstract}
Hierarchical models allow for heterogeneous behaviours in a population while
simultaneously borrowing estimation strength across all subpopulations.
Unfortunately, existing likelihood-based methods for fitting hierarchical
models have high computational demands, and these demands have limited their
adoption in large-scale prediction and inference problems.  This paper
proposes a moment-based procedure for estimating the parameters of a
hierarchical model which has its roots in a method originally introduced by
Cochran in 1937.  The method trades statistical efficiency for computational
efficiency.  It gives consistent parameter estimates, competitive prediction
error performance, and substantial computational improvements.  When applied
to a large-scale recommender system application and compared to a standard
maximum likelihood procedure, the method delivers competitive prediction
performance while reducing the sequential computation time from hours to
minutes.

\keywords{Hierarchical model; Generalized linear mixed model; Recommender
systems; Statistical-computational trade-off}
\end{abstract}

\section{Introduction}
\label{sec:intro}

Hierarchical models are appropriate when we collect data
from multiple sub-populations or groups, each of which exhibits different
associations between the measured variables.  Each group can be a particular
classroom, firm, city, time period, or any member of a class of similar
entities.  Rather than ignoring the subpopulation structure and assuming that
all observations are independent, a hierarchical model accounts for the
dependence of the observations within a group by allowing for random
subpopulation-specific effects.  These models and more general mixed models
are widely applied in the natural and social sciences, and many reference
books describe them in detail \citep{Snij12,Scot13}.

By explicitly allowing for between-group variability, hierarchical models hold
two main advantages over models that do not.  First, in accounting for this
variability, a hierarchical model is able to give more accurate uncertainty
estimates for population parameter estimates \citep{Rao65}.  Second, by
drawing strength across similar experimental units, a hierarchical model can
give better group-specific predictions \citep{Rein85}.  The latter phenomenon
is closely related to the performance of Stein's shrinkage estimators
\citep{Morr83}.

One seemingly-appropriate application for hierarchical models is in
recommender systems, where the goal is to take historical data about users,
items, and user ratings of these items to learn users' preferences and to make
recommendations based on these preferences \citep{Adom05}.  Here, users
correspond to groups, and user-specific preferences correspond to random
effects.  In fact, early in the development of recommender systems,
\citet{Cond99} and \citet{Ansa00} advocated for the use of these models and
more general mixed models due to their potential to combine content-based
filtering (recommending based on item-specific attributes) and collaborative
filtering (recommending based on preferences of similar users).

Despite their advantages, in the late 2000s, many authors deemed the
computational costs required to fit a hierarchical model to be prohibitively
high for recommender systems and other similar applications in
commercial-scale settings \citep{Zhan07,Agar08,Naik08,Agar09}.  Most methods
for fitting these models and related factor models are iterative, with a high
computational cost for each iteration.  Letting $q$ denote the number of fixed
and random effects in the model, methods based on expectation-maximization
\citep{Demp81,Zhan08,Agar09}, variational approximations \citep{Arma11},
likelihood maximization \citep{Gold86,Jenn86,Long87,Lind88}, and profile
likelihood maximization, require initial computation costs proportional to $N
q^2$, where $N$ is the number of samples, followed by a series of iterations,
each with computational costs proportional $M q^3$ or $M q^4$, where $M$ is
the number of groups.  This can be substantial when~$M$ and~$N$ are both
large.

In cases where the predictors are sparse, it is possible to exploit this
structure to achieve speed-ups on the order of $q$ or $q^2$, which can be
dramatic if $q$ is large \citep{Zhan07}.  This, however, requires special
structure in the predictor matrices and imposes sparsity constraints on the
parameter estimates.

In general situations, one can partition the data between multiple processors,
compute separate parameter estimates for each chunk, and then combine the
results \citep{Huan05,Gebr12,Khan13,Scot13B}.  These splitting strategies
often require the same total computational cost, but they split the costs
between $K$ processors, reducing wall clock time by a factor of $K$.
An alternative approach is to approximate the data likelihood using a form of
$h$-likelihood and then optimize the resulting criterion via stochastic
gradient descent \citep{Kore09b,Dror11}.  This requires a series of iterations, each
with computation costs proportional to~$N q$, often leading to a lower overall
fitting time.

In this report, we propose an alternative approach, revisiting and extending
a moment-based estimation procedure originally due to \citet{Coch37}.  In this
approach, we fit group-specific estimates in isolation, then combine these
estimates to get population parameter estimates by matching moments.  The main
advantage of the approach over existing alternatives is that it is not
iterative.  There is an initial cost proportional to $N q^2$, followed by a
fixed cost proportional to $M q^4$.  Due to memory locality, in practice the
dominant cost is often proportional to~$M$.  The procedure can be trivially
distributed across $K$ processors, reducing computation by a factor of~$K$.

\begin{figure}
  \centering
  \includegraphics[width=13.5cm]{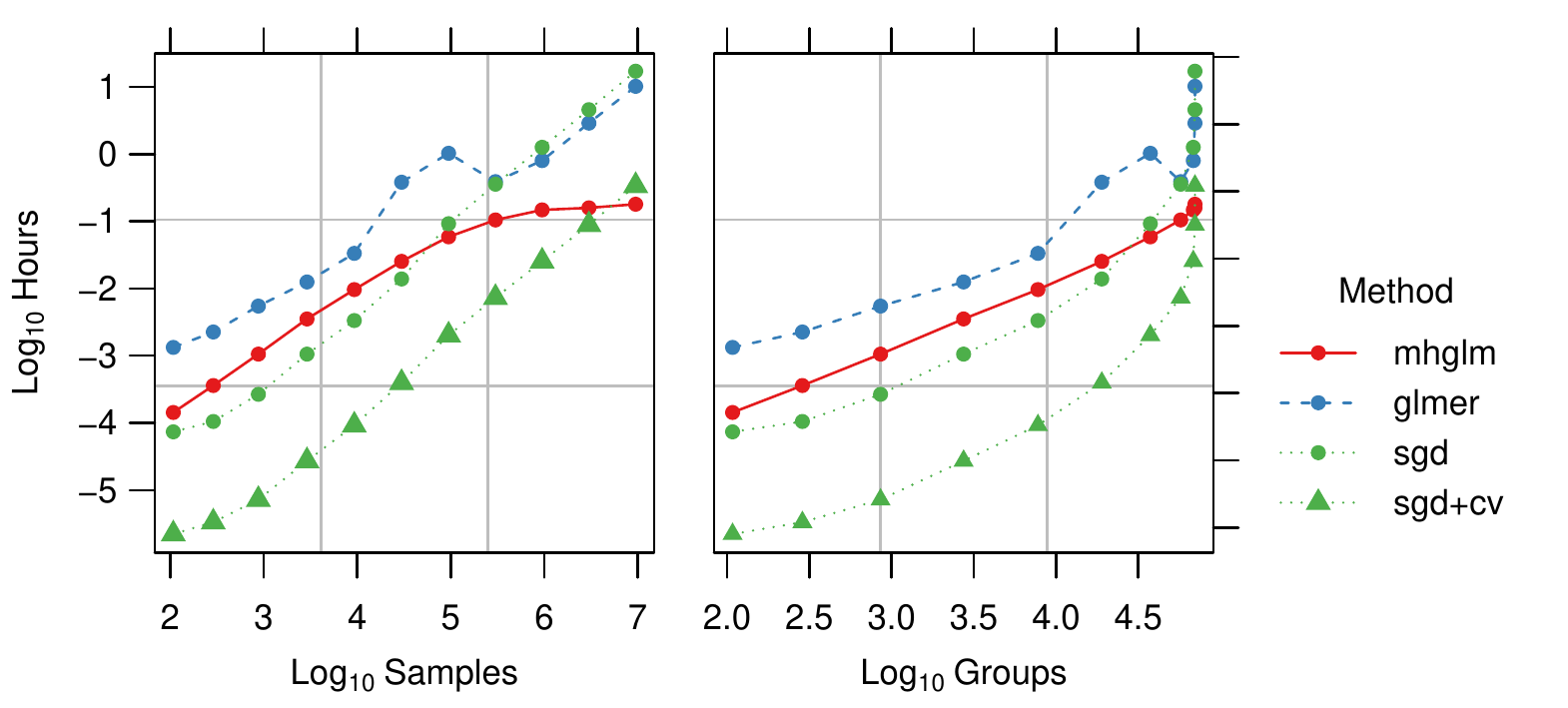}
  \caption{Computational scaling properties for hierarchical model fitting procedures.}
  \label{fig:time-order}
\end{figure}

Fig.~\ref{fig:time-order} demonstrates the potential advantages of the
moment-based estimation method.  This figure shows the amount of CPU time
required by three different procedures---maximum likelihood (\emph{glmer}),
stochastic gradient descent (\emph{sgd}), and the proposed method
(\emph{mhglm})---fitting hierarchical models to subsets of the MovieLens~10M
recommender system dataset \citep{Movi09}.  The first two methods are
implemented in a mix of R, C, and C++; the proposed method is implemented in
R.  In this example, the computational costs required for the first two
methods appear to scale linearly with the sample size, $N$, while for the
latter, the dominant computational costs appear to be proportional to $M$.  At
the largest value of $N$ reported, the proposed method is 50 times faster than
\emph{glmer}, and $1.7$ times faster than \emph{sgd} ($90$ times faster if we
include the cross-validation time required to choose the tuning parameter for
\emph{sgd}).  Notably, even if \emph{glmer} were split across 10 processors,
running the proposed method on a single CPU would still be faster by a factor
of 4.

In this report, we demonstrate that the proposed moment-based estimation
procedure is often faster than likelihood-based methods.  The improvements in
computational efficiency do not come free; they are paid for by sacrificing
some statistical efficiency.  In many large-sample regimes, the loss in
statistical efficiency is small or modest, and it becomes worthwhile to make
this statistical-computational trade-off.

We introduce hierarchical models in more detail in
Section~\ref{sec:hierarchical-models}.  Next, in Section~\ref{sec:moment-est}
we describe the proposed moment-based fitting procedure.  This procedure
depends on a choice of weights, which we discuss in
Section~\ref{sec:weight-choices}. In Sections~\ref{sec:finite-sample-prop}
and~\ref{sec:asymptotics} we derive finite-sample and asymptotic properties
for the estimators, including consistency, relative efficiency, and asymptotic
normality.  We investigate performance in simulations in
Section~\ref{sec:simulations}.  Finally, we apply the method to a recommender
system application in Section~\ref{sec:application}, and close with a brief
discussion in Section~\ref{sec:discussion}.  The on-line supplementary
material contains Appendices~\ref{sec:appendix-first}--\ref{sec:appendix-last}
with additional details and technical lemmas.

The proposed method is implemented in the \texttt{mbest} R
package, available at \url{http://cran.r-project.org/web/packages/mbest/}.
Data and software to generate the figures in this paper are available at
\url{http://ptrckprry.com/reports/}.

\section{Hierarchical models}
\label{sec:hierarchical-models}


Consider a collection of $M$ subpopulations or groups.  In group~$i$ we
observe~$n_i$ random response values denoted individually as $y_{ij} \ (j = 1,
\dotsc, n_i)$, or jointly as the vector $\vy_i$ with $j$th component equal to
$y_{ij}$ for $j = 1, \dotsc, n_i$.  The total number of observations is $N =
\sum_{i=1}^{M} n_i$.  Suppose that each observation $y_{ij}$ has two
associated predictor vectors: a vector~$\vx_{ij}$ of dimension~$p$, and a
vector $\vz_{ij}$ of dimension~$q$.  In matrix form, let $\mX_i$ and $\mZ_i$
be the corresponding predictor matrices of dimensions $n_i \times p$ and $n_i
\times q$, with row~$j$ equal to $\vx_{ij}$ or $\vz_{ij}$, respectively, for
$j = 1, \dotsc, n_i$.  Our goal will be to use the $N$ observations to
estimate the association between the response $y_{ij}$ and the feature vectors
$\vx_{ij}$ and $\vz_{ij}$.

In a hierarchical linear model, we posit that conditional on a vector $\vu_i$
of group-specific random effects, the expectation of the response vector is
determined by the relation
\begin{equation}\label{eqn:hlm-link}
    \E(\vy_{i} \mid \vu_i) = \mX_i \vbeta + \mZ_i \vu_i,
\end{equation}
where $\vbeta$ is a vector of $p$ fixed population effects shared across all
$M$ groups.  Further, we assume that within each group the response values are
independent, with conditional variances given by
\(
    \var(y_{ij} \mid \vu_i) = \sigma^2.
\)
Lastly, we take the random effect vectors $\vu_1, \dotsc, \vu_M$ to be independent and
identically distributed with mean zero and covariance matrix
\(
    \cov(\vu_i) = \mSigma
\)
for some positive-semidefinite matrix $\mSigma$.

Hierarchical generalized linear models are natural extensions of hierarchical
linear models that allow for non-linear relations between the response and the
effects~\citep{Lee96}.  The set-up is similar to that for a hierarchical
linear model, but we replace the relation~\eqref{eqn:hlm-link} with the
nonlinear relation
\(
  \E(\vy_i \mid \vu_i) = g_i^{-1}(\mX_i \vbeta + \mZ_i \vu_i)
\)
for some specified link function $g_i$.  Instead of a variance parameter $\sigma^2$,
we have a dispersion parameter $\phi$ (possibly known).



For a hierarchical linear model or hierarchical generalized linear model,
given observations $\vy = (\vy_1, \dotsc, \vy_M)$ our main inferential task is
estimating the population parameters $\vbeta$, $\mSigma$, and $\sigma^2$.
Once these estimates have been obtained, they can be used together with the
data to estimate (formally, predict) the random effect vectors $\vu_1, \dotsc,
\vu_M$, typically using a Gaussian approximation to the conditional
distribution $\vu_i \mid \vy_i$ with plug-in estimates for quantities
involving $\vbeta$, $\mSigma$, and $\phi$.  In turn, the estimated effect
vectors can be used to forecast future response values.

Our primary focus in this report is developing a computationally efficient
method for estimating $\vbeta$, $\mSigma$, and $\sigma^2$.  We focus on
applications where the number of groups, $M$, is large, with a small or
moderate number of predictors ($p + q \ll M$).

\section{Moment-based estimation}
\label{sec:moment-est}

\subsection{Overview}
\label{sec:moment-based-est}

%

Before likelihood-based fitting procedures for hierarchical models became
ubiquitous, Cochran developed a moment-based approach for fitting a univariate
($p = q = 1$) hierarchical linear model \citep{Coch37,Yate38,Coch54}.  The
method takes group-specific estimates of the effects and then uses weighted
moments of these estimates to approximate the population parameters.
\citet{Swam70} extended Cochran's method to multivariate settings, and
\citet{Cox02} further extended it to allow for hierarchical nonlinear models.
The main advantage of these moment-based estimation methods is that they are not
iterative.  For these methods, and for the extension we introduce, there is a
computational cost of roughly $O\{N(p+q)^2\}$ to fit the initial
group-specific estimates, followed by a cost of $O\{M(p+q)^3 + M q^4\}$ to
combine them.  Furthermore, most of the operations are embarrassingly
parallel, in the sense that it is trivial to split them across multiple
processors.

Moment-based estimation methods for hierarchical models are simple and
computationally efficient.  Unfortunately, existing moment-based approaches
require that $\mX_i = \mZ_i$ for $i = 1, \dotsc, M$.  Moreover, they
require each predictor matrix $\mX_i$ to have full rank.  These restrictions
seem innocuous, but they become prohibitive in many large scale estimation
problems, including the recommender system application discussed in
Section~\ref{sec:application}.  This motivates us to introduce an alternative
extension of Cochran's method, similar in spirit to Swamy's procedure, but
allowing for arbitrary fixed effects and removing most restrictions on the
ranks of the predictor matrices.

\subsection{Intuition from the hierarchical linear model}
\label{sec:intuition}

To gain an intuition into our procedure, we start by considering the
hierarchical linear model.  For $i = 1, \dotsc, M$ define feature matrix
$\mF_i = [ \mX_i\ \mZ_i ]$ of size $n_i \times (p + q)$ and effect vector
$\veta_i = [ \vbeta^\trans\ \vu_i^\trans ]^\trans$ of dimension $(p + q)$.
The first $p$ components of~$\veta_i$ are shared across all $M$
groups, and the last $q$ components are random and specific to group~$i$.
The group-specific response vector can be expressed as
\[
  \vy_i = \mF_i \veta_i + \vvarepsilon_i,
\]
where $\vvarepsilon_i$ has mean zero and is independent of $\vu_i$.

Define the least squares estimate
\[
  \vheta_i = (\mF_i^\trans \mF_i)^{\pinv} \mF_i^\trans \vy_i,
\]
where $^\pinv$ denotes Moore-Penrose pseudo-inverse.  Previous approaches
required $\mF_i$ to have full column rank, but we make no such restriction.
Notably, without this restriction it will not generally be the
case that $\E(\vheta_i \mid \vu_i) = \veta_i$.  Rank degeneracy leads to
aliasing in the coefficients, which precludes unbiased estimation.

Despite potential aliasing, the estimate $\vheta_i$ still contains information
about the effects in the subspace spanned by the rows of $\mF_i$.
Specifically, let
\[
  \mF_i = \mU_i \mD_i \mV_i^\trans
\]
be a compact singular value decomposition, where $\mD_i \succ 0$ is diagonal
with dimension $r_i \times r_i$ and
$\mU_i^\trans \mU_i = \mV_i^\trans \mV_i = \mI_{r_i}$.  
Let $\mV_{i1}$ and $\mV_{i2}$ (dimensions $p \times r_i$ and $q \times r_i$)
contain the first $p$ and last $q$ rows of $\mV_i$, respectively, so that
\[
  \mX_i = \mU_i \mD_i \mV_{i1}^\trans,
  \qquad
  \mZ_i = \mU_i \mD_i \mV_{i2}^\trans,
\]
with $\mV_{i1}^\trans \mV_{i1} + \mV_{i2}^\trans \mV_{i2} = \mI_{r_i}$.
Then,
\begin{subequations}\label{eqn:conditional-moment-relations}
\begin{align}
  \E(\mV_i^\trans \vheta_i \mid \vu_i)
    &= \mV_{i1}^\trans \vbeta + \mV_{i2}^\trans \vu_i, \\
  \cov(\mV_i^\trans \vheta_i \mid \vu_i)
    &= \phi \mD_i^{-2},
\end{align}
\end{subequations}
where $\phi = \sigma^2 = \var(\varepsilon_{ij})$.
Hence, the unconditional expectation and covariance of the effect
components orthogonal to the nullspace of $\mF_i$ are
\begin{align*}
  \E(\mV_{i}^\trans \vheta_i)
    &= \mV_{i1}^\trans \vbeta, \\
  \cov(\mV_i^\trans \vheta_i)
    &= \mV_{i2}^\trans \mSigma \mV_{i2} + \phi \mD_i^{-2}.
\end{align*}
In Section~\ref{sec:estimation-procedure}
we show how to use these moment relations to estimate the model parameters.

For the dispersion parameter, we will use the unbiased estimator
\[
  \hat \phi
  =
  \hat \sigma^2
    =
    \frac{1}{N - \rho} \sum_{i=1}^{M} \norm{\vy_i - \mF_i \vheta_i}^2,
\]
where $\norm{\cdot}$ denotes Euclidean norm and $\rho = \sum_{i=1}^{M} r_i$.
As long as $n_i > r_i$ for at least one group~$i$, this estimator is
well-defined.

\subsection{The general procedure}
\label{sec:estimation-procedure}

We define the general estimation procedure without reference to the response,
the predictor matrices, or the specific data-generating mechanism.  As a
starting point, we will suppose that we have the following:
\begin{enumerate}
\item random effects $\vu_1, \dotsc, \vu_M$ that are independent with mean zero and
covariance matrix $\mSigma$;

\item group specific effect estimates $\vheta_1, \dotsc, \vheta_M$ 
that satisfy the conditional moment relations~\eqref{eqn:conditional-moment-relations};

\item  matrices $\mD_i$ and
$\mV_i = [\mV_{i1}^\trans\ \mV_{i2}^\trans]^\trans\ (i = 1, \dotsc, M)$,
where $\mV_i$ has $r_i$ orthonormal columns, and $\mD_i$ is a symmetric
positive-definite matrix (not necessarily diagonal);

\item dispersion estimate $\hat \phi$ that has expectation $\phi$.
\end{enumerate}
The procedure depends on a choice of symmetric positive-definite weight
matrices, denoted $\mW_1, \dotsc, \mW_M$, where $\mW_i$ has dimension $r_i
\times r_i$.  We will discuss choices for the weights in
Section~\ref{sec:weight-choices}, but for now, take them to be arbitrary.

We will use the weights to combine the group-specific estimates into an estimate for
the fixed effect $\vbeta$.  To do so, define
\begin{equation}\label{eqn:Omega1}
  \mOmega = \sum_{i=1}^{M} \mV_{i1} \mW_{i} \mV_{i1}^\trans.
\end{equation}
If $\mOmega$ is invertible, then we can define a moment-based estimator
for $\vbeta$:
\begin{equation}\label{eqn:beta-hat}
  \vhbeta_{\mW}
    =
    \mOmega^{-1}
    \sum_{i=1}^{M} \mV_{i1} \mW_i \mV_i^\trans \vheta_i.
\end{equation}
By construction, $\vhbeta_{\mW}$ is an unbiased estimator for $\vbeta$.

To introduce an estimator for the random effect covariance matrix $\mSigma$,
first define the matrix-valued function
\[
  \mhA(\vb)
    =
    \sum_{i=1}^{M}
      \mV_{i2} \mW_i
      (\mV_i^\trans \vheta_i - \mV_{i1}^\trans \vb)
      (\mV_i^\trans \vheta_i - \mV_{i1}^\trans \vb)^\trans
      \mW_i \mV_{i2}^\trans.
\]
Set
\begin{equation}\label{eqn:Omega2}
  \mOmega_2
    =
    \sum_{i=1}^{M}
      \mV_{i2} \mW_i \mV_{i2}^\trans
      \otimes
      \mV_{i2} \mW_i \mV_{i2}^\trans,
\end{equation}
where $\otimes$ denotes Kronecker product (with the notational convention that
$\otimes$ has lower precedence than matrix multiplication).  When $\mOmega_2$
is invertible on the subspace corresponding to symmetric matrices,
define symmetric $q \times q$ matrix-valued function $\mhS(\vb)$ and
symmetric $q \times q$ matrix $\mB$ via the relation
\begin{align*}
  \vecm\{\mhS(\vb)\} &= \mOmega_2^{-1} \vecm\{\mhA(\vb)\}, \\
  \vecm(\mB) &=
    \mOmega_2^{-1}
    \vecm\big\{
      \sum_{i=1}^{M}
        \mV_{i2} \mW_i \mD_{i}^{-2} \mW_i^\trans \mV_{i2}^\trans
    \big\},
\end{align*}
where $\vecm(\cdot)$ denotes column vector concatenation.  
For all matrices $\mB$, $\mC$, and $\mX$ of consistent dimensions,
\(
  \vecm(\mB \mX \mC) = (\mC^\trans \otimes \mB) \vecm(\vX).
\)
It follows that
\[
  \E\{\mhS(\vbeta)\} =
    \mSigma
    +
    \phi \mB.
\]
In light of this relation,
define moment-based covariance matrix estimator $\mhSigma_{\mW}$ as
\begin{equation}\label{eqn:Sigma-hat}
  \mhSigma_{\mW}
  =
  \mhS(\vhbeta_{\mW})
  -
  \hat \phi \mB.
\end{equation}
Due to the dependence between $\vhbeta_{\mW}$ and $\vheta_i$, the matrix
$\mhSigma_{\mW}$ is not an unbiased estimate of $\mSigma$, but we will later
show that its bias is often negligible.

In practice, the estimate $\mhSigma_{\mW}$ may not be positive semidefinite.
To handle this situation, we can replace $\mhSigma_{\mW}$, by
$\mtSigma_{\mW}$,  the projection of $\mhSigma_{\mW}$ onto the cone of
positive semidefinite matrices.  \citet{Cart86} employ a similar modification.
For any continuous function, $g$, if the convergence
$\mhSigma_{\mW} \toP \mSigma$ holds, then
$g(\mhSigma_{\mW}) \toP g(\mSigma)$.  Thus, since projection onto the cone of
positive semidefinite matrices is a continuous function, by the continuous
mapping theorem,
if $\mhSigma_{\mW}$ is a consistent estimator of $\mSigma$, then
$\mtSigma_{\mW}$ is as well.

The estimator $\vhbeta_{\mW}$ as defined here is similar to the estimator used
by \citet{Swam70} and the other authors mentioned in
Section~\ref{sec:moment-based-est}, but, unlike the existing approaches, the
form in~\eqref{eqn:beta-hat} allows for rank-degenerate predictor matrices.
The estimator $\mhSigma_{\mW}$ is unique; earlier approaches used a simple
unweighted covariance estimate, which requires full-rank predictor
matrices to guarantee consistency.

\subsection{Application to hierarchical generalized linear models}
\label{sec:hglm}

For a hierarchical generalized linear model, we will require
subpopulation-specific effect estimators $\vheta_i$ for $\veta_i = [
\vbeta^\trans\ \vu_i^\trans]^\trans \ (i = 1, \dotsc, M)$ and a dispersion
estimator $\hat \phi$.  With these, we will apply the moment-based estimation
procedure described in the previous section to get estimators for $\vbeta$ and
$\mSigma$.

For most nonlinear models, the moment
relations~\eqref{eqn:conditional-moment-relations} will not hold exactly.
These relations will be approximations, with the quality of the approximation
depending on the relative sizes of $n_i$ and $p + q$.  When using the
moment-based procedure to estimate the parameters of a hierarchical
generalized linear model, the estimators $\vhbeta_{\mW}$ and $\mhSigma_{\mW}$
will be biased, and we will not be able to get theoretical performance
guarantees.  However, as we later demonstrate in
Sections~\ref{sec:hglm-simulation} and~\ref{sec:application}, in many
large-sample regimes, the moment
relations~\eqref{eqn:conditional-moment-relations} are reasonable
approximations, and the moment-based estimators perform well.

As in the linear case, some of the group-specific feature matrices $\mF_i = [
\mX_i\ \mZ_i ]\  (i = 1, \dotsc, M)$ may be rank-degenerate.  We can handle
these degeneracies by imposing linear identifiability constraints on the
group-specific estimates.  Specifically, letting $\mV_i$ be a matrix with
$r_i$ orthonormal columns spanning the row space of $\mF_i$, we will require
that $\vheta_i$ lie in the span of $\mV_i$.  With this constraint, under
standard regularity conditions, if the maximum likelihood estimator exists
then it will be unique, with conditional expectation $\E(\mV_i^\trans \vheta_i
\mid \vu_i) = \mV_i^\trans \veta_i + \oh(n_i^{-1/2})$ and conditional
covariance $\cov(\mV_i^\trans \vheta_i \mid \vu_i) = \phi \mV_i^\trans (\mF_i
\mLambda_i \mF_i)^{\pinv} \mV_i + \oh(n_i^{-1})$ for a matrix $\mLambda_i$
depending on $\vbeta$ and $\vu_i$.  We will use a plug-in estimate for
$\mLambda_i$, which will lead to a consistent estimate for $\cov(\mV_i^\trans
\vheta_i \mid \vu_i)$ as $n_i$ increases.

Unfortunately, even with the rank-degeneracy issue solved, the group-specific
maximum likelihood effect estimator may not exist for all $i$.  In logistic
regression models, this happens when the outcomes are perfectly separated by a
linear combination of the predictors.  One popular solution to this separation
problem is to modify the maximum likelihood estimator~\citep{Hein02}.
In particular, Firth's modified estimator and generalizations thereof are
particularly effective \citep{Firt93,Kosm09}; when the predictor matrix is of
full rank, not only do these estimators always exist, they reduce the bias
from $\oh(n_i^{-1/2})$ to $\oh(n_i^{-1})$.  In light of these properties, we
take $\vheta_i$ to be Firth's modified estimator instead of the maximum
likelihood estimator.

For $\hat \phi$, we will use a weighted combination of group-specific
dispersion estimates $\hat \phi_1, \dotsc, \hat \phi_M$.  With the usual
Pearson residual-based dispersion estimate, $\hat \phi_i$ will be approximately
distributed as a chi-squared random variable with $(n_i - r_i)$ degrees of
freedom, scaled by $\phi / (n_i - r_i)$.

The full procedure for estimating the parameters of a hierarchical generalized
linear model is as follows:
\begin{enumerate}
\item \label{step:group-effect-est} For each group $i = 1, \dotsc, M$:

\begin{enumerate}
\item Construct group-specific feature matrix $\mF_i = [\mX_i\ \mZ_i]$;
use a singular value decomposition to decompose this matrix as $\mF_i =
\mF_{0i} \mV_i^\trans$, where $\mF_{0i}$ has full column rank $r_i$ and
$\mV_i = [\mV_{i1}^\trans\ \mV_{i2}^\trans]^\trans$ is
a matrix of dimension $(p + q) \times r_i$ with orthonormal columns.

\item \label{step:firth} Use Firth's modified score function with data $(\vy_i,
\mF_{0i})$ to get group-specific effect estimate $\vheta_{0i}$.

\item Set $\mD_i^{2}$ to be a plug-in estimate of the unscaled
conditional precision matrix of $\vheta_{0i}$; that is, set $\mD_i^{-2}$ to
be a plug-in estimate of $\phi^{-1} \cov(\vheta_{0i} \mid \vu_i)$.

\item Set $\vheta_i = \mV_i \vheta_{0i}$.

\item If $\phi$ is unknown, compute group-specific dispersion estimate $\hat \phi_i$.
\end{enumerate}

\item 
\label{step:dispersion-est}
  If $\phi$ is unknown, compute pooled dispersion estimate
\[
  \hat \phi
  =
  \frac{
    \sum_{i=1}^{M} (n_i - r_i) \, \hat \phi_i
  }{
    \sum_{i=1}^{M} (n_i - r_i)
  };
\]
otherwise, set $\hat \phi = \phi$.

\item \label{step:weight-est}
Choose positive-definite weight matrices $\mW_1, \dotsc, \mW_M$.  With
these weights, use~\eqref{eqn:beta-hat} and~\eqref{eqn:Sigma-hat}
to compute estimates $\vhbeta = \vhbeta_{\mW}$ and $\mhSigma = \mhSigma_{\mW}$.

\item \label{step:project}
Check if $\mhSigma$ is positive semidefinite.  If not replace
$\mhSigma$ with a projection onto the positive semidefinite cone.

\item Optionally, use the esitmates $\vhbeta$ and $\mhSigma$ to
choose a new set of weight matrices and redo steps~(\ref{step:weight-est})
and~(\ref{step:project}).

\item If required, use normal approximations for the distributions of $\vu_i$
and $\vheta_{i} \mid \vu_i$ to compute empirical Bayes posterior mean and covariance
estimates for $\vu_i$:
\begin{align*}
  \widehat{\E}(\vu_i \mid \vy)
    &=
      \mC_i \mV_{i2}
      (\mV_{i}^\trans \vheta_i - \mV_{i1}^\trans \vhbeta), \\
  \widehat{\cov}(\vu_i \mid \vy)
    &=
    \hat \phi \, \mC_i,
\end{align*}
where
\(
    \mC_i
    =
    \mhSigma^{1/2}
    (
      \hat \phi \mI_q
      +
      \mhSigma^{1/2}
      \mV_{i2}^\trans \mD_{i}^2 \mV_{i2}
      \mhSigma^{1/2}
    )^{-1}
    \mhSigma^{1/2}.
\)
These quantities exist even if $\mhSigma$ does not have full rank.
\end{enumerate}

If we assume that at most a constant number of iterations are required in
step~(ii), then
the computational complexity for fitting the $i$th group in
step~(\ref{step:group-effect-est}) is of order $\Oh(n_i r_i^2)$, so that
the total cost of step~(\ref{step:group-effect-est}) is of order $\Oh\{N (p + q)^2\}$.
Step~(\ref{step:dispersion-est}) has cost $\Oh(M)$.
For all choices of weight matrices discussed in this report, computing $\mW_i$
requires at most $\Oh\{r_i q (r_i + q)^2 + r_i^3\}$ operations, so that
computing all $M$ weight matrices has cost $\Oh\{M (p + q)^3\}$.
Once the weights have been
computed, it takes $\Oh\{M p (p + q)^2\}$ operations to compute $\vhbeta_{\mW}$,
followed by $\Oh(M q (p + q)^2)$ to compute $\mhA(\vhbeta_{\mW})$ and
$\Oh(M p q^2 + M q^4)$ to compute $\mOmega_2$.
These are the dominant consts.  Conservatively, step~(\ref{step:weight-est})
requires $\Oh\{ M (p + q)^3 + M q^4 \}$ operations.
Step~(\ref{step:project}) has cost $\Oh(q^3)$.  The costs for the remaning
steps are similar to those already discussed.

In total, at most $O\{N (p + q)^2 + M (p + q)^3 + M q^4\}$ operations are
required.  This bound uses the approximation $r_i = \Oh(p + q)$, which is
often conservative.  In fact, in situations where the column space of $\mZ_i$
is contained in the column space of $\mX_i$ for all $i$, we will have
$r_i \leq p$.  In this scenario, at most $\Oh(N p^2 + M p^3 + M q^4)$ operations
are required.

Notably, once the group-specific effect estimate $\vheta_i$, the conditional
precision estimate $\mD_i^{2}$, and the dispersion estimate $\hat \phi_i$ have
been computed, the procedure has no need for $\vy_i$ and $\mF_i$.  This is
both a strength and a weakness.  It is a strength because it reduces the
computation and the memory demands of the procedure, and it allows most of the
operations to be trivially parallelized.  The weakness in this data reduction
is that it likely sacrifices statistical efficiency.  On balance, as later we
demonstrate in Sections~\ref{sec:simulations} and~\ref{sec:application},
in many large-scale data regimes it is worthwhile to make this
computational-statistical trade-off.

\section{Weight choices}
\label{sec:weight-choices}

\subsection{Weighted, unweighted, and semi-weighted cases}

The estimators introduced in Section~\ref{sec:estimation-procedure} depend on
a choices of weights $\mW_i\ (i = 1, \dotsc, M)$.  The choice that minimizes
$\E\norm{\vhbeta_{\mW} - \vbeta}^2$ is
\begin{equation}\label{eqn:optimal-weight}
  \mW_i
    =
    (\mV_{i2}^\trans \mbSigma \mV_{i2} + \mD_i^{-2})^{-1},
\end{equation}
where $\mbSigma = \phi^{-1} \mSigma$.  In general, we do not know $\mSigma$
and $\phi$, so we cannot use these weights.

In the univariate case, Cochran discusses three practical alternatives.  The
first option, which he calls the ``unweighted'' method, corresponds to setting
\(
  \mW_i = \mI_{r_i}.
\)
The second option, which Cochran calls ``weighted,'' corresponds to setting
\(
  \mW_i = \mD_i^2.
\)
The last option depends on an initial choice $\mbSigma_0$ and corresponds to
setting
\(
  \mW_i
    =
    (\mV_{i2}^\trans \mbSigma_0 \mV_{i2} + \mD_i^{-2})^{-1};
\)
Cochran calls this the ``semi-weighted'' method.  Following Cochran and Swamy,
we use a two-step estimation scheme, taking an initial choice of weights to
get a preliminary estimate $\mhSigma_0$ of the scaled random effect covariance
matrix, and then using this estimate with the semi-weighted method to choose a
new set of weights, repeating the estimation process.
For the initial choice of weights, we use the semi-weighed method with
$\mbSigma_0$ chosen as specified in the following section.


\subsection{Optimal weights}

In this section, we will study the optimal weight choice. We do not give a
complete analysis, but we will derive a heuristic choice based on minimax
optimality considerations.  We will show that, after standardizing the
predictors, it is reasonable (and sometimes optimal) to choose the
semi-weighted $\mW_i$ with $\mbSigma_0 = \mI_q$.

For $i = 1, \dotsc, M$, set $\vhtheta_i = \mV_{i}^\trans \vheta_i$.  We will use a
weighted combination of the estimators $\vhtheta_1, \dotsc, \vhtheta_M$ to
estimate $\vbeta$.
Let $\malpha = (\malpha_1, \dotsc, \malpha_M)$ be a vector of weight matrices,
where component matrix
$\malpha_i$ has size $r_i \times p$.  Define estimator
\(
  \vhbeta_{\malpha} = \sum_{i=1}^{M} \malpha_i^\trans \vhtheta_i,
\)
which has expectation
\(
  \E(\vhbeta_{\malpha})
  =
  \big(\sum_{i=1}^{M} \mV_{i1} \malpha_i\big)^\trans
  \vbeta
\)
and covariance
\(
  \cov(\vhbeta_{\malpha})
  =
  \sum_{i=1}^{M}
    \malpha_i^\trans \{ \cov(\vhtheta_i) \} \malpha_i.
\)
For $\vhbeta_{\malpha}$ to be unbiased for all $\vbeta$, we must have
\(
  \sum_{i=1}^{M} \mV_{i1} \malpha_i = \mI_p.
\)

Among all choices of $\malpha$ that make $\vhbeta_{\malpha}$ unbiased,
the one that minimizes
the mean squared error $\E\|\vhbeta_{\malpha} - \vbeta\|_2^2$ is the one
minimizing $\tr\{\cov(\vhbeta_{\malpha})\}$.  Letting $\valpha_{ik}$ denote
the $k$th column of $\malpha_i$, the squared-error-optimal choice of $\malpha$
must satisfy the Lagrangian gradient equations
\[
    \{ \cov(\vhtheta_i) \} \valpha_{ik} = \mV_{i1}^\trans \vomega_k,
\]
with $p \times p$ Lagrange multiplier matrix
$\mOmega = [ \vomega_1 \cdots \vomega_p]$.  Thus, the optimal unbiased weight vector
satisfies
\[
  \malpha^\ast_i = \{ \cov(\vhtheta_i) \}^{-1} \mV_{i1}^\trans \mOmega,
\]
with
\(
  \mOmega
  =
  \big[
    \sum_{i=1}^{M}
      \mV_{i1} \{ \cov(\vhtheta_i) \}^{-1} \mV_{i1}^\trans
  \big]^{-1};
\)
minimizing estimator $\vhbeta^\ast$ has
\(
  \cov(\vhbeta^\ast) = \mOmega.
\)

The weight $\malpha^\ast$ depends on the unknown quantity
$\mbSigma = \phi^{-1} \mSigma$.  We would like to find a weight which is independent
of these unknowns.  To measure the sub-optimality of any particular choice of
$\valpha$, assume $\phi = 1$ without loss of generality, and
define the risk function
\[
  R(\mSigma, \malpha)
  = \tr\{ \mOmega^{-1} \cov(\vhbeta_{\malpha}) \}.
\]
Ideally, we should choose the weights that minimize the maximum
risk.  In practice, it is difficult to solve the underlying optimization
problem to find this set of values for $\malpha$, so we instead will
choose the weights $\malpha$ based on a heuristic.

Define extremal risks $R_0(\malpha)$ and $R_{\infty}(\malpha)$ as
\begin{align*}
  R_0(\malpha)
    &= \lim_{t \to 0} R(t \mI_q, \malpha)
    = \tr\Big[
        \Big\{ \sum_{i=1}^{M} \mV_{i1} \mD_i^2 \mV_{i1}^\trans \Big\}
        \Big\{ \sum_{i=1}^{M} \malpha_{i}^\trans \mD_i^{-2} \malpha_i \Big\}
      \Big],
\\
  R_{\infty}(\malpha)
    &= \lim_{t \to \infty} R(t \mI_q, \malpha)
    = \tr\Big[
        \Big\{ \sum_{i=1}^{M}
                 \mV_{i1} (\mV_{i2}^\trans \mV_{i2})^{\pinv} \mV_{i1}^\trans
        \Big\}
        \Big\{ \sum_{i=1}^{M}
                 \malpha_{i}^\trans \mV_{i2}^\trans \mV_{i2} \malpha_i
        \Big\}
      \Big].
\end{align*}
Instead of finding $\malpha$ to minimize $\sup_{\mSigma} R(\mSigma, \malpha)$,
we will attempt to find weights that minimize the average
$\bar R(\malpha) = (R_0(\malpha) + R_\infty(\malpha))/2$.
To this end, set
\[
  \mB = \sum_{i=1}^{M} \mV_{i1} \mD_i^2 \mV_{i1}^\trans,
\qquad
  \mC = \sum_{i=1}^{M}
           \mV_{i1} (\mV_{i2}^\trans \mV_{i2})^{\pinv} \mV_{i1}^\trans.
\]
For $\mbalpha$ to minimize $\bar R$,
while simultaneously satisfying the unbiasedness constraint, its
$i$th component must satisfy the Lagrangian gradient equation
\[
  \mD_i^{-2} \mbalpha_i \mB
  +
  \mV_{i2}^\trans \mV_{i2} \mbalpha_i \mC
  =
  \mV_{i1}^\trans \mLambda
\]
for some $p \times p$ matrix of Lagrange multipliers, $\mLambda$,
independent of $i$.  In vector form,
\[
  (
  \mB^\trans \otimes \mD_i^{-2}
  +
  \mC^\trans \otimes \mV_{i2}^\trans \mV_{i2}
  )
  \vecm(\mbalpha_{i})
  =
  (\mI_p \otimes \mV_{i1}^\trans) \vecm(\mLambda).
\]
The unbiasedness constraint $\sum_{i=1}^{M} \mV_{i1} \mbalpha_i = \mI_p$ must
also hold.

Finding $\mbalpha$ and $\mLambda$ requires solving a linear system of
$p \sum_{i=1}^{M} r_i + p^2$ equations in as many unknowns.  For general
situations, this is computationally expensive.  However, in the case of a
hierarchical generalized linear models satisfying
$\sum_{i=1}^{M} \mX_i^\trans \mX_i = M \mI_p$ and $\mX_i = \mZ_i$ for all
$i$, we get the simplification $\mB = \mC = M \mI_p$; in this case, the optimal
weight is
\[
  \mbalpha_i = (\mV_{i2}^\trans \mV_{i2} + \mD_i^{-2})^{-1} \mV_{i1}^\trans
  \mbOmega,
\]
with $\mbOmega$ chosen such that $\sum_{i=1}^{M} \mbalpha_i^\trans \mV_{i1} =
\mI$.  This corresponds to the semi-weighted case using $\mbSigma_0 = \mI_q$.
Motivated by this correspondence, in practical applications we will
standardize the predictors and then use the semi-weights with $\mbSigma_0 =
\mI_q$.  In addition to the optimality considerations, the standardization
ensures that the procedure is equivariant.

\section{Finite sample properties of moment-based estimates}
\label{sec:finite-sample-prop}

\subsection{Theoretical framework}

To analyze the performance of the proposed moment-based estimation procedure,
we will need to be precise about what assumptions are required.  To facilitate
asymptotic analysis, we will state these assumptions in terms of sequences
indexed by $N$.  We make this dependence on $N$ explicit in the assumption
statements, but, to simplify the notation, will suppress this dependence
in most of the text.

\begin{assumption}\label{asn:effect}\label{asn:first}
There exists a non-random $p$-dimensional fixed effect vector $\vbeta$ and, for
each value of $N$ there is a sequence of $M(N)$ independent and identically
distributed $q$-dimensional random effect vectors:
$\vu_{N,1}, \dotsc, \vu_{N,M(N)}$.  The $i$th random effect vector can be
expressed as $\vu_{N,i} = \mSigma^{1/2} \vtu_{N,i}$ where $\mSigma^{1/2}$ is the
symmetric square root of positive semidefinite matrix $\mSigma$, and the
sphered random effect vector $\vtu_{N,i}$ satisfies the moment conditions
\begin{subequations}
\begin{align}
  \E(\vtu_{N,i}) &= 0, \\
  \cov(\vtu_{N,i}) &= \mI_q, \\
  \E\norm{\vtu_{N,i}}^4 &\leq \mu
\end{align}
\end{subequations}
for some finite constant $\mu$.
\end{assumption}

\begin{assumption}\label{asn:identifiable}
For each $N$ and
all $i = 1, \dotsc, M(N)$ there exists a matrix with orthonormal columns
$\mV_{N,i} = [ \mV_{N,i1}^\trans\ \mV_{N,i2}^\trans ]^\trans$,
and a symmetric
positive-definite matrix $\mD_{N,i}$ (not necessarily diagonal) such that
$\mV_{N,i1}$ and $\mV_{N,i2}$ have dimensions $p \times r_{N,i}$ and
$q \times r_{N,i}$, respectively, and $\mD_{N,i}$ has dimension
$r_{N,i} \times r_{N,i}$.  Further, the following conditions hold:
\begin{enumerate}

\item The matrix
\(
  \sum_{i=1}^{M(N)} \mV_{N,i1} \mV_{N,i1}^\trans
\)
is invertible.

\item The matrix
\(
  \sum_{i=1}^{M(N)}
    (\mV_{N,i2} \mV_{N,i2}^\trans)
    \otimes
    (\mV_{N,i2} \mV_{N,i2}^\trans)
\)
is invertible on the subspace $\sS_q$ of vectors $\vs$ satisfying
\(
  \vs = \vecm(\mS)
\)
for some symmetric $q \times q$ matrix $\mS$.
\end{enumerate}
\end{assumption}

\begin{assumption}\label{asn:effect-est}
Letting $\veta_{N,i} = [ \vbeta^\trans \ \vu_{N,i}^\trans ]^\trans$ be the true
$(p+q)$-dimensional effect vector for the $i$th group, there exist
group-specific effect estimates $\vheta_{N,1}, \dotsc, \vheta_{N,M(N)}$
such that the estimation error $\vh_{N,i} = \veta_{N,i} - \vheta_{N,i}$
satisfies the moment relations
\begin{subequations}
\begin{align}
  \E(\mV_{N,i}^\trans \vh_{N,i}) &= 0, \\
  \cov(\mV_{N,i}^\trans \vh_{N,i}) &= \phi \mD_{N,i}^{-2}, \\
  \E\norm{\phi^{-1/2} \mD_{N,i} \mV_{N,i}^\trans \vh_{N,i}}^4 &\leq
    \lambda
\end{align}
\end{subequations}
for some dispersion parameter $\phi$ and finite constant $\lambda$.
Furthermore, the estimation errors $\vh_{N,1}, \dotsc, \vh_{N,M(N)}$ and
the random effects $\vu_{N,1}, \dotsc, \vu_{N,M(N)}$ are mutually independent.
\end{assumption}

\begin{assumption}\label{asn:dispersion-est}
For each $N$ there exists a random dispersion parameter estimate
$\hat \phi_N$ independent of the vectors $\vh_{N,1}, \dotsc, \vh_{N,M(N)}$
and $\vu_{N,1}, \dotsc, \vu_{N,M(N)}$ such that
\begin{equation}
  \E(\hat \phi_N / \phi - 1)^2 \leq \nu/(N - \rho_N)
\end{equation}
where $\rho_N = \sum_{i=1}^{M(N)} r_{N,i} < N$ and $\nu < \infty$.
\end{assumption}

These assumptions are motivated by the linear case introduced in
Section~\ref{sec:intuition}.  Assumption~\ref{asn:identifiable}(a) ensures
that $\vbeta$ is identifiable; it holds if and only if the combined predictor
matrix $\mX = [ \mX_1^\trans\ \cdots\ \mX_M^\trans ]^\trans$ has full column
rank; Assumption~\ref{asn:identifiable}(b) ensures that $\mSigma$ is
identifiable; it holds if and only if
$\sum_{i=1}^{M} (\mZ_i^\trans \mZ_i) \otimes (\mZ_i^\trans \mZ_i)$
is invertible on $\sS_q$.  Assumption~\ref{asn:effect-est} holds for the
hierarchical linear model whenever $\E|\varepsilon_{ij}|^4 < \infty$; for
nonlinear models, including hierarchical generalized linear models,
Assumption~\ref{asn:effect-est} will not hold exactly, but it will be a
reasonable approximation whenever the group-specific sample sizes are large.
For Assumption~\ref{asn:dispersion-est}, in models where the dispersion
parameter is known it suffices to take $\hat \phi_N = \phi$ and $\nu = 0$.

\begin{assumption}\label{asn:weight}\label{asn:last}
For each $N$ there exists a sequence of
symmetric positive-definite weight matrices
$\mW_{N,1}, \dotsc, \mW_{N,M(N)}$ where the $i$th weight matrix has
dimension $r_{N,i} \times r_{N,i}$ and satisfies the relation
\begin{equation}
  \mW_{N,i}
  (\mV_{N,i2}^\trans \mSigma \mV_{N,i2} + \phi \mD_{N,i}^{-2})
  \mW_{N,i}
  \preceq
  \kappa_N \mW_{N,i}
\end{equation}
for some nonrandom sequence $\kappa_N$ independent of $i$.
\end{assumption}

Table~\ref{tab:weight-constants} shows the bounding constants from
Assumption~\ref{asn:weight} associated with each weight method
discussed in Section~\ref{sec:weight-choices}.  It is
straightforward to derive these bounds for the unweighted and weighted cases.
For the semi-weighted case, we derive the bound in
Lemma~\ref{lem:semiweight-bound}.  Generally, $\norm{\mD_i}$ will scale
proportionally to the square root of the group-specific sample size,
$n_i^{1/2}$.  We can see that the bound for the unweighted case degrades if
some $n_i$ is small, while the bound for the weighted case degrades if some
$n_i$ is large.  The bound for the semi-weighted case is insensitive to the
group-specific sample sizes.

\begin{lemma}\label{lem:semiweight-bound}
If $\mbSigma_0$ and $\mD_1, \dotsc, \mD_M$ are positive-definite and
$\mSigma$ is positive-semidefinite, then for the weight defined by
\(
  \mW_i = (\mV_{i2}^\trans \mbSigma_0 \mV_{i2} + \mD_i^{-2})^{-1},
\)
Assumption~\ref{asn:weight} holds with
\(
  \kappa = \|\mbSigma_0^{-1} \mSigma \| + \phi.
\)
\end{lemma}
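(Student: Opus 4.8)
The goal is to show that with $\mW_i = (\mV_{i2}^\trans \mbSigma_0 \mV_{i2} + \mD_i^{-2})^{-1}$, the matrix inequality
\[
  \mW_i
  (\mV_{i2}^\trans \mSigma \mV_{i2} + \phi \mD_i^{-2})
  \mW_i
  \preceq
  (\|\mbSigma_0^{-1} \mSigma\| + \phi)\, \mW_i
\]
holds. Write $\mP_i = \mV_{i2}^\trans \mbSigma_0 \mV_{i2} + \mD_i^{-2}$, so that $\mW_i = \mP_i^{-1}$, and set $\kappa = \|\mbSigma_0^{-1}\mSigma\| + \phi$. Multiplying the desired inequality on the left and right by $\mP_i$ (which preserves the Loewner order since $\mP_i$ is symmetric positive definite), the claim is equivalent to
\[
  \mV_{i2}^\trans \mSigma \mV_{i2} + \phi \mD_i^{-2}
  \preceq
  \kappa\, \mP_i
  =
  \kappa\, \mV_{i2}^\trans \mbSigma_0 \mV_{i2} + \kappa\, \mD_i^{-2}.
\]
So the plan is to reduce everything to this cleaner inequality and then bound the two summands on the left separately.

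For the second summand, since $\kappa = \|\mbSigma_0^{-1}\mSigma\| + \phi \geq \phi$, we trivially have $\phi \mD_i^{-2} \preceq \kappa \mD_i^{-2}$. For the first summand, the key observation is that $\mbSigma_0^{1/2}$ is invertible, so I can write
\[
  \mSigma \preceq \|\mbSigma_0^{-1/2} \mSigma \mbSigma_0^{-1/2}\|\, \mbSigma_0,
\]
because conjugating by $\mbSigma_0^{-1/2}$ turns this into $\mbSigma_0^{-1/2}\mSigma\mbSigma_0^{-1/2} \preceq \|\mbSigma_0^{-1/2}\mSigma\mbSigma_0^{-1/2}\| \mI_q$, which is just the definition of the operator norm of a symmetric PSD matrix. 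Conjugating by $\mV_{i2}$ (which preserves the Loewner order) then gives $\mV_{i2}^\trans \mSigma \mV_{i2} \preceq \|\mbSigma_0^{-1/2}\mSigma\mbSigma_0^{-1/2}\|\, \mV_{i2}^\trans \mbSigma_0 \mV_{i2}$. Adding the two bounds yields the reduced inequality, provided $\|\mbSigma_0^{-1/2}\mSigma\mbSigma_0^{-1/2}\| \leq \kappa$.

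The one remaining point — and the only place that requires any thought — is identifying $\|\mbSigma_0^{-1/2}\mSigma\mbSigma_0^{-1/2}\|$ with $\|\mbSigma_0^{-1}\mSigma\|$ (the quantity named in the statement). These are equal because $\mbSigma_0^{-1/2}\mSigma\mbSigma_0^{-1/2}$ and $\mbSigma_0^{-1}\mSigma$ are similar matrices (conjugate via $\mbSigma_0^{1/2}$), hence have the same eigenvalues; moreover $\mbSigma_0^{-1/2}\mSigma\mbSigma_0^{-1/2}$ is symmetric PSD, so its operator norm equals its largest eigenvalue, which in turn equals the spectral radius of $\mbSigma_0^{-1}\mSigma$. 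Since $\mbSigma_0^{-1}\mSigma$ is a product of PSD matrices its spectrum is real and nonnegative, so its spectral radius coincides with $\|\mbSigma_0^{-1}\mSigma\|$ when that norm is interpreted as the spectral (largest-singular-value) norm — or, if one prefers, one simply takes $\|\cdot\|$ throughout to denote the spectral-radius-type quantity $\lambda_{\max}(\mbSigma_0^{-1}\mSigma)$, which is what is needed. This identification is the main (minor) obstacle; everything else is a routine chain of Loewner-order manipulations. Since the bound $\kappa = \|\mbSigma_0^{-1}\mSigma\| + \phi$ does not depend on $i$, Assumption~\ref{asn:weight} holds with this $\kappa$, completing the proof.
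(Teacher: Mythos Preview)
Your argument is correct and takes a cleaner route than the paper's. The paper conjugates by $\mW_i^{1/2}$ and then invokes the Woodbury identity plus two further matrix identities to establish $\mV_{i2} \mW_i \mV_{i2}^\trans \preceq \mbSigma_0^{-1}$; you instead conjugate by $\mW_i^{-1}$, reducing the claim to
\[
  \mV_{i2}^\trans \mSigma \mV_{i2} + \phi \mD_i^{-2}
  \preceq
  \kappa\bigl(\mV_{i2}^\trans \mbSigma_0 \mV_{i2} + \mD_i^{-2}\bigr),
\]
which follows termwise from $\mSigma \preceq \|\mbSigma_0^{-1/2}\mSigma\mbSigma_0^{-1/2}\|\,\mbSigma_0$ after conjugating by $\mV_{i2}$. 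Your route avoids the matrix inversion lemma entirely and makes the mechanism transparent: the semi-weight's defining matrix dominates the true conditional covariance term by term, with constants $\|\mbSigma_0^{-1/2}\mSigma\mbSigma_0^{-1/2}\|$ and $\phi$ on the two pieces. The paper's route, by contrast, yields the intermediate bound $\mV_{i2}\mW_i\mV_{i2}^\trans \preceq \mbSigma_0^{-1}$, which may be of independent interest elsewhere but is not needed for the lemma itself.

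One small correction to your final paragraph: the assertion that the spectral radius of $\mbSigma_0^{-1}\mSigma$ coincides with its operator (largest-singular-value) norm is false in general, since a product of two positive-definite matrices need not be normal. What you actually have is
\[
  \|\mbSigma_0^{-1/2}\mSigma\,\mbSigma_0^{-1/2}\|
  = \lambda_{\max}(\mbSigma_0^{-1}\mSigma)
  \leq \|\mbSigma_0^{-1}\mSigma\|,
\]
the equality by similarity and symmetry of the left-hand matrix, the inequality because spectral radius never exceeds operator norm. That inequality is all you need for the stated $\kappa$, and incidentally shows you have proved the slightly sharper bound with $\lambda_{\max}(\mbSigma_0^{-1}\mSigma)$ in place of $\|\mbSigma_0^{-1}\mSigma\|$.
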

\begin{proof}
We will drop the subscript $i$ for the proof of the lemma.  First,
note the relation
\(
  \mD^{-1} \mW \mD^{-1}
  = (\mD \mV_2^\trans \mbSigma_0 \mV_2 \mD + \mI_r)^{-1}
  \preceq \mI_r,
\)
so that
\begin{equation}\label{eqn:WDW-bound}
  \phi \mW^{1/2} \mD^{-2} \mW^{1/2} \preceq \phi \mI_r.
\end{equation}
Next, use the matrix inversion lemma to express
\[
  \mW
    =
    \mD^2
    -
    \mD^2 \mV_2^\trans
    (\mbSigma_0^{-1} + \mV_2 \mD^2 \mV_2^{\trans})^{-1}
    \mV_2 \mD^2.
\]
Use the identities $\mI - (\mA + \mB)^{-1} \mB = (\mA + \mB)^{-1} \mA$
and $\mB (\mA + \mB)^{-1} = \mI - \mA (\mA + \mB)^{-1}$ to get
\[
  \mV_2 \mW \mV_2^\trans
  =
  \mbSigma_0^{-1/2}
  \{
    \mI_q
    -
    (\mI_q + \mbSigma_0^{1/2} \mV_2 \mD^2 \mV_2^\trans \mbSigma_0^{1/2})^{-1}
  \}
  \mbSigma_0^{-1/2}.
\]
Employing the bound $\mI - (\mI + \mA)^{-1} \preceq \mI$, which holds for any
positive-semidefinite matrix $\mA$, it follows that
\(
  \mV_2 \mW \mV_2^\trans
  \preceq
  \mbSigma_0^{-1}.
\)
Thus,
\begin{equation}\label{eqn:WSigmaW-bound}
  \mW^{1/2} \mV_2^\trans \mSigma \mV_2 \mW^{1/2}
  \preceq
  \|\mSigma \mbSigma_0^{-1}\| \mI_r.
\end{equation}
The result of the lemma follows from~\eqref{eqn:WDW-bound}
and~\eqref{eqn:WSigmaW-bound}.
\end{proof}

\begin{table}
\caption{Weight choices and associated bounding constants}
\label{tab:weight-constants}

\centering

\fbox{%
\begin{tabular}{lcc}
Method
& $\mW_{i}$
& $\kappa$
\\
\hline
Unweighted
& $\phantom{{}_{r_i}}\mI_{r_i}$
& $\norm{\mSigma} + \phi \max_{i} \norm{\mD_i^{-2}}$
\\
Weighted
& $\phantom{{}_i^{2}}\mD_i^{2}$
& $\norm{\mSigma} \max_{i} \norm{\mD_i^2} + \phi$
\\
Semi-Weighted
& $\phantom{{}^{-1}}(\mV_{i2}^\trans \mbSigma_0 \mV_{i2} + \mD_{i}^{-2})^{-1}$
& $\norm{\mbSigma_0^{-1} \mSigma} + \phi$
\\
\end{tabular}}
\end{table}

\subsection{Existence}

For the estimates $\vhbeta_{\mW}$ and $\mhSigma_{\mhSigma}$ to be
well-defined, we must have that the corresponding quantities $\mOmega$ and
$\mOmega_2$ are invertible.  Propositions~\ref{prop:Omega-invertible}
and~\ref{prop:Omega2-invertible} show that this is always the case whenever
the group-specific weights are positive definite and
Assumption~\ref{asn:identifiable} is in force.

\begin{proposition}\label{prop:Omega-invertible}
For $i = 1, \dotsc, M$ let $\mW_i$ be a nonrandom symmetric positive-definite
matrix.  If Assumption~\ref{asn:identifiable}(a) holds, then the matrix
\(
  \mOmega
\)
defined in~\eqref{eqn:Omega1} is invertible,
so that $\vhbeta_{\mW}$ is well-defined.
\end{proposition}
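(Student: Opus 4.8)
The plan is to show that $\mOmega$ is not merely invertible but positive definite, by arguing that its null space is trivial. First I would record that each summand $\mV_{i1} \mW_i \mV_{i1}^\trans$ is positive semidefinite: writing $\mW_i = \mW_i^{1/2} \mW_i^{1/2}$ with the symmetric square root (which exists and is itself positive definite because $\mW_i \succ 0$), we have $\mV_{i1} \mW_i \mV_{i1}^\trans = (\mV_{i1} \mW_i^{1/2})(\mV_{i1} \mW_i^{1/2})^\trans \succeq 0$, so $\mOmega \succeq 0$ as a sum of such terms.

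Next I would take an arbitrary $p$-vector $\vv$ with $\vv^\trans \mOmega \vv = 0$ and deduce $\vv = 0$. Expanding, $\vv^\trans \mOmega \vv = \sum_{i=1}^{M} (\mV_{i1}^\trans \vv)^\trans \mW_i (\mV_{i1}^\trans \vv)$ is a sum of nonnegative terms, so it vanishes only if every term vanishes; and since $\mW_i \succ 0$, the $i$th term vanishes only if $\mV_{i1}^\trans \vv = 0$. Hence $\vv^\trans \big( \sum_{i=1}^{M} \mV_{i1} \mV_{i1}^\trans \big) \vv = \sum_{i=1}^{M} \norm{\mV_{i1}^\trans \vv}^2 = 0$. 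By Assumption~\ref{asn:identifiable}(a) the matrix $\sum_{i=1}^{M} \mV_{i1} \mV_{i1}^\trans$ is invertible, and being a sum of positive-semidefinite matrices it is therefore positive definite, which forces $\vv = 0$.

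Since $\mOmega$ is symmetric, positive semidefinite, and has trivial null space, it is positive definite, hence invertible, and the estimator $\vhbeta_{\mW} = \mOmega^{-1} \sum_{i=1}^{M} \mV_{i1} \mW_i \mV_i^\trans \vheta_i$ in~\eqref{eqn:beta-hat} is well defined. There is no genuinely hard part here; the one step that deserves attention is the reduction of the weighted Gram sum $\sum_i \mV_{i1} \mW_i \mV_{i1}^\trans$ to the unweighted sum $\sum_i \mV_{i1} \mV_{i1}^\trans$ appearing in the assumption, which goes through precisely because the weights are assumed positive definite — merely positive-semidefinite weights would not suffice, since in some group a weight could annihilate the relevant direction.
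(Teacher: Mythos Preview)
Your proof is correct and follows essentially the same approach as the paper: both argue that $\vv^\trans \mOmega \vv = 0$ forces $\mV_{i1}^\trans \vv = 0$ for every $i$ by positive definiteness of the $\mW_i$, and then invoke Assumption~\ref{asn:identifiable}(a) to conclude $\vv = 0$. The only cosmetic difference is that you frame the argument directly (showing the null space is trivial) while the paper phrases it as a contradiction, and you correctly appeal to the unweighted sum $\sum_i \mV_{i1}\mV_{i1}^\trans$ exactly as stated in the assumption.
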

\begin{proof}
The matrix $\mOmega$ is symmetric, so it suffices to show that it is
positive-definite.  We will proceed by contradiction.  Suppose that the
statement of the proposition is false, so that for some nonzero vector
$\vt$, the identity $\vt^\trans \mOmega \vt = 0$
holds.  In this case, since $\mW_i$ is positive-definite, it must follow that
$\mV_{i1}^\trans \vt = 0$ for all $i = 1, \dotsc, M$.  Thus,
\(
  \sum_{i=1}^{M} \vt^\trans \mV_{i1} \mD_i^2 \mV_{i1}^\trans \vt = 0.
\)
This contradicts Assumption~\ref{asn:identifiable}(a).  It must follow, then,
that $\vt^\trans \mOmega \vt > 0$ for all nonzero $\vt$, so that
$\mOmega$ has full rank.
\end{proof}

We state the result for $\mOmega_2$, which follows by a similar argument,
as Proposition~\ref{prop:Omega2-invertible}.
The full proof of this result is given in
Appendix~\ref{sec:Omega2-invertible-proof} of the on-line supplement.

\begin{proposition}\label{prop:Omega2-invertible}
For $i = 1, \dotsc, M$ let $\mW_i$ be a nonrandom symmetric positive-definite
matrix.  If Assumption~\ref{asn:identifiable}(b) holds, then the matrix
\(
  \mOmega_2
\)
defined in~\eqref{eqn:Omega2}
is invertible on $\sS_q$, so that $\mhSigma_{\mW}$ is well-defined.
\end{proposition}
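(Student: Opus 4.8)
The plan is to run the same contradiction scheme used for Proposition~\ref{prop:Omega-invertible}, but now phrased in terms of symmetric $q \times q$ matrices, using the vectorization identity $\vecm(\mB \mX \mC) = (\mC^\trans \otimes \mB)\vecm(\mX)$ recorded in Section~\ref{sec:estimation-procedure}. First I would note that $\mOmega_2$ is a sum of Kronecker products of the symmetric positive-semidefinite matrices $\mV_{i2}\mW_i\mV_{i2}^\trans$, so it is symmetric and carries $\sS_q$ into itself (indeed $\mOmega_2 \vecm(\mS) = \sum_i \vecm(\mV_{i2}\mW_i\mV_{i2}^\trans \mS \mV_{i2}\mW_i\mV_{i2}^\trans)$, whose summands are symmetric whenever $\mS$ is). Hence it is enough to show that the quadratic form $\vs^\trans \mOmega_2 \vs$ is strictly positive at every nonzero $\vs \in \sS_q$: a linear map of the finite-dimensional space $\sS_q$ into itself whose quadratic form never vanishes on nonzero vectors is injective, hence invertible, on $\sS_q$.

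The crux is a trace rewriting of the quadratic form. For $\vs = \vecm(\mS)$ with $\mS$ symmetric, the vectorization identity together with the cyclic property of the trace gives
\[
  \vs^\trans \mOmega_2 \vs
    = \sum_{i=1}^{M} \tr\!\big( \mS\, \mV_{i2}\mW_i\mV_{i2}^\trans\, \mS\, \mV_{i2}\mW_i\mV_{i2}^\trans \big)
    = \sum_{i=1}^{M} \big\| \mW_i^{1/2}\, \mV_{i2}^\trans \mS \mV_{i2}\, \mW_i^{1/2} \big\|_{\mathrm{F}}^2 ,
\]
where the last step uses that $\mW_i^{1/2} \mV_{i2}^\trans \mS \mV_{i2} \mW_i^{1/2}$ is symmetric. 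This already exhibits $\mOmega_2 \succeq 0$ on $\sS_q$, and, crucially, it shows that $\vs^\trans \mOmega_2 \vs = 0$ forces $\mW_i^{1/2} \mV_{i2}^\trans \mS \mV_{i2} \mW_i^{1/2} = 0$ for every $i$. Since each $\mW_i$ is positive-definite, $\mW_i^{1/2}$ is invertible, so the weights can be cancelled to give $\mV_{i2}^\trans \mS \mV_{i2} = 0$, and therefore $\mV_{i2}\mV_{i2}^\trans \mS \mV_{i2}\mV_{i2}^\trans = 0$, for all $i$.

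To close the contradiction I would apply the identical trace identity to the weight-free operator appearing in Assumption~\ref{asn:identifiable}(b), obtaining
\[
  \vs^\trans \Big[ \sum_{i=1}^{M} (\mV_{i2}\mV_{i2}^\trans) \otimes (\mV_{i2}\mV_{i2}^\trans) \Big] \vs
    = \sum_{i=1}^{M} \big\| \mV_{i2}^\trans \mS \mV_{i2} \big\|_{\mathrm{F}}^2
    = 0 .
\]
By Assumption~\ref{asn:identifiable}(b) this forces $\mS = 0$, i.e.\ $\vs = 0$, contradicting the choice of $\vs$ as nonzero. Hence $\vs^\trans \mOmega_2 \vs > 0$ on $\sS_q \setminus \{0\}$, and $\mOmega_2$ is invertible on $\sS_q$, so that $\mhSigma_{\mW}$ is well-defined.

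The only step that genuinely differs from the proof of Proposition~\ref{prop:Omega-invertible}—and hence the point I would be most careful about—is the translation between the ``invertible on $\sS_q$'' formulation of Assumption~\ref{asn:identifiable}(b) and the corresponding ``quadratic form positive on $\sS_q$'' statement. This equivalence is legitimate precisely because every operator in sight is a sum of Kronecker products of positive-semidefinite matrices, hence positive semidefinite on $\sS_q$, so vanishing of the quadratic form at $\vs$ is the same as $\vs$ lying in the kernel. Beyond that bookkeeping, the whole argument rests, exactly as in Proposition~\ref{prop:Omega-invertible}, on positive-definiteness of the weight matrices allowing one to discard the $\mW_i$ and reduce to the identifiability hypothesis.
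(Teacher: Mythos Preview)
Your proof is correct and follows essentially the same contradiction scheme the paper indicates: use the Kronecker/vectorization identity to rewrite $\vs^\trans \mOmega_2 \vs$ as a sum of Frobenius norms, exploit positive-definiteness of $\mW_i$ to cancel the weights when this sum vanishes, and reduce to Assumption~\ref{asn:identifiable}(b). Your explicit treatment of why $\mOmega_2$ preserves $\sS_q$ and why ``invertible on $\sS_q$'' coincides with ``positive-definite on $\sS_q$'' for these positive-semidefinite operators is a useful addition.
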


\subsection{Concentration}

The next results, Corollary~\ref{cor:beta-concentration} and
Proposition~\ref{prop:sigma-concentration}, show that with high probability,
$\vhbeta_{\mW}$ and $\mhSigma_{\mW}$ are close to their estimands.

\begin{proposition}\label{prop:beta-est-moment}
If Assumptions~\ref{asn:effect},~\ref{asn:identifiable},~\ref{asn:effect-est},%
and~\ref{asn:weight} are in force,
then $\vhbeta_{\mW}$ satisfies the moment relations
\begin{subequations}
\label{eqn:beta-est-moments}
\begin{align}
  \label{eqn:beta-est-unbiased}
  \E(\vhbeta_{\mW}) &= \vbeta, \\
  \label{eqn:beta-est-cov-bound}
  \cov(\vhbeta_{\mW}) &\preceq \kappa \mOmega^{-1}.
\end{align}
\end{subequations}
\end{proposition}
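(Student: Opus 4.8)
The plan is to expand the estimator~\eqref{eqn:beta-hat} around its deterministic part, after which both claims reduce to elementary covariance bookkeeping combined with the quadratic bound of Assumption~\ref{asn:weight}. First I would rewrite each summand of~\eqref{eqn:beta-hat}. Writing $\vh_i = \veta_i - \vheta_i$ for the estimation error and using $\mV_i^\trans \veta_i = \mV_{i1}^\trans \vbeta + \mV_{i2}^\trans \vu_i$, Assumption~\ref{asn:effect-est} gives
\[
  \mV_i^\trans \vheta_i
  = \mV_{i1}^\trans \vbeta + \mV_{i2}^\trans \vu_i - \mV_i^\trans \vh_i .
\]
Substituting this into~\eqref{eqn:beta-hat}, and using that $\mOmega$ as defined in~\eqref{eqn:Omega1} is invertible by Proposition~\ref{prop:Omega-invertible}, the $\vbeta$-term collapses via $\mOmega^{-1} \sum_{i=1}^{M} \mV_{i1} \mW_i \mV_{i1}^\trans \vbeta = \mOmega^{-1} \mOmega \vbeta = \vbeta$, leaving
\[
  \vhbeta_{\mW} - \vbeta
  = \mOmega^{-1} \sum_{i=1}^{M} \mV_{i1} \mW_i \big( \mV_{i2}^\trans \vu_i - \mV_i^\trans \vh_i \big).
\]

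For the unbiasedness claim~\eqref{eqn:beta-est-unbiased}, I would take expectations in the display above and invoke $\E(\vu_i) = 0$ from Assumption~\ref{asn:effect} and $\E(\mV_i^\trans \vh_i) = 0$ from Assumption~\ref{asn:effect-est}; every summand has mean zero, so $\E(\vhbeta_{\mW}) = \vbeta$.

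For the covariance bound~\eqref{eqn:beta-est-cov-bound}, I would use that the vectors $\vu_1, \dotsc, \vu_M, \vh_1, \dotsc, \vh_M$ are mutually independent (Assumption~\ref{asn:effect-est}), so the $M$ summands in the centered expansion are independent and
\[
  \cov(\vhbeta_{\mW})
  = \mOmega^{-1} \Big( \sum_{i=1}^{M} \mV_{i1} \mW_i \, \cov\big( \mV_{i2}^\trans \vu_i - \mV_i^\trans \vh_i \big) \, \mW_i \mV_{i1}^\trans \Big) \mOmega^{-1}.
\]
Within each group the cross term between $\vu_i$ and $\vh_i$ vanishes by independence, giving $\cov(\mV_{i2}^\trans \vu_i - \mV_i^\trans \vh_i) = \mV_{i2}^\trans \mSigma \mV_{i2} + \phi \mD_i^{-2}$. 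Assumption~\ref{asn:weight} then bounds $\mW_i (\mV_{i2}^\trans \mSigma \mV_{i2} + \phi \mD_i^{-2}) \mW_i \preceq \kappa \mW_i$; conjugating by $\mV_{i1}$, summing over $i$, and recognizing $\sum_{i=1}^{M} \mV_{i1} \mW_i \mV_{i1}^\trans = \mOmega$ shows the bracketed factor is $\preceq \kappa \mOmega$, and one final conjugation by $\mOmega^{-1}$ yields $\cov(\vhbeta_{\mW}) \preceq \kappa \mOmega^{-1}$.

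There is no substantive obstacle here; the argument is mechanical. The only points needing care are keeping straight which independence statement kills which cross term (the within-group $\vu_i$--$\vh_i$ term, and all cross-group terms), citing Proposition~\ref{prop:Omega-invertible} so that $\mOmega^{-1}$ is legitimate, and the repeated use of the monotonicity fact that $\mA \preceq \mB$ implies $\mC \mA \mC^\trans \preceq \mC \mB \mC^\trans$ --- applied once with $\mC = \mV_{i1}$ and once with $\mC = \mOmega^{-1}$. Note that the fourth-moment hypotheses in Assumptions~\ref{asn:effect} and~\ref{asn:effect-est} are not used for this proposition; they enter only in the concentration results that follow.
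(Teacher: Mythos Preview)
Your proposal is correct and follows essentially the same argument as the paper: invoke Proposition~\ref{prop:Omega-invertible} for well-definedness, use $\E(\mV_i^\trans \vheta_i) = \mV_{i1}^\trans \vbeta$ for unbiasedness, and compute $\cov(\vhbeta_{\mW}) = \mOmega^{-1}\{\sum_i \mV_{i1}\mW_i(\mV_{i2}^\trans \mSigma \mV_{i2} + \phi \mD_i^{-2})\mW_i\mV_{i1}^\trans\}\mOmega^{-1}$ before applying Assumption~\ref{asn:weight}. Your version simply spells out the decomposition $\mV_i^\trans \vheta_i = \mV_{i1}^\trans \vbeta + \mV_{i2}^\trans \vu_i - \mV_i^\trans \vh_i$ and the independence bookkeeping more explicitly than the paper does, and your closing remark that the fourth-moment bounds are unused here is accurate.
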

\begin{proof}
We have
\(
  \vhbeta_{\mW}
    =
      \mOmega^{-1}
      \sum_{i=1}^{M}
        \mV_{i1} \mW_{i} \mV_{i}^\trans \vheta_i.
\)
Proposition~\ref{prop:Omega-invertible} shows if Assumption~\ref{asn:identifiable} is
in force, then $\mOmega$ is invertible and consequently $\vhbeta_{\mW}$ is
well-defined.  Assumptions~\ref{asn:effect} and~\ref{asn:effect-est} imply that
\(
  \E(\mV_i^\trans \vheta_i) = \mV_{i1}^\trans \vbeta,
\)
so that $\E(\vhbeta_{\mW}) = \vbeta$.  Additionally, these assumptions
together with Assumption~\ref{asn:weight} imply that
\[
  \cov(\vhbeta_{\mW})
   =
      \mOmega^{-1}
      \big\{
        \sum_{i=1}^{M}
        \mV_{i1} \mW_i
        (\mV_{i2}^\trans \mSigma \mV_{i2} + \phi \mD_{i}^{-2})
        \mW_i \mV_{i1}^\trans
      \big\}
      \mOmega^{-1}
    \preceq
      \kappa \mOmega^{-1}. \qedhere
\]
\end{proof}

\begin{corollary}\label{cor:beta-concentration}
If Assumptions~\ref{asn:effect},~\ref{asn:identifiable},~\ref{asn:effect-est},%
and~\ref{asn:weight} are in force, then for any $\varepsilon > 0$,
\[
  \Pr\{
    \norm{\vhbeta_{\mW} - \vbeta}^2
    \geq
    \varepsilon^{-1}
    \kappa \tr(\mOmega^{-1})
  \}
    \leq \varepsilon.
\]
\end{corollary}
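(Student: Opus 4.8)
The plan is to apply Markov's inequality to the nonnegative random variable $\norm{\vhbeta_{\mW} - \vbeta}^2$, using the moment bounds already supplied by Proposition~\ref{prop:beta-est-moment}. That proposition (valid since its hypotheses include Assumption~\ref{asn:identifiable}, which via Proposition~\ref{prop:Omega-invertible} guarantees $\mOmega$ is invertible and hence $\vhbeta_{\mW}$ well-defined) tells us $\vhbeta_{\mW}$ is unbiased and $\cov(\vhbeta_{\mW}) \preceq \kappa \mOmega^{-1}$. First I would note that, because the estimator is unbiased, the expected squared error equals the trace of the covariance, $\E\norm{\vhbeta_{\mW} - \vbeta}^2 = \tr\{\cov(\vhbeta_{\mW})\}$. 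Then, since the trace is monotone with respect to the Loewner order (if $\mA \preceq \mB$ then $\tr\mA \le \tr\mB$, as $\tr(\mB-\mA)\ge 0$), the covariance bound gives $\E\norm{\vhbeta_{\mW} - \vbeta}^2 \le \kappa\,\tr(\mOmega^{-1})$.

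With this first-moment bound, Markov's inequality applied at threshold $t = \varepsilon^{-1}\kappa\,\tr(\mOmega^{-1})$ yields
\[
  \Pr\{\norm{\vhbeta_{\mW} - \vbeta}^2 \ge \varepsilon^{-1}\kappa\,\tr(\mOmega^{-1})\}
  \le \frac{\E\norm{\vhbeta_{\mW} - \vbeta}^2}{\varepsilon^{-1}\kappa\,\tr(\mOmega^{-1})}
  \le \varepsilon,
\]
which is precisely the stated conclusion.

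There is essentially no obstacle: the corollary is a one-line consequence of the preceding proposition. The only points needing any care are the elementary identity relating expected squared error to $\tr\{\cov(\vhbeta_{\mW})\}$ for an unbiased estimator and the trace-monotonicity step, both of which are routine, plus the bookkeeping observation that $\tr(\mOmega^{-1})$ is meaningful because $\mOmega$ is invertible under Assumption~\ref{asn:identifiable}(a).
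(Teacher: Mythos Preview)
Your proposal is correct and follows essentially the same approach as the paper: compute $\E\norm{\vhbeta_{\mW}-\vbeta}^2 = \tr\{\cov(\vhbeta_{\mW})\}$ via unbiasedness, bound this by $\kappa\,\tr(\mOmega^{-1})$ using the Loewner ordering from Proposition~\ref{prop:beta-est-moment}, and finish with Markov's inequality. If anything, you spell out the trace-monotonicity and invertibility points a bit more explicitly than the paper does.
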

\begin{proof}
From Proposition~\ref{prop:Omega-invertible} it follows that
\begin{align*}
  \E\norm{\vhbeta_{\mW} - \vbeta}^2
    &= \E[\tr\{(\vhbeta_{\mW} - \vbeta)(\vhbeta_{\mW} - \vbeta)^\trans\}] \\
    &= \tr\{\cov(\vhbeta_{\mW})\} \\
    &\leq \kappa \tr(\mOmega^{-1}).
\end{align*}
Now apply Markov's inequality.
\end{proof}

\begin{proposition}\label{prop:sigma-concentration}
If Assumptions~\ref{asn:effect}--\ref{asn:weight} are in force, then for any
$\varepsilon \in (0,1]$,
\[
  \Pr\{\norm{\mhSigma_{\mW} - \mSigma}_{\Frob}^2 \geq \varepsilon^{-2}
    \kappa^2 C^2 \tr(\mOmega_{2}^{-1})
  \}
  \leq
  \varepsilon,
\]
where
\(
  C
  =
    \{
      9 p^{3/2}
      + 3(\lambda + 2)^{1/2}
      + \mu^{1/2}
      + \nu^{1/2} (N/\rho - 1)^{-1/2}
    \}/2.
\)
\end{proposition}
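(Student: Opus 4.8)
The plan is to establish the moment bound
\begin{equation*}
  \E\norm{\mhSigma_{\mW} - \mSigma}_{\Frob} \le \kappa\, C\, \{\tr(\mOmega_2^{-1})\}^{1/2}
\end{equation*}
and then apply Markov's inequality to $\norm{\mhSigma_{\mW}-\mSigma}_{\Frob}$ itself (not its square): the event in the statement is precisely $\{\norm{\mhSigma_{\mW}-\mSigma}_{\Frob} \ge \varepsilon^{-1}\kappa C\{\tr(\mOmega_2^{-1})\}^{1/2}\}$, which has probability at most $\varepsilon$ once the moment bound holds. To prove the moment bound, pass to the vectorized form. Since $\mhSigma_{\mW} = \mhS(\vhbeta_{\mW}) - \hat\phi\mB$ with $\mhS$ and $\mB$ both obtained by applying $\mOmega_2^{-1}$ (invertible on $\sS_q$ by Proposition~\ref{prop:Omega2-invertible}), and $\mOmega_2\vecm(\mSigma) = \vecm\{\sum_i\mV_{i2}\mW_i\mV_{i2}^\trans\mSigma\mV_{i2}\mW_i\mV_{i2}^\trans\}$ by the identity $\vecm(\mB\mX\mC)=(\mC^\trans\otimes\mB)\vecm(\mX)$, one has $\mhSigma_{\mW}-\mSigma = \vecm^{-1}\{\mOmega_2^{-1}\vecm(\mG)\}$ with $\mG = \mhA(\vhbeta_{\mW}) - \hat\phi\mR - \vecm^{-1}\{\mOmega_2\vecm(\mSigma)\}$ and $\mR = \sum_i\mV_{i2}\mW_i\mD_i^{-2}\mW_i\mV_{i2}^\trans$. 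Using $\E\{\mhA(\vbeta)\} = \vecm^{-1}\{\mOmega_2\vecm(\mSigma)\} + \phi\mR$ (the $\mhA$-form of $\E\{\mhS(\vbeta)\}=\mSigma+\phi\mB$ from Section~\ref{sec:estimation-procedure}), split $\mG = \mG_1 + \mG_2 + \mG_3$, where $\mG_1 = \mhA(\vhbeta_{\mW}) - \mhA(\vbeta)$ is the plug-in error, $\mG_2 = \mhA(\vbeta) - \E\{\mhA(\vbeta)\}$ is the sampling fluctuation, and $\mG_3 = (\phi - \hat\phi)\mR$ is the dispersion error. Each $\mG_j$ is symmetric, so by the triangle inequality it suffices to bound $\E\norm{\mOmega_2^{-1}\vecm(\mG_j)}_2$ for $j=1,2,3$ and show that these sum to $\kappa C\{\tr(\mOmega_2^{-1})\}^{1/2}$.

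Throughout I use $\norm{\mOmega_2^{-1}\vecm(\mG_j)}_2 \le \norm{\mOmega_2^{-1}}_{\mathrm{op}}^{1/2}\{\vecm(\mG_j)^\trans\mOmega_2^{-1}\vecm(\mG_j)\}^{1/2}$ with $\norm{\mOmega_2^{-1}}_{\mathrm{op}} \le \tr(\mOmega_2^{-1})$, so the task reduces to controlling the quadratic forms $\vecm(\mG_j)^\trans\mOmega_2^{-1}\vecm(\mG_j)$. The unifying device is Assumption~\ref{asn:weight}: it gives $\mV_{i2}\mW_i(\mV_{i2}^\trans\mSigma\mV_{i2}+\phi\mD_i^{-2})\mW_i\mV_{i2}^\trans \preceq \kappa\,\mV_{i2}\mW_i\mV_{i2}^\trans$ and, in particular, $\mV_{i2}\mW_i\mD_i^{-2}\mW_i\mV_{i2}^\trans \preceq (\kappa/\phi)\mV_{i2}\mW_i\mV_{i2}^\trans$, so the relevant variances can be compared with the summands of $\mOmega_2$ one group at a time. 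The fluctuation term $\mG_2$ is the cleanest: writing $\mhA(\vbeta) = \sum_i\mV_{i2}\mW_i(\mV_i^\trans\vheta_i - \mV_{i1}^\trans\vbeta)(\mV_i^\trans\vheta_i - \mV_{i1}^\trans\vbeta)^\trans\mW_i\mV_{i2}^\trans$, the vectors $\mV_i^\trans\vheta_i - \mV_{i1}^\trans\vbeta = \mV_{i2}^\trans\vu_i - \mV_i^\trans\vh_i$ are, by Assumptions~\ref{asn:effect}--\ref{asn:effect-est}, mutually independent and centred with second moment $\mV_{i2}^\trans\mSigma\mV_{i2}+\phi\mD_i^{-2}$, so the summands of $\mG_2$ are independent and mean zero and $\E\{\vecm(\mG_2)\vecm(\mG_2)^\trans\} = \sum_i(\mV_{i2}\mW_i\otimes\mV_{i2}\mW_i)\,\cov\{(\mV_i^\trans\vheta_i - \mV_{i1}^\trans\vbeta)\otimes(\mV_i^\trans\vheta_i - \mV_{i1}^\trans\vbeta)\}\,(\mW_i\mV_{i2}^\trans\otimes\mW_i\mV_{i2}^\trans)$. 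Splitting each vector into its random-effect part (fourth moment controlled by $\mu$) and its estimation-error part (controlled by $\lambda$) and bounding the covariance of its Kronecker square by a constant multiple of the Kronecker square of its second moment, one obtains $\E\{\vecm(\mG_2)\vecm(\mG_2)^\trans\} \preceq \kappa^2\{(\mu^{1/2}+3(\lambda+2)^{1/2})/2\}^2\mOmega_2$, hence $\E\norm{\mOmega_2^{-1}\vecm(\mG_2)}_2^2 = \tr[\mOmega_2^{-2}\E\{\vecm(\mG_2)\vecm(\mG_2)^\trans\}] \le \kappa^2\{(\mu^{1/2}+3(\lambda+2)^{1/2})/2\}^2\tr(\mOmega_2^{-1})$; this supplies the $\mu^{1/2}+3(\lambda+2)^{1/2}$ part of $2C$.

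The dispersion term $\mG_3$ uses that $\hat\phi$ is independent of the effects and the estimation errors while $\mR$ is nonrandom, so $\E\norm{\mOmega_2^{-1}\vecm(\mG_3)}_2 = \phi\,\E|\hat\phi/\phi-1|\cdot\norm{\mOmega_2^{-1}\vecm(\mR)}_2 \le \phi\{\nu/(N-\rho)\}^{1/2}\norm{\mOmega_2^{-1}\vecm(\mR)}_2$ by Assumption~\ref{asn:dispersion-est} and Jensen. A second use of Assumption~\ref{asn:weight} then shows, via the variational inequality $\vv^\trans\mOmega_2^{-1}\vv \le \vv^\trans(\mP_i\otimes\mP_i)^{\pinv}\vv$ for $\vv$ in the range of $\mP_i\otimes\mP_i := (\mV_{i2}\mW_i\mV_{i2}^\trans)\otimes(\mV_{i2}\mW_i\mV_{i2}^\trans)$ whenever $\mP_i\otimes\mP_i \preceq \mOmega_2$, that $\norm{\mOmega_2^{-1}\vecm(\mR)}_2 \le (\kappa/\phi)\rho^{1/2}\{\tr(\mOmega_2^{-1})\}^{1/2}$; since $\{\nu/(N-\rho)\}^{1/2}\rho^{1/2} = \nu^{1/2}(N/\rho-1)^{-1/2}$, this contributes the last term of $2C$. (Assembling the per-group bounds into the single factor $\rho^{1/2}$, rather than $\sum_i r_i^{1/2}$, is the one delicate point here and needs the structure of $\mR$ relative to $\mOmega_2$, not just the triangle inequality.)

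The plug-in term $\mG_1$ is the main obstacle. Expanding $\mhA(\vhbeta_{\mW}) - \mhA(\vbeta)$ in $\vhbeta_{\mW}-\vbeta = \mOmega^{-1}\sum_j\mV_{j1}\mW_j(\mV_j^\trans\vheta_j - \mV_{j1}^\trans\vbeta)$ produces two terms linear in this error and one quadratic in it, and the difficulty is twofold: $\vhbeta_{\mW}$ is correlated with every $\vheta_i$, so the cross terms do not vanish in expectation and must be bounded directly using Cauchy--Schwarz and Proposition~\ref{prop:beta-est-moment} (which gives $\E\norm{\vhbeta_{\mW}-\vbeta}^2 \le \kappa\tr(\mOmega^{-1})$); and the resulting bound is naturally phrased in terms of $\tr(\mOmega^{-1})$ and the matrix $\mOmega = \sum_i\mV_{i1}\mW_i\mV_{i1}^\trans$, which must be converted back into $\tr(\mOmega_2^{-1})$ using $\mV_{i1}^\trans\mV_{i1}\preceq\mI$ and a comparison of the $\mOmega$-type sum with $\mOmega_2$ — this is the source of the dimensional factor $p^{3/2}$ and the numerical constant $9$. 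Carrying this through yields $\E\norm{\mOmega_2^{-1}\vecm(\mG_1)}_2 \le \kappa\,(9p^{3/2}/2)\{\tr(\mOmega_2^{-1})\}^{1/2}$. Adding the three contributions gives the moment bound with $C$ exactly as stated, and Markov's inequality completes the proof; because the per-term estimates are themselves lengthy, the natural organization is to package them in a supporting lemma (the role of Lemma~\ref{lem:sigma-concentration-supp}) and let the present argument be the assembly.
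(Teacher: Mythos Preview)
Your decomposition and the supporting-lemma strategy match the paper's: it too splits $\mhSigma_{\mW}-\mSigma$ into a plug-in piece $\mhS(\vhbeta_{\mW})-\mhS(\vbeta)$, a fluctuation piece (which it further subdivides into $\mhS(\vbeta)-\mS(\vbeta)-\phi\mB$ and $\mS(\vbeta)-\mSigma$, where $\mS$ replaces $\vheta_i$ by $\veta_i$; you keep these together as $\mG_2$), and a dispersion piece $(\phi-\hat\phi)\mB$, and it packages the per-term estimates as Lemma~\ref{lem:sigma-concentration-supp}. The genuine difference is in the \emph{combination} step. You add first-moment bounds and apply Markov once. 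The paper instead fixes a free parameter $a$, applies a first-moment Markov bound to the plug-in piece (failure probability $1/a$) and second-moment Markov bounds to each of the other three pieces (failure probability $1/a^2$ each), union-bounds to get total failure probability $1/a+3/a^2$, and then solves for $a$ in terms of $\varepsilon$. The factor $1/2$ in the definition of $C$ is meant to emerge from this mixed $1/a$--$1/a^2$ accounting, not from sharper per-term constants.

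That is where your sketch has a gap: the $/2$ factors you insert into the $\mG_1$ and $\mG_2$ bounds are not supported. The lemma gives $\E\|\mhS(\vbeta)-\mS(\vbeta)-\phi\mB\|_{\Frob}^2\le 9\kappa^2(\lambda+2)\tr(\mOmega_2^{-1})$ and $\E\|\mS(\vbeta)-\mSigma\|_{\Frob}^2\le\mu\kappa^2\tr(\mOmega_2^{-1})$, which by the triangle inequality yield an $L^1$ bound of $\kappa\{3(\lambda+2)^{1/2}+\mu^{1/2}\}\omega^{1/2}$ for your $\mG_2$, not half that; and the plug-in piece is stated only as the tail bound $\Pr\{\|\mhS(\vhbeta_{\mW})-\mhS(\vbeta)\|_{\Frob}\ge 9p^{3/2}\kappa\omega^{1/2}/\varepsilon\}\le\varepsilon$, again without a $/2$. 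With the constants the lemma actually supplies, your single-Markov route delivers the proposition with $2C$ in place of $C$; to land on $C$ you must adopt the paper's heterogeneous union-bound scheme. One further caution: the paper states part~(a) as a probability bound, not a moment bound, so your promotion of it to an $L^1$ inequality for $\mG_1$ needs justification---the quadratic-in-$(\vhbeta_{\mW}-\vbeta)$ portion of $\mG_1$ is controllable under the fourth-moment assumptions, but Cauchy--Schwarz plus Proposition~\ref{prop:beta-est-moment} alone is not quite enough to produce the stated constant.
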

\begin{proof}
Define $\mS$ analogously to $\mhS$ be replacing $\vheta_i$ with $\veta_i$.
The triangle inequality implies that
\[
  \norm{\mhSigma_{\mW} - \mSigma}_{\Frob}
    \leq
    \norm{\mhS(\vhbeta_{\mW}) - \mhS(\vbeta)}_{\Frob}
    +
    \norm{\mhS(\vbeta) - \mS(\vbeta) - \phi \mB}_{\Frob}
    +
    \norm{\mS(\vbeta) - \mSigma}_{\Frob}
    +
    \abs{\phi - \hat \phi} \norm{\mB}_{\Frob}
\]
We analyze the right hand side summands in
Appendix~\ref{sec:sigma-concentration-support} of the on-line supplement;
Lemma~\ref{lem:sigma-concentration-supp}, stated after the proof of
Prop.~\ref{prop:sigma-concentration}, summarizes these results.

Fix any $a > 0$.  Set $\omega = \tr(\mOmega_2^{-1})$.
Lemma~\ref{lem:sigma-concentration-supp}(\ref{lem:item:shatbhat-shatb})
shows that
\[
  \Pr(\norm{\mhS(\vhbeta_{\mW}) - \mhS(\vbeta)}_{\Frob}
    \geq 9 a p^{3/2} \kappa \omega^{1/2}
  )
  \leq a^{-1}.
\]
Lemma~\ref{lem:sigma-concentration-supp}(\ref{lem:item:shatb-sb})
and Markov's inequality imply that
\[
  \Pr\{
    \norm{\mhS(\vbeta) - \mS(\vbeta) - \phi \mB}_{\Frob}
    \geq 3 a (\lambda + 2)^{1/2} \kappa \omega^{1/2}
  \}
  \leq a^{-2}.
\]
Similarly,
Lemma~\ref{lem:sigma-concentration-supp}(\ref{lem:item:sb-sigma})
and Markov's inequality imply that
\[
  \Pr\{
    \norm{\mS(\vbeta) - \mSigma}_{\Frob}
    \geq a \mu^{1/2} \kappa \omega^{1/2}
  \}
  \leq a^{-2}.
\]
For the final term,
Assumption~\ref{asn:dispersion-est} implies that
\[
  \Pr\{
    \abs{\hat \phi / \phi - 1}
    \geq a \nu^{1/2}(N - \rho)^{-1/2}
  \}
  \leq 1/a^2,
\]
and
Lemma~\ref{lem:sigma-concentration-supp}(\ref{lem:item:bias})
implies that
\[
  \phi \norm{\mB}_{\Frob}
    \leq \kappa \rho^{1/2} \norm{\mOmega_2^{-1}}
    \leq \kappa \rho^{1/2} \omega^{1/2}.
\]
Thus, with probability at least $1 - (1/a + 3/a^2)$,
\[
  \norm{\mhSigma_{\mW} - \mSigma}_{\Frob}
    <
    a \kappa \omega^{1/2}
    \{
      9 p^{3/2}
      + 3(\lambda + 2)^{1/2}
      + \mu^{1/2}
      + \nu^{1/2} (N/\rho - 1)^{-1/2}
    \}.
\]
Set $\varepsilon = (1/a + 3/a^2)$.  If $\varepsilon \leq 1$, then
\(
  a^{-1} = \sqrt{1 + 12 \varepsilon} - 1 > 2 \varepsilon.
\)
This gives the desired result.
\end{proof}

\begin{lemma}\label{lem:sigma-concentration-supp}
If Assumptions~\ref{asn:effect},~\ref{asn:identifiable},~\ref{asn:effect-est},%
and~\ref{asn:weight} are in force, then the following identities hold:
\begin{enumerate}
  \item \label{lem:item:shatbhat-shatb} \label{item:sigma-concentration-supp:first}
    \(
  \Pr\{ \| \mhS(\vhbeta_{\mW}) - \mhS(\vbeta) \|_{\Frob}
    <
    9 p^{3/2} \kappa
  \{ \tr(\mOmega_2^{-1}) \}^{1/2} / \varepsilon\} \geq 1 - \varepsilon,
  \)
\item \label{lem:item:shatb-sb}
  \(
  \E\| \mhS(\vbeta) - \mS(\vbeta) - \phi \mB \|_{\Frob}^2
  \leq 9 \kappa^2 (\lambda + 2) \tr(\mOmega_2^{-1}),
  \)
\item \label{lem:item:sb-sigma}
  \(
  \E\norm{\mS(\vbeta) - \mSigma}_{\Frob}^2
  \leq
      \mu \kappa^2 \tr(\mOmega_2^{-1}),
      \)
\item \label{lem:item:bias} \label{item:sigma-concentration-supp:last}
  \(
  \norm{\mB}_{\Frob}
  \leq
      \phi^{-1}
      \kappa
      \rho^{1/2}
      \norm{\mOmega_2^{-1}}^{1/2},
\)
\end{enumerate}
where $\rho = \sum_{i=1}^{M} r_i$.
\end{lemma}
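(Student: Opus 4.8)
The plan is to dispatch the four parts with a shared toolkit, in increasing order of difficulty. Write $\mathbf{G}_i = \mV_{i2}\mW_i\mV_{i2}^\trans$ and $\mathbf{P}_i = \mV_{i2}\mW_i^{1/2}$, so that $\mathbf{G}_i = \mathbf{P}_i\mathbf{P}_i^\trans$, $\mOmega_2 = \sum_{i} \mathbf{G}_i\otimes\mathbf{G}_i$, and $\mathbf{P}_i^\trans\mathbf{G}_i^{\pinv}\mathbf{P}_i\preceq\mI_{r_i}$; every object in the lemma has the form $\vecm(\cdot) = \mOmega_2^{-1}\vecm(\sum_i\mathbf{T}_i)$ with each $\mathbf{T}_i$ symmetric and $\vecm(\mathbf{T}_i)$ in the range of $\mathbf{G}_i\otimes\mathbf{G}_i$. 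Three facts do the work. (a) If $0\preceq\mathbf{M}_i\preceq c\,(\mathbf{G}_i\otimes\mathbf{G}_i)$ for a common $c$, then $\sum_i\mathbf{M}_i\preceq c\,\mOmega_2$, so $\tr(\mOmega_2^{-2}\sum_i\mathbf{M}_i)\leq c\,\tr(\mOmega_2^{-1})$. (b) The variational identity $\mathbf{r}^\trans\mathbf{M}^{\pinv}\mathbf{r} = \sup_{\vx}\{2\vx^\trans\mathbf{r} - \vx^\trans\mathbf{M}\vx\}$ gives the superadditivity bound $(\sum_i\mathbf{r}_i)^\trans\mOmega_2^{-1}(\sum_i\mathbf{r}_i)\leq\sum_i\mathbf{r}_i^\trans(\mathbf{G}_i\otimes\mathbf{G}_i)^{\pinv}\mathbf{r}_i$ whenever each $\mathbf{r}_i$ lies in the range of $\mathbf{G}_i\otimes\mathbf{G}_i$; together with $(\mathbf{G}_i\otimes\mathbf{G}_i)^{\pinv} = \mathbf{G}_i^{\pinv}\otimes\mathbf{G}_i^{\pinv}$ and the preceding display, each group term collapses to a scalar. (c) Multiplying Assumption~\ref{asn:weight} on both sides by $\mW_i^{-1/2}$ gives $\mW_i^{1/2}(\mV_{i2}^\trans\mSigma\mV_{i2} + \phi\mD_i^{-2})\mW_i^{1/2}\preceq\kappa\mI_{r_i}$; conjugating by $\mathbf{P}_i$ yields $\mathbf{G}_i\mSigma\mathbf{G}_i\preceq\kappa\mathbf{G}_i$ and $\mV_{i2}\mW_i\mD_i^{-2}\mW_i\mV_{i2}^\trans\preceq(\kappa/\phi)\mathbf{G}_i$, and since Assumptions~\ref{asn:effect}--\ref{asn:effect-est} imply $\cov(\mV_i^\trans\vheta_i) = \mV_{i2}^\trans\mSigma\mV_{i2} + \phi\mD_i^{-2}$, also $\mW_i\preceq(\kappa/\phi)\mD_i^2$, $\cov(\mW_i^{1/2}\mV_i^\trans\vh_i)\preceq\kappa\mI_{r_i}$, and $\E\norm{\mW_i^{1/2}\mV_i^\trans\vh_i}^4\leq\kappa^2\lambda$.

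Part~\ref{lem:item:bias} is deterministic: with $\mathbf{R}_i = \mV_{i2}\mW_i\mD_i^{-2}\mW_i\mV_{i2}^\trans$, fact~(b) and $\mOmega_2^{-2}\preceq\norm{\mOmega_2^{-1}}\mOmega_2^{-1}$ give $\norm{\mB}_{\Frob}^2\leq\norm{\mOmega_2^{-1}}\sum_i\vecm(\mathbf{R}_i)^\trans(\mathbf{G}_i\otimes\mathbf{G}_i)^{\pinv}\vecm(\mathbf{R}_i)$, and writing $\mathbf{R}_i = \mathbf{G}_i^{1/2}\mathbf{T}_i\mathbf{G}_i^{1/2}$ with $0\preceq\mathbf{T}_i\preceq(\kappa/\phi)$ times the projection onto the range of $\mathbf{G}_i$ (so $\mathbf{T}_i$ has rank at most $r_i$), the $i$th term equals $\norm{\mathbf{T}_i}_{\Frob}^2\leq(\kappa/\phi)^2 r_i$, which sums to $(\kappa/\phi)^2\rho$. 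Part~\ref{lem:item:sb-sigma} is a random-effect variance: $\vecm\{\mS(\vbeta)\} = \mOmega_2^{-1}\vecm(\sum_i\mathbf{A}_i)$ with $\mathbf{A}_i = \mathbf{G}_i\vu_i\vu_i^\trans\mathbf{G}_i$ independent across $i$, so $\E\norm{\mS(\vbeta)-\mSigma}_{\Frob}^2 = \tr\{\mOmega_2^{-2}\sum_i\cov(\vecm\mathbf{A}_i)\}$; using $\cov(\vecm\mathbf{A}_i)\preceq(\mathbf{G}_i\otimes\mathbf{G}_i)\E\{(\vu_i\otimes\vu_i)(\vu_i\otimes\vu_i)^\trans\}(\mathbf{G}_i\otimes\mathbf{G}_i)$, the bound $\vu_i\vu_i^\trans\preceq\norm{\vtu_i}^2\mSigma$, and $\E\norm{\vtu_i}^4\leq\mu$ gives $\cov(\vecm\mathbf{A}_i)\preceq\mu(\mathbf{G}_i\mSigma\mathbf{G}_i)\otimes(\mathbf{G}_i\mSigma\mathbf{G}_i)\preceq\mu\kappa^2(\mathbf{G}_i\otimes\mathbf{G}_i)$ by fact~(c), and fact~(a) finishes it.

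Part~\ref{lem:item:shatb-sb} is handled by conditioning on $\vu = (\vu_1,\dots,\vu_M)$. The difference $\mhS(\vbeta) - \mS(\vbeta)$ is $\vecm^{-1}\{\mOmega_2^{-1}\vecm(\sum_i\mathbf{\Delta}_i)\}$ with $\mathbf{\Delta}_i = \mV_{i2}\mW_i(-\mV_{i2}^\trans\vu_i\vh_i^\trans\mV_i - \mV_i^\trans\vh_i\vu_i^\trans\mV_{i2} + \mV_i^\trans\vh_i\vh_i^\trans\mV_i)\mW_i\mV_{i2}^\trans$; because $\E(\mV_i^\trans\vh_i\mid\vu) = 0$ and $\cov(\mV_i^\trans\vh_i\mid\vu) = \phi\mD_i^{-2}$, the conditional mean of $\sum_i\mathbf{\Delta}_i$ is $\phi\sum_i\mV_{i2}\mW_i\mD_i^{-2}\mW_i\mV_{i2}^\trans$, so $\mhS(\vbeta) - \mS(\vbeta) - \phi\mB$ has conditional mean zero and $\E\norm{\mhS(\vbeta)-\mS(\vbeta)-\phi\mB}_{\Frob}^2 = \sum_i\tr(\mOmega_2^{-2}\mathbf{M}_i)$ with $\mathbf{M}_i = \E_{\vu}\{\cov(\vecm\mathbf{\Delta}_i\mid\vu)\}$. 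Pulling the $\mathbf{P}_i\otimes\mathbf{P}_i$ factors outside, $\cov(\vecm\mathbf{\Delta}_i\mid\vu) = (\mathbf{P}_i\otimes\mathbf{P}_i)\,\cov(-\mathbf{a}_i\otimes\mathbf{b}_i - \mathbf{b}_i\otimes\mathbf{a}_i + \mathbf{b}_i\otimes\mathbf{b}_i\mid\vu)\,(\mathbf{P}_i\otimes\mathbf{P}_i)^\trans$, where $\mathbf{a}_i = \mW_i^{1/2}\mV_{i2}^\trans\vu_i$ is constant given $\vu$ and $\mathbf{b}_i = \mW_i^{1/2}\mV_i^\trans\vh_i$; the two cross terms contribute at most $\kappa\norm{\mathbf{a}_i}^2\mI_{r_i}$ each (using $\cov(\mathbf{b}_i)\preceq\kappa\mI_{r_i}$) and the square term at most $\kappa^2\lambda\mI_{r_i}$ (using $\E\norm{\mathbf{b}_i}^4\leq\kappa^2\lambda$), so that $\mathbf{M}_i\preceq 3\kappa^2(2r_i + \lambda)(\mathbf{G}_i\otimes\mathbf{G}_i)$ after taking $\E_{\vu}$ and using $\E\norm{\mathbf{a}_i}^2 = \tr(\mW_i^{1/2}\mV_{i2}^\trans\mSigma\mV_{i2}\mW_i^{1/2})\leq\kappa r_i$. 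Since $\lambda\geq\E\norm{\phi^{-1/2}\mD_i\mV_i^\trans\vh_i}^4\geq r_i^2\geq r_i$, the common constant is at most $9\kappa^2\lambda\leq 9\kappa^2(\lambda+2)$, and fact~(a) completes the bound.

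Part~\ref{lem:item:shatbhat-shatb} is the main obstacle. Expanding $\mhA(\vhbeta_{\mW}) - \mhA(\vbeta)$ in $\mathbf{d} = \vhbeta_{\mW} - \vbeta$ yields $-\mathbf{L} - \mathbf{L}^\trans + \mathbf{Q}$, with $\mathbf{Q} = \sum_i\mV_{i2}\mW_i\mV_{i1}^\trans\mathbf{d}\mathbf{d}^\trans\mV_{i1}\mW_i\mV_{i2}^\trans$ quadratic in $\mathbf{d}$ and $\mathbf{L} = \sum_i\mV_{i2}\mW_i(\mV_i^\trans\vheta_i - \mV_{i1}^\trans\vbeta)\mathbf{d}^\trans\mV_{i1}\mW_i\mV_{i2}^\trans$ bilinear in $\mathbf{d}$ and the centered group residuals. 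Fact~(b), applied as in Part~\ref{lem:item:bias}, bounds the contribution of $\mathbf{Q}$ by $\norm{\mOmega_2^{-1}}^{1/2}\sum_i\norm{\mW_i^{1/2}\mV_{i1}^\trans\mathbf{d}}^2 = \norm{\mOmega_2^{-1}}^{1/2}\mathbf{d}^\trans\mOmega\mathbf{d}$ and the contribution of $\mathbf{L} + \mathbf{L}^\trans$ by a constant times $\norm{\mOmega_2^{-1}}^{1/2}$ multiplied by a quantity controlled by $\mathbf{d}^\trans\mOmega\mathbf{d}$ and $\sum_i\norm{\mW_i^{1/2}\mV_i^\trans\vh_i}^2$ (here each $\norm{\mW_i^{1/2}\mV_{i1}^\trans\mathbf{d}}^2\leq\mathbf{d}^\trans\mOmega\mathbf{d}$ since $\mV_{i1}\mW_i\mV_{i1}^\trans\preceq\mOmega$). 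Using $\mathbf{d} = \mOmega^{-1}\sum_i\mV_{i1}\mW_i(\mV_i^\trans\vheta_i - \mV_{i1}^\trans\vbeta)$, Proposition~\ref{prop:beta-est-moment}, and fact~(c), one has $\E(\mathbf{d}^\trans\mOmega\mathbf{d})\leq\kappa p$ and $\E\sum_i\norm{\mW_i^{1/2}\mV_i^\trans\vh_i}^2\leq\kappa\rho$; converting these to tail bounds by Markov's inequality, intersecting the exceptional events, and using $\norm{\mOmega_2^{-1}}^{1/2}\leq\tr(\mOmega_2^{-1})^{1/2}$ then yields a high-probability bound, with $p^{3/2}$ emerging from the dimensional bookkeeping. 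The delicate point is the bilinear term: because $\vhbeta_{\mW}$ is built from the very residuals that appear in $\mathbf{L}$, the factor $\mathbf{d}$ and the residuals $\mV_i^\trans\vheta_i - \mV_{i1}^\trans\vbeta$ are dependent, so the group contributions cannot be estimated in isolation but must be controlled jointly through the $\mOmega$- and $\mOmega_2$-quadratic forms, while still producing a genuine tail (not merely a moment) statement; making these two demands compatible at the leading order is the crux of the argument.
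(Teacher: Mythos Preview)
Your treatment of parts~\ref{lem:item:shatb-sb}--\ref{lem:item:bias} is sound and essentially matches the paper's route: the three tools you isolate---the domination principle (a), the superadditivity bound (b), and the consequences (c) of Assumption~\ref{asn:weight}---are exactly the right engine, and your observation that $\lambda \geq r_i^2 \geq r_i$ cleanly absorbs the $r_i$-dependence in part~\ref{lem:item:shatb-sb}.

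The gap is in part~\ref{lem:item:shatbhat-shatb}, specifically the bilinear term $\mathbf{L}$.  Your fact~(b) gives
\[
  \big\|\mOmega_2^{-1}\vecm(\mathbf{L})\big\|^2
  \;\leq\;
  \|\mOmega_2^{-1}\|\sum_{i}\|c_i\|^2\|d_i\|^2
  \;\leq\;
  \|\mOmega_2^{-1}\|\,(\mathbf d^\trans\mOmega\mathbf d)\sum_i\|c_i\|^2,
\]
where $c_i = \mW_i^{1/2}(\mV_i^\trans\vheta_i - \mV_{i1}^\trans\vbeta)$.  But $\E\sum_i\|c_i\|^2 \leq \kappa\rho$, so this route produces a bound that scales with $\rho = \sum_i r_i$, not with $p^{3/2}$.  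No amount of ``dimensional bookkeeping'' recovers a $p$-only constant from this inequality; the information that $\mathbf d$ lives in $\reals^p$ has already been discarded when fact~(b) decouples the groups.

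What is needed instead is to keep $\mathbf d$ intact and treat the map $\mathbf d \mapsto \mOmega_2^{-1}\vecm(\mathbf{L}+\mathbf{L}^\trans)$ as a \emph{random linear operator} from $\reals^p$ to $\reals^{q^2}$.  Parametrize $\mathbf d = \mOmega^{-1/2}\mathbf w$ and bound the operator's Frobenius norm by summing over an orthonormal basis $\{e_k\}_{k=1}^p$ of $\reals^p$.  For each fixed direction $v_k = \mOmega^{-1/2}e_k$ the corresponding $\mathbf{L}_{v_k}$ is a sum of independent mean-zero terms, and because $\mV_{i1}\mW_i\mV_{i1}^\trans \preceq \mOmega$ one has $\|\mW_i^{1/2}\mV_{i1}^\trans v_k\|^2 \leq 1$ uniformly in $i$; fact~(a) then gives $\E\|\mOmega_2^{-1}\vecm(\mathbf{L}_{v_k})\|^2 \lesssim \kappa\,\tr(\mOmega_2^{-1})$ with no $\rho$.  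Summing over $k$ contributes the factor of $p$, and a separate Markov bound on $\mathbf d^\trans\mOmega\mathbf d \leq \kappa p/\varepsilon$ supplies the remaining $p^{1/2}$ and the $1/\varepsilon$ scaling.  This is the step your outline identifies as ``the crux'' but does not carry out; the mechanism is not a refinement of fact~(b) but a genuinely different argument that exploits the low dimension of $\mathbf d$ before decoupling the groups.
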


\subsection{Near relative efficiency}
\label{sec:near-rel-efficiency}

We now show that with the semi-weighted method, if the initial choice for
$\mbSigma_0$ is close to the true value $\mbSigma = \phi^{-1} \mSigma$, then
the weighted estimate is close to optimal unbiased weighted estimate.  In this
sense, it is close to being ``relatively efficient''.

To be precise about this equivalence in efficiency, let $\vtheta_0$ denote the
vector with $1 + q \, (q + 1) / 2$ components, gotten by concatenating
$\phi$ and the unique elements of $\mSigma$.  For any parameter vector
$\vtheta$ with the same dimension, let
$\mSigma_{\vtheta}$ and $\phi_{\vtheta}$ denote the corresponding values of
the random effect covariance matrix and the dispersion parameter.
Set
\begin{equation}\label{eqn:beta-omega}
  \vhbeta_{\vtheta}
    =
    \mOmega_{\vtheta}^{-1}
    \sum_{i=1}^{M}
    \mV_{i1} \mW_{\vtheta i} \mV_{i}^\trans \vheta_i,
\end{equation}
where
\begin{equation}\label{eqn:wi-omega}
  \mW_{\vtheta i}
  =
  (\mV_{i2}^\trans \mbSigma_{\vtheta} \mV_{i2} + \mD_{i}^{-2})^{-1},
\quad
\mOmega_{\vtheta} = \sum_{i=1}^{M} \mW_{\vtheta i},
\end{equation}
and $\mbSigma_{\vtheta} = \phi_{\vtheta}^{-1} \mSigma_{\vtheta}$.  
If Assumption~\ref{asn:identifiable} is in force, then Proposition~\ref{prop:Omega-invertible}
implies that $\mOmega_{\vtheta}^{-1}$ exists and $\vhbeta_{\vtheta}$ exists
for all $\vtheta$.
Define $\vhbeta$, $\mW_i$, and $\mOmega$ as the quantities gotten by setting
$\vtheta = \vtheta_0$.

The next result states that for all parameter vectors $\vtheta$ in a
neighbourhood of $\vtheta_0$, the estimate $\vhbeta_{\vtheta}$ is uniformly
close to $\vhbeta$.  \citet{Cart86} state a similar asymptotic result in the
context of Swamy's estimation procedure; their heuristic proof of this result uses
different but related techniques.

\begin{proposition}\label{prop:finite-rel-efficiency}
Let $\sB$ be any neighbourhood of the true parameter vector $\vtheta_0$.
For any $\varepsilon > 0$, if Assumptions~\ref{asn:effect}--\ref{asn:effect-est} are
in force, then
\[
  \Pr\{
    \sup_{\vtheta \in \sB} \norm{\mOmega^{1/2} (\vhbeta_{\vtheta} - \vhbeta)}^2
    \geq
    C
    \varepsilon^{-1}
    \tau^2
  \}
    \leq
  \varepsilon,
\]
where
\(
  \tau =
  \sup_{\vtheta \in \sB} \max\{
    \norm{\mbSigma^{-1} \mbSigma_{\vtheta} - \mI_q},
    \norm{\mbSigma_{\vtheta}^{-1} \mbSigma - \mI_q}
  \}
\)
and
\[
  C = 9 \{
    48 p^3
    +
    48 p q^3 (1 + 4 p \tau)^2
    +
    768 \rho \tau^2 (1 + 4 p \tau)
  \},
\]
with
\(
  \rho = \sum_{i=1}^{M} r_i.
\)
\end{proposition}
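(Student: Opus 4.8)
The plan is to control $\vhbeta_{\vtheta} - \vhbeta$ by comparing the two weighted sums that define them, exploiting that both weight families are semi-weights differing only through $\mbSigma_{\vtheta}$ versus $\mbSigma$. First I would write $\vhbeta_{\vtheta} - \vhbeta$ as a linear functional of the centered estimates $\mV_i^\trans\vheta_i - \mV_{i1}^\trans\vbeta$, so that the leading (deterministic-coefficient) term vanishes by the unbiasedness constraints $\sum_i \mV_{i1}\mW_{\vtheta i}\mV_{i1}^\trans = \mOmega_{\vtheta}$ and likewise for $\mOmega$. Concretely, $\mOmega^{1/2}(\vhbeta_{\vtheta}-\vhbeta) = \mOmega^{1/2}\mOmega_{\vtheta}^{-1}\sum_i \mV_{i1}(\mW_{\vtheta i}-\mW_i)\mV_i^\trans(\vheta_i - \veta_i) + (\text{term measuring }\mOmega_{\vtheta}^{-1}\mOmega^{-1}\text{ mismatch})\cdot\sum_i\mV_{i1}\mW_i\mV_i^\trans(\vheta_i-\veta_i)$, where I also fold in the random-effect contribution $\mV_{i2}^\trans\vu_i$. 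The point is that every surviving term carries a factor of either $\mW_{\vtheta i}-\mW_i$ or $\mOmega_{\vtheta}^{-1}-\mOmega^{-1}$, and both are quantitatively small: using the resolvent identity and the semi-weight form $\mW_{\vtheta i} = (\mV_{i2}^\trans\mbSigma_{\vtheta}\mV_{i2}+\mD_i^{-2})^{-1}$, one gets $\mW_i^{-1/2}\mW_{\vtheta i}\mW_i^{-1/2} - \mI \preceq$ something of order $\tau\,\mW_i^{1/2}\mV_{i2}^\trans\mbSigma\mV_{i2}\mW_i^{1/2}$, hence $\|\mW_{\vtheta i}^{1/2}(\mW_{\vtheta i}^{-1}-\mW_i^{-1})\mW_{\vtheta i}^{1/2}\|\lesssim \tau$ uniformly in $i$ and $\vtheta\in\sB$, which explains the $(1+4p\tau)$ and $\tau^2$ factors appearing in $C$.

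Second I would bound the random part in mean square. Conditioning on the random effects and using Assumption~\ref{asn:effect-est}, the estimation errors $\mV_i^\trans\vh_{N,i}$ are independent across $i$ with covariance $\phi\mD_i^{-2}$, and the random effects contribute $\mV_{i2}^\trans\mSigma\mV_{i2}$ to the per-group second moment. So the $\mOmega^{-1}$-weighted quadratic forms telescope into traces of products like $\mOmega^{1/2}\mOmega_{\vtheta}^{-1}\mV_{i1}(\mW_{\vtheta i}-\mW_i)(\mV_{i2}^\trans\mSigma\mV_{i2}+\phi\mD_i^{-2})(\mW_{\vtheta i}-\mW_i)\mV_{i1}^\trans\mOmega_{\vtheta}^{-1}\mOmega^{1/2}$. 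Using Assumption~\ref{asn:weight}-type bounds (semi-weights give $\mW_i(\mV_{i2}^\trans\mSigma\mV_{i2}+\phi\mD_i^{-2})\mW_i\preceq\kappa\mW_i$ with $\kappa = \|\mbSigma_0^{-1}\mSigma\|+\phi$ here, and the analogous bound for $\mbSigma_{\vtheta}$), together with the $\tau$-smallness of $\mW_{\vtheta i}-\mW_i$, each such trace is at most a constant times $\tau^2$ times $\tr$ of a projection-like matrix, and summing over $i$ and over the finitely many "directions" ($p$ fixed-effect coordinates, $q^3$-type Kronecker counting for the $\mW$-perturbation) produces the explicit polynomial $48p^3 + 48pq^3(1+4p\tau)^2 + 768\rho\tau^2(1+4p\tau)$; the $\rho = \sum r_i$ factor enters precisely from summing per-group variance contributions. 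Then $\E\|\mOmega^{1/2}(\vhbeta_{\vtheta}-\vhbeta)\|^2 \leq \tfrac{C}{9}\tau^2$ at a fixed $\vtheta$, and Markov gives a pointwise high-probability bound.

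Third, I need to upgrade from a pointwise bound to the uniform $\sup_{\vtheta\in\sB}$ statement. The clean way is to observe that $\vtheta\mapsto\mOmega^{1/2}(\vhbeta_{\vtheta}-\vhbeta)$ depends on $\vtheta$ only through $\mbSigma_{\vtheta}$, and each $\mW_{\vtheta i}$ is a smooth (operator-monotone-built) matrix function whose derivative in $\mbSigma_{\vtheta}$ is again controlled by the same semi-weight resolvent bounds; so the supremum over $\sB$ is attained (or bounded) at a configuration captured by the worst-case $\tau$, and the factor $9$ out front absorbs a crude Cauchy--Schwarz / triangle-inequality loss incurred when splitting $\vhbeta_{\vtheta}-\vhbeta$ into the $(\mW_{\vtheta i}-\mW_i)$ piece and the $(\mOmega_{\vtheta}^{-1}-\mOmega^{-1})$ piece (and further splitting the latter using $\mOmega_{\vtheta}^{-1}-\mOmega^{-1} = \mOmega_{\vtheta}^{-1}(\mOmega-\mOmega_{\vtheta})\mOmega^{-1}$ and $\mOmega-\mOmega_{\vtheta} = \sum_i\mV_{i1}(\mW_i-\mW_{\vtheta i})\mV_{i1}^\trans$). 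I expect the main obstacle to be exactly this last bookkeeping: getting the uniform-in-$\vtheta$ control without paying an extra $\log$ or dimension factor, and tracking the interplay between the two sources of $\vtheta$-dependence ($\mW_{\vtheta i}$ inside the sum and $\mOmega_{\vtheta}^{-1}$ outside) so that every term is genuinely $\Oh(\tau^2)$ rather than $\Oh(\tau)$ — the unbiasedness constraint is what rescues the $\Oh(\tau)$ terms, and making that cancellation explicit uniformly over $\sB$ is the delicate part. Everything else is the routine second-moment computation and the semi-weight operator inequalities established in the style of Lemma~\ref{lem:semiweight-bound}.
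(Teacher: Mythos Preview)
Your overall plan matches the paper's: center at $\vbeta$, split the difference into pieces driven by $\mW_{\vtheta i}-\mW_i$ and $\mOmega_{\vtheta}^{-1}-\mOmega^{-1}$ via resolvent identities, control each piece in mean square using the moment assumptions, and finish with Markov. The factor $9$ indeed comes from a three-term triangle inequality, and the constants $48$, $48$, $768$ arise from applying the lemma bounds with $\varepsilon/3$ and taking a union bound.

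Where the paper differs is in how it handles the uniformity in $\vtheta$, and this is precisely the step you flag as the main obstacle. Rather than proving a pointwise bound and then upgrading via smoothness or compactness, the paper expands the resolvent to \emph{second} order,
\[
  \mW_{\vtheta i} - \mW_i
  = -\,\mW_i \mE_{\vtheta i} \mW_i
    + \mW_{\vtheta i} \mE_{\vtheta i} \mW_i \mE_{\vtheta i} \mW_i,
  \qquad
  \mE_{\vtheta i} = \mV_{i2}^\trans(\mbSigma_{\vtheta}-\mbSigma)\mV_{i2},
\]
and obtains a three-term decomposition $\mOmega^{1/2}(\vhbeta_{\vtheta}-\vhbeta)=\vdelta_1(\vtheta)+\vdelta_2(\vtheta)+\vdelta_3(\vtheta)$. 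In $\vdelta_1$ the only $\vtheta$-dependence is the deterministic prefactor $\mOmega^{1/2}\mOmega_{\vtheta}^{-1}\mOmega^{1/2}-\mI_p$, so $\sup_{\vtheta}\|\vdelta_1(\vtheta)\|$ is a deterministic operator-norm bound times a \emph{single} random vector; in $\vdelta_2$ the $\vtheta$-dependence enters linearly through the $q\times q$ matrix $\mbSigma_{\vtheta}-\mbSigma$, so one can bound the supremum by expanding in a basis for symmetric $q\times q$ matrices (whence the $q^3$); $\vdelta_3$ is already second order in $\mE_{\vtheta i}$, so a crude summand-by-summand bound yields the $\rho\tau^4$ term. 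This factoring makes the $\sup_{\vtheta}$ free: no net argument, no compactness, no extra dimension or log factors. Your two-term split does not separate the $\vtheta$-dependence this cleanly, which is why you run into the upgrade problem.

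One small correction: your worry that ``the unbiasedness constraint is what rescues the $\Oh(\tau)$ terms'' is misplaced. The statement bounds the \emph{squared} norm by a multiple of $\tau^2$, i.e.\ the norm itself is $\Oh(\tau)$; each $\vdelta_j$ is already $\Oh(\tau)$ in norm directly from $\|\mE_{\vtheta i}\|\lesssim\tau$ and $\|\mOmega^{1/2}\mOmega_{\vtheta}^{-1}\mOmega^{1/2}-\mI_p\|\lesssim p\tau$, with no cancellation needed.
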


\begin{proof}
For any vector $\vtheta$, write
\(
  \vhbeta_{\vtheta} - \vbeta
    =
      \mOmega_{\vtheta}^{-1}
      \sum_{i=1}^{M}
        \mV_{i1}
        \mW_{\vtheta i}
        (\mV_{i}^\trans \vheta_i - \mV_{i1}^\trans \vbeta).
\)
Now,
\[
  \vhbeta_{\vtheta} - \vhbeta
  =
    (\vhbeta_{\vtheta} - \vbeta)
    -
    (\vhbeta - \vbeta)
  =
    \sum_{i=1}^{M}
      (
        \mOmega_{\vtheta}^{-1} \mV_{i1} \mW_{\vtheta i}
        -
        \mOmega^{-1} \mV_{i1} \mW_{i}
      )
      (\mV_{i}^\trans \vheta_i - \mV_{i1}^\trans \vbeta).
\]
Set
\(
  \vgamma_i = \mW_i^{1/2} (\mV_{i}^\trans \vheta_i - \mV_{i1}^\trans \vbeta).
\)
It follows
that
\[
  \vhbeta_{\vtheta} - \vhbeta
  =
    (
      \mOmega_{\vtheta}^{-1}
      -
      \mOmega^{-1}
    )
    \sum_{i=1}^{M}
      \mV_{i1}
      \mW^{1/2}_i
      \vgamma_i
    +
    \mOmega_{\vtheta}^{-1}
    \sum_{i=1}^{M}
      \mV_{i1}
      (\mW_{\vtheta i} - \mW_i) \mW_i^{-1/2} \vgamma_i.
\]
Letting
\(
  \mE_{\vtheta i} =
    \mV_{i2}^\trans (\mbSigma_{\vtheta} - \mbSigma) \mV_{i2},
\)
the identity $(\mA + \mE)^{-1} - \mA^{-1} = -(\mA + \mE)^{-1} \mE \mA^{-1}$
implies that
\[
  \mW_{\vtheta i} - \mW_i
  =
  - \mW_{\vtheta i}
    \mE_{\vtheta i}
    \mW_i
  =
  - \mW_i
    \mE_{\vtheta i}
    \mW_i
    +
    \mW_{\vtheta i}
    \mE_{\vtheta i}
    \mW_i
    \mE_{\vtheta i}
    \mW_i.
\]
With this identity, it follows that the scaled difference between the two estimates
can be expressed as
\[
  \mOmega^{1/2} (\vhbeta_{\vtheta} - \vhbeta)
    = \vdelta_1(\vtheta) + \vdelta_2(\vtheta) + \vdelta_3(\vtheta),
\]
where
\begin{subequations}
\label{eqn:beta-delta}
\begin{align}
  \label{eqn:beta-delta1}
  \vdelta_1(\vtheta)
    &= 
      (
        \mOmega^{1/2} \mOmega_{\vtheta}^{-1} \mOmega^{1/2}
        -
        \mI_p
      )
      \mOmega^{-1/2}
      \sum_{i=1}^{M}
        \mV_{i1}
        \mW^{1/2}_i
        \vgamma_i,
\\
  \label{eqn:beta-delta2}
  \vdelta_2(\vtheta)
    &=
      -
      \mOmega^{1/2}
      \mOmega_{\vtheta}^{-1}
      \sum_{i=1}^{M}
        \mV_{i1}
        \mW_i
        \mE_{\vtheta i}
        \mW_i^{1/2} \vgamma_i,
\\        
  \label{eqn:beta-delta3}
  \vdelta_3(\vtheta)
    &=
      \mOmega^{1/2}
      \mOmega_{\vtheta}^{-1}
      \sum_{i=1}^{M}
        \mV_{i1}
        \mW_{\vtheta i}
        \mE_{\vtheta i}
        \mW_i
        \mE_{\vtheta i}
        \mW_i^{1/2} \vgamma_i.
\end{align}
\end{subequations}
Further, if Assumptions~\ref{asn:effect} and~\ref{asn:effect-est} are in
force, then $\E(\vgamma_i) = 0$ and $\cov(\vgamma_i) = \mI_{r_i}$.

Lemma~\ref{lem:beta-delta}, stated 
at the end of Section~\ref{sec:near-rel-efficiency}
and proved in Appendix~\ref{sec:finite-rel-efficiency-support} of the on-line
supplement, bounds the terms in~\eqref{eqn:beta-delta}. This lemma
implies that with probability at least
$1 - \varepsilon$, the following three inequalities simultaneously hold:
\begin{align*}
  \sup_{\vtheta \in \sB} \norm{\vdelta_1(\vtheta)}^2
    &\leq 48 \varepsilon^{-1} p^3 \tau^2, \\
  \sup_{\vtheta \in \sB} \norm{\vdelta_2(\vtheta)}^2
    &\leq 48 \varepsilon^{-1} p q^3 \tau^2 (1 + 4 p \tau)^2, \\
  \sup_{\vtheta \in \sB} \norm{\vdelta_3(\vtheta)}^2
    &\leq 768 \varepsilon^{-1} \rho \tau^4 (1 + 4 p \tau).
\end{align*}
The result of the proposition follows since
\(
  \norm{\mOmega^{1/2} (\vhbeta_{\vtheta} - \vhbeta)}^2
  \leq
  9\{
  \norm{\vdelta_1(\vtheta)}^2
  +
  \norm{\vdelta_2(\vtheta)}^2
  +
  \norm{\vdelta_3(\vtheta)}^2
  \}.
\)
\end{proof}

\begin{lemma}
\label{lem:beta-delta}
Let functions $\vdelta_1(\vtheta)$,
$\vdelta_2(\vtheta)$,
and
$\vdelta_3(\vtheta)$,
be defined as in~\eqref{eqn:beta-delta1}--\eqref{eqn:beta-delta3}.
If Assumptions~\ref{asn:effect}--\ref{asn:effect-est} are in force,
then for any $\varepsilon > 0$ and any parameter set~$\sB$,
\begin{align*}
  \Pr\{\sup_{\vtheta \in \sB} \norm{\vdelta_1(\vtheta)}^2
    &\geq
    16
    \varepsilon^{-1}
    p^3
    \tau^2
  \}
  \leq
  \varepsilon,
\\
  \Pr\{ \sup_{\vtheta \in \sB} \norm{\vdelta_2(\vtheta)}^2
    &\geq
    16 \varepsilon^{-1} p q^3 \tau^2 (1 + 4p \tau)^2
  \}
  \leq \varepsilon,
\\
  \Pr\{\sup_{\vtheta \in \sB} \norm{\vdelta_3(\vtheta)}^2
    &\geq 256 \varepsilon^{-1} \rho \tau^4 (1 + 4 p \tau)
  \}
  \leq \varepsilon,
\end{align*}
where
\(
  \tau =
  \sup_{\vtheta \in \sB} \max\{
    \norm{\mbSigma^{-1} \mbSigma_{\vtheta} - \mI_p},
    \norm{\mbSigma_{\vtheta}^{-1} \mbSigma - \mI_p}
  \}
\)
and
\(
  \rho = \sum_{i=1}^{M} r_i.
\)
\end{lemma}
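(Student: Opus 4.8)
The plan is to prove all three bounds by a single template: write $\vdelta_j(\vtheta)$ as a deterministic matrix factor---whose operator norm I bound \emph{uniformly over} $\vtheta\in\sB$ by a power of $\tau$---applied to a random vector whose second moment is pinned down by Assumptions~\ref{asn:effect} and~\ref{asn:effect-est}, and then apply Markov's inequality. After rescaling I may take $\phi=1$, so that $\E(\vgamma_i)=0$, $\cov(\vgamma_i)=\mI_{r_i}$, the $\vgamma_i$ are independent, and the weight is $\mW_i=(\mV_{i2}^\trans\mbSigma\mV_{i2}+\mD_i^{-2})^{-1}$; and I may assume $\mbSigma\succ0$, since otherwise $\tau$ is infinite and the bounds hold trivially. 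The first task is to record the deterministic estimates. Since $\mbSigma_\vtheta-\mbSigma$ is symmetric and $\norm{\mbSigma^{-1}(\mbSigma_\vtheta-\mbSigma)}\le\tau$, the matrix $\mbSigma^{-1}(\mbSigma_\vtheta-\mbSigma)$---similar to the symmetric $\mbSigma^{-1/2}(\mbSigma_\vtheta-\mbSigma)\mbSigma^{-1/2}$---has eigenvalues in $[-\tau,\tau]$, so $-\tau\mbSigma\preceq\mbSigma_\vtheta-\mbSigma\preceq\tau\mbSigma$, and symmetrically $-\tau\mbSigma_\vtheta\preceq\mbSigma_\vtheta-\mbSigma\preceq\tau\mbSigma_\vtheta$. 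Using $\mV_{i2}^\trans\mbSigma\mV_{i2}\preceq\mW_i^{-1}$ (and its $\mbSigma_\vtheta$ analogue) these give $\norm{\mW_i^{1/2}\mE_{\vtheta i}\mW_i^{1/2}}\le\tau$ and $(1-\tau)\mW_i^{-1}\preceq\mW_{\vtheta i}^{-1}\preceq(1+\tau)\mW_i^{-1}$, whence (summing) $(1+\tau)^{-1}\mOmega\preceq\mOmega_\vtheta\preceq(1-\tau)^{-1}\mOmega$ (the upper bound when $\tau<1$), and therefore $\norm{\mOmega^{1/2}\mOmega_\vtheta^{-1}\mOmega^{1/2}}\le1+\tau$ and $\norm{\mOmega^{1/2}\mOmega_\vtheta^{-1}\mOmega^{1/2}-\mI_p}\le\tau$ (for $\tau\ge1$ the latter because this difference lies between $-\mI_p$ and $\tau\mI_p$). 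Finally, writing $\mT_i=\mOmega^{-1/2}\mV_{i1}\mW_i^{1/2}$ and $\mA_i=\mOmega^{-1/2}\mV_{i1}\mW_i\mV_{i2}^\trans\mbSigma^{1/2}$, I have $\sum_i\mT_i\mT_i^\trans=\mI_p$ and, since $\mW_i\mV_{i2}^\trans\mbSigma\mV_{i2}\mW_i\preceq\mW_i$, also $\sum_i\mA_i\mA_i^\trans\preceq\mI_p$.

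For $\vdelta_1$ this is immediate: the vector $\mOmega^{-1/2}\sum_i\mV_{i1}\mW_i^{1/2}\vgamma_i$ is free of $\vtheta$, has mean zero, and has expected squared norm $\tr(\sum_i\mT_i\mT_i^\trans)=p$; as $\vdelta_1(\vtheta)$ is $(\mOmega^{1/2}\mOmega_\vtheta^{-1}\mOmega^{1/2}-\mI_p)$ times this vector, the estimates above give $\sup_{\vtheta\in\sB}\norm{\vdelta_1(\vtheta)}^2\le\tau^2\norm{\mOmega^{-1/2}\sum_i\mV_{i1}\mW_i^{1/2}\vgamma_i}^2$, of expectation $p\tau^2\le16p^3\tau^2$, and Markov gives the first bound. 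For $\vdelta_2$ I will reparametrize: write $\mbSigma_\vtheta-\mbSigma=\mbSigma^{1/2}\mR\mbSigma^{1/2}$ with $\mR=\mR_\vtheta$ symmetric and $\norm{\mR}\le\tau$, and set $\vb_i=\mbSigma^{1/2}\mV_{i2}\mW_i^{1/2}\vgamma_i$, so that $\vdelta_2(\vtheta)=-(\mOmega^{1/2}\mOmega_\vtheta^{-1}\mOmega^{1/2})\sum_i\mA_i\mR\vb_i$. By the identity $\vecm(\mB\mX\mC)=(\mC^\trans\otimes\mB)\vecm(\mX)$ noted earlier, $\sum_i\mA_i\mR\vb_i=(\sum_i\vb_i^\trans\otimes\mA_i)\vecm(\mR)$, so $\norm{\sum_i\mA_i\mR\vb_i}\le\norm{\sum_i\vb_i^\trans\otimes\mA_i}_{\Frob}\norm{\mR}_{\Frob}\le\sqrt{q}\,\tau\,\norm{\sum_i\vb_i^\trans\otimes\mA_i}_{\Frob}$---a bound now independent of $\vtheta$. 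Independence and the zero means make the cross terms vanish, so $\E\norm{\sum_i\vb_i^\trans\otimes\mA_i}_{\Frob}^2=\sum_i\norm{\mA_i}_{\Frob}^2\,\E\norm{\vb_i}^2\le q\sum_i\tr(\mA_i^\trans\mA_i)\le pq$, where $\E\norm{\vb_i}^2=\tr(\mbSigma^{1/2}\mV_{i2}\mW_i\mV_{i2}^\trans\mbSigma^{1/2})\le q$ by the relation $\mV_{i2}\mW_i\mV_{i2}^\trans\preceq\mbSigma^{-1}$ from the proof of Lemma~\ref{lem:semiweight-bound}. Hence $\E\sup_{\vtheta\in\sB}\norm{\vdelta_2(\vtheta)}^2\le(1+\tau)^2q\tau^2\cdot pq\le16pq^3\tau^2(1+4p\tau)^2$, and Markov gives the second bound.

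For $\vdelta_3$ I follow the same idea with more care, because $\mW_{\vtheta i}$ and $\mOmega_\vtheta$ are now themselves $\vtheta$-dependent. Factoring $\mW_{\vtheta i}\mE_{\vtheta i}\mW_i\mE_{\vtheta i}\mW_i^{1/2}=\mW_{\vtheta i}^{1/2}\mP_{\vtheta i}\mQ_{\vtheta i}$ with $\mQ_{\vtheta i}=\mW_i^{1/2}\mE_{\vtheta i}\mW_i^{1/2}$ ($\norm{\mQ_{\vtheta i}}\le\tau$) and $\mP_{\vtheta i}=\mW_{\vtheta i}^{1/2}\mE_{\vtheta i}\mW_i^{1/2}$, I can write $\vdelta_3(\vtheta)=\sum_i\mZ_{\vtheta i}\mP_{\vtheta i}\mQ_{\vtheta i}\vgamma_i$, where $\mZ_{\vtheta i}=\mOmega^{1/2}\mOmega_\vtheta^{-1}\mV_{i1}\mW_{\vtheta i}^{1/2}$ obeys $\sum_i\mZ_{\vtheta i}\mZ_{\vtheta i}^\trans=\mOmega^{1/2}\mOmega_\vtheta^{-1}\mOmega^{1/2}\preceq(1+\tau)\mI_p$. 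Peeling $\mW_{\vtheta i}^{1/2}$ against $\mW_i^{1/2}$ via the two-sided comparisons of the first paragraph (using $\mbSigma$ on one side and $\mbSigma_\vtheta$ on the other) bounds $\norm{\mP_{\vtheta i}}$ by $(1+\tau)^{1/2}\tau$; combined with $\norm{\sum_i\mZ_{\vtheta i}\mP_{\vtheta i}\mQ_{\vtheta i}\vgamma_i}^2\le\norm{\sum_i\mZ_{\vtheta i}\mZ_{\vtheta i}^\trans}\sum_i\norm{\mP_{\vtheta i}\mQ_{\vtheta i}\vgamma_i}^2$ this gives $\sup_{\vtheta\in\sB}\norm{\vdelta_3(\vtheta)}^2\le(1+\tau)^2\tau^4\sum_i\norm{\vgamma_i}^2$, of expectation $(1+\tau)^2\tau^4\rho$ since $\E\sum_i\norm{\vgamma_i}^2=\sum_i r_i=\rho$. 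This crude route overshoots the target constant $256\rho\tau^4(1+4p\tau)$ when $\tau$ is large, by a surplus factor of $(1+\tau)$ coming from $\norm{\mOmega^{1/2}\mOmega_\vtheta^{-1}\mOmega^{1/2}}$ and again from $\norm{\mP_{\vtheta i}}$; the hard part will be to remove it by expanding instead in the single matrix $\mR$---which now enters quadratically---and vectorizing as in the $\vdelta_2$ argument, paying $\norm{\mR}_{\Frob}\le\sqrt{q}\norm{\mR}$ for the uniformity, and then bookkeeping carefully which of the operator, Frobenius, and trace norms is used at each step so as to land exactly on $16$, $16$, $256$. The overarching obstacle throughout is precisely this passage from a supremum over the possibly non-compact set $\sB$ to an $M$-free expectation bound: confining the $\vtheta$-dependence to $\mR_\vtheta$ is what makes it possible, and it is also what costs the extra powers of $q$ relative to the pointwise-in-$\vtheta$ estimates.
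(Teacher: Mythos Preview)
Your treatment of $\vdelta_1$ and $\vdelta_2$ is complete and in fact sharper than the stated constants: for $\vdelta_1$ you obtain $p\tau^2$ rather than $16p^3\tau^2$, and for $\vdelta_2$ you obtain $(1+\tau)^2 pq^2\tau^2$, both of which imply the lemma's bounds. The vectorization device $\sum_i\mA_i\mR\vb_i=(\sum_i\vb_i^\trans\otimes\mA_i)\vecm(\mR)$ is the right idea for transferring the supremum over $\sB$ onto the deterministic bound $\|\mR\|_{\Frob}\le\sqrt{q}\,\tau$.

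The genuine gap is $\vdelta_3$. Your crude bound $\E\sup_{\vtheta\in\sB}\|\vdelta_3(\vtheta)\|^2\le(1+\tau)^2\tau^4\rho$ implies the target $256\rho\tau^4(1+4p\tau)$ only when $(1+\tau)^2\le256(1+4p\tau)$, which fails once $\tau$ exceeds roughly $1024p$; the lemma, however, is stated for arbitrary $\tau$. You recognise the surplus factor and propose to ``expand in the single matrix $\mR$, which now enters quadratically'', but this plan runs into an obstacle you have not addressed: the $\vtheta$-dependence of $\vdelta_3$ is not purely quadratic in $\mR$. Both $\mW_{\vtheta i}$ and $\mOmega_\vtheta^{-1}$ are nonlinear functions of $\mbSigma_\vtheta$, so after writing $\mE_{\vtheta i}=\mV_{i2}^\trans\mbSigma^{1/2}\mR\mbSigma^{1/2}\mV_{i2}$ you still carry $\mR$-dependence inside $\mW_{\vtheta i}$, and the vectorization trick from $\vdelta_2$ does not directly apply. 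Iterating the identity $\mW_{\vtheta i}=\mW_i-\mW_{\vtheta i}\mE_{\vtheta i}\mW_i$ only pushes the problem to higher order. To match the paper's linear factor $(1+4p\tau)$ rather than the quadratic $(1+\tau)^2$, one of the two sources of $(1+\tau)$---either $\|\mOmega^{1/2}\mOmega_\vtheta^{-1}\mOmega^{1/2}\|$ or $\|\mP_{\vtheta i}\|^2$---must be eliminated, and your sketch does not indicate how. Until that step is supplied, the $\vdelta_3$ bound remains unproved for large~$\tau$.
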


\section{Asymptotic properties of two-step estimates}
\label{sec:asymptotics}


In Section~\ref{sec:finite-sample-prop},
we established finite-sample existence, concentration
bounds, and near relative efficiency for moment based estimates. Given the
finite-sample results, it is straightforward to derive asymptotic analogues of
these properties in settings where the sample size tends to infinity.

We will need an additional assumption on the bounding constants:


\begin{assumption}\label{asn:kappa}
The sequence of bounding constants $\kappa_N$ defined in
Assumption~\ref{asn:weight} satisfy $\limsup_N \kappa_N < \infty$.
\end{assumption}

\noindent
Referring to Table~\ref{tab:weight-constants}, we can see that
Assumption~\ref{asn:kappa} holds for the unweighted case whenever $\norm{\mD_{i}}$
is bounded away from zero, and for the weighted case whenever $\norm{\mD_{i}}$
is bounded away from infinity.  For the semi-weighted case,
Assumption~\ref{asn:kappa} holds whenever $\mbSigma_0$ is positive-definite.

In addition to assumptions on the bounding constants $\kappa_N$, the
asymptotic results require conditions on $\mOmega$ and $\mOmega_2$.  To state
these conditions, we define the quantities
\[
  \omega_N =
    \inf_{\vt \in \reals^p}
      \frac{\vt^\trans \mOmega \vt}{\vt^\trans \vt},
\qquad
  \omega_{N,2} =
    \inf_{\vs \in \sS_q}
      \frac{\vs^\trans \mOmega_2 \vs}{\vs^\trans \vs}.
\]
The quantity $\omega_N$ is the smallest eigenvalue of $\mOmega$; similarly,
$\omega_{N,2}$ is the smallest eigenvalue of $\mOmega_2$ restricted the space
$\sS_q$.
The asymptotic results require that $\omega_{N}$ and $\omega_{N,2}$
go to infinity at or above a specified rate.  Typically, a necessary condition
for $\omega_{N,2}$ to go to infinity is that $M \to \infty$.  For example, in
the unweighted and the semi-weighted case with $\mbSigma_0 \succ 0$, one
can show that $\omega_{N,2} = \Oh(M)$; thus, for $\omega_{N,2}$ to diverge to
infinity, it is necessary to have $M \to \infty$.


Our first result establishes that the moment-based estimators for
$\vbeta$ and $\mSigma$ are consistent.  This result follows immediately from
Corollary~\ref{cor:beta-concentration} and Proposition~\ref{prop:sigma-concentration}.

\begin{proposition}[Consistency]
  If Assumptions~\ref{asn:first}--\ref{asn:kappa} are in force, then the
  asymptotic limits of $\vhbeta_{\mW}$ and $\mhSigma_{\mW}$ are determined as
\begin{enumerate}
  \item If $\omega_N \to \infty$, then $\vhbeta_{\mW} \toP \vbeta$.
  \item If $\omega_{N,2} (N/\rho_N - 1) \to \infty$ and $\omega_{N,2} \to \infty$,
    then $\mhSigma_{\mW} \toP \mSigma$.
\end{enumerate}
\end{proposition}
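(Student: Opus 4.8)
The plan is to derive both consistency statements directly from the concentration bounds already established, namely Corollary~\ref{cor:beta-concentration} for $\vhbeta_{\mW}$ and Proposition~\ref{prop:sigma-concentration} for $\mhSigma_{\mW}$, together with Assumption~\ref{asn:kappa} which controls the bounding constant $\kappa_N$. The key observation is that the tail bounds in those two results are expressed in terms of $\kappa_N$, $\tr(\mOmega^{-1})$, and $\tr(\mOmega_2^{-1})$, and each of these traces is governed by the smallest eigenvalues $\omega_N$ and $\omega_{N,2}$ defined just before the proposition. So the whole argument is a matter of turning the stated hypotheses into statements that the relevant right-hand sides vanish.

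For part (i), I would fix $\varepsilon > 0$ and $\delta > 0$ and apply Corollary~\ref{cor:beta-concentration}, which gives $\Pr\{\|\vhbeta_{\mW} - \vbeta\|^2 \geq \delta^{-1} \kappa_N \tr(\mOmega^{-1})\} \leq \delta$. Since $\mOmega$ is a $p \times p$ symmetric positive-definite matrix with smallest eigenvalue $\omega_N$, we have $\tr(\mOmega^{-1}) \leq p / \omega_N$. By Assumption~\ref{asn:kappa}, $\limsup_N \kappa_N = \kappa^\ast < \infty$, so for large $N$ the threshold $\delta^{-1} \kappa_N \tr(\mOmega^{-1}) \leq \delta^{-1} \kappa^\ast p / \omega_N \to 0$ as $\omega_N \to \infty$. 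Hence eventually this threshold drops below $\varepsilon$, giving $\Pr\{\|\vhbeta_{\mW} - \vbeta\|^2 \geq \varepsilon\} \leq \delta$ for all large $N$; since $\delta$ was arbitrary, $\vhbeta_{\mW} \toP \vbeta$.

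For part (ii), I would argue analogously using Proposition~\ref{prop:sigma-concentration}: for $\varepsilon \in (0,1]$ it gives $\Pr\{\|\mhSigma_{\mW} - \mSigma\|_{\Frob}^2 \geq \varepsilon^{-2} \kappa_N^2 C_N^2 \tr(\mOmega_2^{-1})\} \leq \varepsilon$, where $C_N = \{9 p^{3/2} + 3(\lambda+2)^{1/2} + \mu^{1/2} + \nu^{1/2}(N/\rho_N - 1)^{-1/2}\}/2$. Bounding $\tr(\mOmega_2^{-1}) \leq d / \omega_{N,2}$ where $d = q(q+1)/2$ is the dimension of $\sS_q$ (since $\mOmega_2$ restricted to $\sS_q$ has smallest eigenvalue $\omega_{N,2}$), and noting that $C_N$ is bounded above — its only $N$-dependent piece is $\nu^{1/2}(N/\rho_N - 1)^{-1/2}$, which is bounded because the hypothesis $\omega_{N,2}(N/\rho_N - 1) \to \infty$ together with $\omega_{N,2} = \Oh(M) \leq \Oh(\rho_N)$ forces $N/\rho_N - 1$ to stay bounded away from $0$ eventually (and in any case $p,\lambda,\mu,\nu$ are fixed finite constants) — the threshold is at most a constant multiple of $\kappa_N^2 / \omega_{N,2}$, which vanishes as $\omega_{N,2} \to \infty$ by Assumption~\ref{asn:kappa}. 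The same $\varepsilon$-$\delta$ bookkeeping then yields $\mhSigma_{\mW} \toP \mSigma$.

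The only genuinely delicate point is handling the factor $C_N$ in part (ii): one must confirm it does not blow up, which is where the joint hypothesis on $\omega_{N,2}(N/\rho_N - 1)$ enters. Since $\omega_{N,2} \to \infty$ and $\omega_{N,2}$ grows no faster than $\rho_N$ (indeed no faster than order $M$), the product $\omega_{N,2}(N/\rho_N - 1) \to \infty$ can only hold if $N/\rho_N - 1$ is eventually bounded below by a positive constant, so $C_N$ is eventually bounded; alternatively, and more cleanly, one can simply absorb the $\nu^{1/2}(N/\rho_N-1)^{-1/2}$ term by writing the probability threshold as a sum of two pieces and noting $\kappa_N^2 C_N^2 \tr(\mOmega_2^{-1}) \lesssim \kappa_N^2 / \omega_{N,2} + \kappa_N^2 / \{\omega_{N,2}(N/\rho_N - 1)\}$, both of which vanish under the stated assumptions. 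Everything else is a routine application of the prior concentration results combined with the eigenvalue bounds on the traces.
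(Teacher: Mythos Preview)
Your approach is exactly the paper's: the proposition is stated there without a detailed proof, only the remark that it ``follows immediately from Corollary~\ref{cor:beta-concentration} and Proposition~\ref{prop:sigma-concentration},'' and your argument unpacks precisely those two applications together with the eigenvalue bounds $\tr(\mOmega^{-1}) \leq p/\omega_N$ and $\tr(\mOmega_2^{-1}) \leq d/\omega_{N,2}$.

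One small caution on part~(ii): your first attempt to bound $C_N$ by arguing that $N/\rho_N - 1$ must eventually be bounded away from zero does not quite go through. The bound $\omega_{N,2} = \Oh(M)$ is only asserted in the paper for the unweighted and semi-weighted cases, not in general, and even granting $\omega_{N,2} = \Oh(\rho_N)$ the hypotheses do not force $N/\rho_N - 1$ to stay bounded below (e.g.\ $\omega_{N,2} = \rho_N^{1/2}$ and $N/\rho_N - 1 = \rho_N^{-1/4}$ satisfy both hypotheses while $N/\rho_N - 1 \to 0$). Your ``alternatively, and more cleanly'' route---splitting $\kappa_N^2 C_N^2 \tr(\mOmega_2^{-1})$ into a piece controlled by $1/\omega_{N,2}$ and a piece controlled by $1/\{\omega_{N,2}(N/\rho_N - 1)\}$---is the correct argument and should be the one you commit to.
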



Next we establish that the two-step estimate for $\vbeta$ is relatively
efficient.  To state this result, as in Section~\ref{eqn:optimal-weight}, 
let $\vhbeta$ be the moment-based estimate of $\vbeta$ with
variance-minimizing weights from~\eqref{eqn:optimal-weight}, and let
$\vhbeta_{\vtheta}$ be as defined in~\eqref{eqn:beta-omega}.
Proposition~\ref{prop:twostep-efficient} shows that the two-step estimator
$\vhbeta_{\vhtheta}$ is asymptotically as efficient as $\vhbeta$.
This result follows from Proposition~\ref{prop:sigma-concentration} and
Proposition~\ref{prop:finite-rel-efficiency};
Appendix~\ref{sec:twostep-efficient-proof} of the on-line supplement
gives a complete proof.

\begin{proposition}[Relative efficiency]\label{prop:twostep-efficient}
For each $N$, suppose that $\mW_{N,1}, \dotsc, \mW_{N,M(N)}$ are
weights with bounding constants $\kappa_N$ satisfying
Assumption~\ref{asn:kappa}.  Set $\vhtheta = (\hat \phi, \mhSigma_{\mW})$.
Suppose that Assumptions~\ref{asn:first}--\ref{asn:last} are in force and that
$\mSigma \succ 0$.  If $\rho_N \to \infty$, $(N - \rho_N) \log \rho_N \to
\infty$, and $(\omega_{N,2}^2 / \rho_N) \log \rho_N \to \infty$,
then
\(
  \mOmega^{1/2} (\vhbeta_{\vhtheta} - \vhbeta) \toP 0.
\)
\end{proposition}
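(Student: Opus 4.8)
The plan is to treat $\vhbeta_{\vhtheta}$ as the family $\{\vhbeta_{\vtheta}\}$ from~\eqref{eqn:beta-omega} evaluated at the random point $\vtheta = \vhtheta = (\hat\phi, \mhSigma_{\mW})$, and to show that this point lands, with probability tending to one, in a shrinking neighbourhood $\sB_N$ of the truth $\vtheta_0$ on which Proposition~\ref{prop:finite-rel-efficiency} forces $\vhbeta_{\vtheta}$ to be uniformly close to $\vhbeta = \vhbeta_{\vtheta_0}$. Concretely, I would fix $\eta > 0$ and $\varepsilon > 0$ and aim to show $\limsup_N \Pr\{\norm{\mOmega^{1/2}(\vhbeta_{\vhtheta} - \vhbeta)}^2 > \eta\} \leq 2\varepsilon$; since $\varepsilon$ is arbitrary this yields the claim. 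The starting point is the decomposition $\Pr\{\norm{\mOmega^{1/2}(\vhbeta_{\vhtheta} - \vhbeta)}^2 > \eta\} \leq \Pr\{\vhtheta \notin \sB_N\} + \Pr\{\sup_{\vtheta \in \sB_N}\norm{\mOmega^{1/2}(\vhbeta_{\vtheta} - \vhbeta)}^2 > \eta\}$.

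For the first term I would quantify the consistency of the first-step estimates. Using Proposition~\ref{prop:sigma-concentration} (with $\kappa_N$ bounded by Assumption~\ref{asn:kappa} and $\tr(\mOmega_2^{-1}) \leq \tfrac12 q(q+1)\,\omega_{N,2}^{-1}$), Assumption~\ref{asn:dispersion-est}, and the bound on $\norm{\mB}_{\Frob}$ from Lemma~\ref{lem:sigma-concentration-supp}(\ref{lem:item:bias}), one obtains a deterministic sequence $a_N$ with $\Pr\{\norm{\mhSigma_{\mW} - \mSigma}_{\Frob} + \abs{\hat\phi/\phi - 1}\,\norm{\mSigma + \phi\mB}_{\Frob} > a_N\} \leq \varepsilon$; under the hypotheses $a_N \to 0$, with $a_N$ of order $\varepsilon^{-1}\{\omega_{N,2}^{-1/2}(1 + \rho_N^{1/2}(N-\rho_N)^{-1/2}) + (N-\rho_N)^{-1/2}\}$. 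Since $\mbSigma_{\vhtheta} = \hat\phi^{-1}\mhSigma_{\mW}$ and $\mbSigma = \phi^{-1}\mSigma \succ 0$, expanding $\hat\phi^{-1}\mhSigma_{\mW} - \phi^{-1}\mSigma$ and using that $\mA \mapsto \mA^{-1}$ is Lipschitz near $\mbSigma$ shows that the above event contains $\{\vhtheta \in \sB_N\}$, where $\sB_N = \{\vtheta : \max(\norm{\mbSigma^{-1}\mbSigma_{\vtheta} - \mI_q}, \norm{\mbSigma_{\vtheta}^{-1}\mbSigma - \mI_q}) \leq \tau_N\}$ for a radius $\tau_N$ of the same order as $a_N$; hence $\Pr\{\vhtheta \notin \sB_N\} \leq \varepsilon$ for $N$ large.

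For the second term I would apply Proposition~\ref{prop:finite-rel-efficiency} verbatim with $\sB = \sB_N$: its $\tau$ is at most $\tau_N$ and its $\rho$ is $\rho_N$, so off an event of probability $\varepsilon$ one has $\sup_{\vtheta \in \sB_N}\norm{\mOmega^{1/2}(\vhbeta_{\vtheta} - \vhbeta)}^2 \leq 9\varepsilon^{-1}\{48p^3 + 48pq^3(1 + 4p\tau_N)^2 + 768\rho_N\tau_N^2(1 + 4p\tau_N)\}\tau_N^2$, which, as $p$ and $q$ are fixed and $\tau_N \to 0$, is $\Oh(\varepsilon^{-1}(\tau_N^2 + \rho_N\tau_N^4))$. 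Combining the two parts, with probability at least $1 - 2\varepsilon$ one gets $\norm{\mOmega^{1/2}(\vhbeta_{\vhtheta} - \vhbeta)}^2 = \Oh(\varepsilon^{-1}(\tau_N^2 + \rho_N\tau_N^4))$, and it remains only to exhibit a radius $\tau_N$ that is simultaneously large enough (just above $a_N$) that $\vhtheta \in \sB_N$ with probability at least $1 - \varepsilon$ and small enough that $\rho_N\tau_N^4 \to 0$. This is exactly what the three rate hypotheses deliver: $\rho_N \to \infty$ and $(\omega_{N,2}^2/\rho_N)\log\rho_N \to \infty$ (which in particular forces $\omega_{N,2} \to \infty$), together with $(N-\rho_N)\log\rho_N \to \infty$, ensure $a_N$ shrinks fast enough relative to $\rho_N^{-1/4}$, the $\log\rho_N$ factors providing the gap between ``$\tau_N$ big enough for coverage'' and ``$\rho_N\tau_N^4 \to 0$''. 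Letting $N \to \infty$ and then $\varepsilon \to 0$ closes the argument.

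The main obstacle is precisely this last reconciliation of rates. Proposition~\ref{prop:sigma-concentration} controls $\mhSigma_{\mW}$ only through a second-moment-type tail, so the effective convergence rate of $\vhtheta$ with vanishing failure probability is no better than $\omega_{N,2}^{-1/2}$ up to slowly varying factors, and it is further inflated by the bias term $\hat\phi\mB$ --- whose Frobenius norm is of order $\rho_N^{1/2}\omega_{N,2}^{-1/2}$ --- when $N/\rho_N$ is not large; one must check that this is still compatible with the $\rho_N\tau^4$ growth of the $\vdelta_3$ term in Proposition~\ref{prop:finite-rel-efficiency}, which is where all three rate conditions enter and where the bookkeeping is delicate. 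Everything else --- the neighbourhood translation using $\mSigma \succ 0$, the union bound over the two good events, and the uniformity over $\vtheta$, which is already built into Proposition~\ref{prop:finite-rel-efficiency} --- is routine. It is also worth verifying at the outset that the hypotheses imply $\omega_{N,2}(N/\rho_N - 1) \to \infty$, so that $\mhSigma_{\mW}$ is genuinely consistent and $\sB_N$ really does shrink to $\{\vtheta_0\}$.
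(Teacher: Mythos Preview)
Your proposal is correct and follows essentially the same approach as the paper: the text itself states that the result ``follows from Proposition~\ref{prop:sigma-concentration} and Proposition~\ref{prop:finite-rel-efficiency},'' and your plan---localize $\vhtheta$ in a shrinking ball $\sB_N$ via the first, then bound the supremum over $\sB_N$ via the second, and finally balance $\tau_N$ against $\rho_N$ using the $\log\rho_N$ slack in the rate hypotheses---is exactly that route. The delicate bookkeeping you flag (matching the $\varepsilon^{-1}$ in the concentration radius against the $\rho_N\tau_N^4$ growth, and absorbing the $(N/\rho_N-1)^{-1/2}$ bias contribution) is indeed the content of the appendix proof, and your identification of where each of the three rate conditions enters is accurate.
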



The next two results show that the two-step
estimator $\vhbeta_{\vhtheta}$ is asymptotically normal.

\begin{proposition}\label{prop:optimal-normal}
Suppose that Assumptions~\ref{asn:first}--\ref{asn:last} are in force.  Let
$\vhbeta$ denote the weight-based moment estimate with variance-minimizing
weights $\mW_{i}$ as in Eq.~\eqref{eqn:optimal-weight}.  If $M \to \infty$ and
\(
  \sum_{i=1}^{M} \norm{\mOmega^{-1} \mV_{i1} \mW_i \mV_{i1}^\trans}^4 \to 0,
\)
then $\mOmega^{1/2}(\vhbeta - \vbeta)$ converges in distribution to a
mean-zero multivariate normal random vector with identity covariance matrix.
\end{proposition}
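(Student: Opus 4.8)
The plan is to write $\mOmega^{1/2}(\vhbeta - \vbeta)$ as a sum of independent mean-zero random vectors forming a triangular array and then invoke a multivariate Lindeberg--Feller central limit theorem, which I would verify through the stronger Lyapunov condition. First I would use the invertibility of $\mOmega$ (Proposition~\ref{prop:Omega-invertible}) and the representation~\eqref{eqn:beta-hat} to write \( \mOmega^{1/2}(\vhbeta - \vbeta) = \sum_{i=1}^{M} \vZ_i \) with $\vZ_i = \mOmega^{-1/2}\mV_{i1}\mW_i^{1/2}\vgamma_i$ and $\vgamma_i = \mW_i^{1/2}(\mV_i^\trans\vheta_i - \mV_{i1}^\trans\vbeta)$. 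Assumptions~\ref{asn:effect} and~\ref{asn:effect-est} make the $\vZ_i$ independent across $i$ with mean zero, and because $\mW_i$ is the variance-minimizing weight~\eqref{eqn:optimal-weight}, the vectors $\vgamma_i$ satisfy $\E(\vgamma_i) = 0$ and $\cov(\vgamma_i) = \mI_{r_i}$, exactly as in the proof of Proposition~\ref{prop:finite-rel-efficiency}.

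Then I would check the covariance. Writing $\mA_i = \mOmega^{-1/2}\mV_{i1}\mW_i\mV_{i1}^\trans\mOmega^{-1/2}$, one has $\cov(\vZ_i) = \mA_i$ and, because $\sum_i \mV_{i1}\mW_i\mV_{i1}^\trans = \mOmega$, the exact identity $\sum_{i=1}^{M}\mA_i = \mI_p$; so the summand covariances already add up to the claimed limiting covariance. This identity also supplies the two facts the negligibility step relies on: each $\mA_i$ is positive semidefinite with $\mA_i \preceq \mI_p$, hence $\norm{\mA_i} \le 1$, and $\sum_i \norm{\mA_i} \le \sum_i \tr(\mA_i) = p$.

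For the Lyapunov condition I would bound $\sum_{i=1}^{M}\E\norm{\vZ_i}^4$. Decomposing $\mV_i^\trans\vheta_i - \mV_{i1}^\trans\vbeta = \mV_{i2}^\trans\vu_i - \mV_i^\trans(\veta_i - \vheta_i)$ into its independent random-effect and estimation-error parts and using $\mV_{i2}^\trans\mSigma\mV_{i2} \preceq \mW_i^{-1}$ and $\mD_i^{-2} \preceq \mW_i^{-1}$ (both immediate from the form of $\mW_i$), the fourth-moment bounds $\E\norm{\vtu_i}^4 \le \mu$ and $\E\norm{\phi^{-1/2}\mD_i\mV_i^\trans(\veta_i - \vheta_i)}^4 \le \lambda$ of Assumptions~\ref{asn:effect} and~\ref{asn:effect-est} give a uniform bound $\E\norm{\vgamma_i}^4 \le 8(\mu + \lambda)$. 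Since $\norm{\vZ_i}^2 \le \norm{\mA_i}\,\norm{\vgamma_i}^2$, this yields $\E\norm{\vZ_i}^4 \le 8(\mu+\lambda)\norm{\mA_i}^2$, so
\[
  \sum_{i=1}^{M}\E\norm{\vZ_i}^4
  \le 8(\mu+\lambda)\sum_{i=1}^{M}\norm{\mA_i}^2
  \le 8(\mu+\lambda)\, p \max_i\norm{\mA_i}
  \le 8(\mu+\lambda)\, p\Big(\sum_{i=1}^{M}\norm{\mA_i}^4\Big)^{1/4},
\]
which tends to $0$ by the hypothesis, since $\sum_i \norm{\mOmega^{-1}\mV_{i1}\mW_i\mV_{i1}^\trans}^4 \to 0$ and $\norm{\mOmega^{-1}\mV_{i1}\mW_i\mV_{i1}^\trans} \ge \norm{\mA_i}$ (the two matrices are similar, and an operator norm is at least a spectral radius). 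The multivariate Lindeberg--Feller theorem then shows that $\sum_i \vZ_i = \mOmega^{1/2}(\vhbeta - \vbeta)$ converges in distribution to a mean-zero Gaussian vector with covariance $\mI_p$.

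I expect the real obstacle to be this last negligibility bound: the hypothesis controls only $\sum_i\norm{\mA_i}^4$, whereas the fourth absolute moments of the summands produce $\sum_i\norm{\mA_i}^2$, and $\sum a_i^4 \to 0$ does not by itself imply $\sum a_i^2 \to 0$. The resolution is the boundedness $\sum_i\norm{\mA_i} \le p$ --- a consequence of the weights being exactly the variance-minimizing ones --- which permits the interpolation $\sum_i\norm{\mA_i}^2 \le p\max_i\norm{\mA_i} \le p\,(\sum_i\norm{\mA_i}^4)^{1/4}$. The remaining ingredients (the linear representation, the covariance identity, and the uniform fourth-moment bound on $\vgamma_i$) are routine given the setup and the assumptions.
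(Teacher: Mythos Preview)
Your proposal is correct and follows essentially the same route as the paper, which uses the Cram\'er--Wold device together with scalar Lyapunov in place of your direct multivariate Lindeberg--Feller argument. Your explicit interpolation $\sum_i\norm{\mA_i}^2 \le p\max_i\norm{\mA_i} \le p\,(\sum_i\norm{\mA_i}^4)^{1/4}$, justified via the trace identity $\sum_i\tr(\mA_i)=p$, is in fact more careful than the paper's own argument: the paper bounds $\E(X_i^4)$ by a constant times $\norm{\mOmega^{-1}\mV_{i1}\mW_i\mV_{i1}^\trans}^2$ and then asserts without further comment that the fourth-power hypothesis suffices, leaving exactly the step you identified and resolved.
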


\begin{proof}
  By the Cram\'er-Wold device, it suffices to show that for any unit vector
$\vt$, the quantity $Y = \vt^\trans \mOmega^{1/2}(\vhbeta - \vbeta)$ converges in
distribution to a standard normal random variable.  For $i = 1, \dotsc, M$,
define
\[
  X_i
  =
    \vt^\trans \mOmega^{-1/2} \mV_{i1} \mW_i
    (\mV_i^\trans \vheta_i - \mV_{i1}^\trans \vbeta)
  =
    \vt^\trans \mOmega^{-1/2} \mV_{i1} \mW_i
    (\mV_i^\trans \vh_i + \mV_{i2}^\trans \vu_i),
\]
so that $Y = \sum_{i=1}^{M} X_i$.  It follows that $\E(Y) = 0$ and $\var(Y) =
1$.  If we can show that $\sum_{i=1}^{M} \E(X_i^4) \to 0$, then Lyapunov's
Theorem will ensure that $Y$ converges in distribution to a standard normal
random variable, the desired result of the proposition.

By the Cauchy-Schwarz inequality,
\[
  X_i^2
  \leq
  \norm{\mOmega^{-1/2} \mV_{i1} \mW_i (\mV_i^\trans \vh_i + \mV_{i2}^\trans \vu_i)}^2
  \leq
  \norm{\mOmega^{-1/2} \mV_{i1} \mW_i^{1/2}}^2
  \norm{\mW_i^{1/2} (\mV_i^\trans \vh_i + \mV_{i2}^\trans \vu_i)}^2.
\]
Therefore, it follows that
\[
  \E(X_i^4)
  \leq
  \norm{\mOmega^{-1/2} \mV_{i1} \mW_i^{1/2}}^4
  \, \E\norm{\mW_i^{1/2} (\mV_i^\trans \vh_i + \mV_{i2}^\trans \vu_i)}^4.
\]
One can write
\(
  \norm{\mOmega^{-1/2} \mV_{i1} \mW_i^{1/2}}^4
  =
  \norm{\mOmega^{-1} \mV_{i1} \mW_i \mV_{i1}^\trans}^2.
\)
From Assumptions~\ref{asn:effect} and~\ref{asn:effect-est}, it follows that
\(
  \E\norm{\mW_i^{1/2} (\mV_i^\trans \vh_i + \mV_{i2}^\trans \vu_i)}^4
  \leq C
\)
for some constant $C$ independent of $N$.  Thus, if
\(
  \sum_{i=1}^{M} \norm{\mOmega^{-1} \mV_{i1} \mW_i \mV_{i1}^\trans}^4 \to 0,
\)
then
\(
  \sum_{i=1}^{M} \E(X_i^4) \to 0,
\)
and hence $Y$ converges in distribution to a standard normal random variable.
\end{proof}

\begin{corollary}[Asymptotic normality]\label{cor:twostep-normal}
If the assumptions of Propositions~\ref{prop:twostep-efficient}
and~\ref{prop:optimal-normal} are in force, then the vector
$\mOmega^{1/2} (\vhbeta_{\vhtheta} - \vbeta)$ converges in distribution to a
mean-zero multivariate normal random vector with identity covariance.
\end{corollary}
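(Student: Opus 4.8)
The plan is to decompose the normalized two-step error into a term that is asymptotically Gaussian and a remainder that is negligible in probability, and then close the argument with Slutsky's theorem. Write $\vhbeta$ for the moment-based estimator that uses the variance-minimizing weights~\eqref{eqn:optimal-weight} --- the same estimator that appears in both Proposition~\ref{prop:twostep-efficient} and Proposition~\ref{prop:optimal-normal} --- and start from the identity
\[
  \mOmega^{1/2}(\vhbeta_{\vhtheta} - \vbeta)
    =
  \mOmega^{1/2}(\vhbeta_{\vhtheta} - \vhbeta)
    +
  \mOmega^{1/2}(\vhbeta - \vbeta).
\]
Since the corollary assumes the hypotheses of Proposition~\ref{prop:twostep-efficient}, the first summand satisfies $\mOmega^{1/2}(\vhbeta_{\vhtheta} - \vhbeta) \toP 0$. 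Since it also assumes the hypotheses of Proposition~\ref{prop:optimal-normal}, the second summand converges in distribution to a mean-zero multivariate normal vector with identity covariance.

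It then remains only to combine a convergence-in-distribution statement with a convergence-in-probability statement. I would invoke Slutsky's theorem (the converging-together lemma): if $A_N$ converges in distribution to $A$ and $B_N \toP 0$ in $\reals^p$, then $A_N + B_N$ converges in distribution to $A$. Taking $A_N = \mOmega^{1/2}(\vhbeta - \vbeta)$, $B_N = \mOmega^{1/2}(\vhbeta_{\vhtheta} - \vhbeta)$, and $A$ a standard multivariate normal vector yields the claimed limit for $\mOmega^{1/2}(\vhbeta_{\vhtheta} - \vbeta)$.

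The only point worth verifying --- and it is bookkeeping rather than a genuine obstacle --- is that the normalizing matrix $\mOmega$ in the two invoked propositions is literally the same object. It is: in both places $\mOmega = \sum_{i=1}^{M} \mV_{i1} \mW_i \mV_{i1}^\trans$ is formed from the variance-minimizing weights $\mW_i = (\mV_{i2}^\trans \mbSigma \mV_{i2} + \mD_i^{-2})^{-1}$ at the true parameter $\vtheta_0$, which are exactly the first-step target of the two-step scheme, so the $\vhbeta$ and $\mOmega$ appearing in Proposition~\ref{prop:twostep-efficient} coincide with those in Proposition~\ref{prop:optimal-normal}. Given that, the corollary follows immediately; there is no hard step, because the substantive work has already been done in establishing relative efficiency (Proposition~\ref{prop:twostep-efficient}) and asymptotic normality of the optimally weighted estimator (Proposition~\ref{prop:optimal-normal}).
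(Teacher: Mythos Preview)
Your argument is correct and is exactly the intended one: the paper states the result as an immediate corollary without proof, and the obvious (and only natural) route is the decomposition $\mOmega^{1/2}(\vhbeta_{\vhtheta} - \vbeta) = \mOmega^{1/2}(\vhbeta_{\vhtheta} - \vhbeta) + \mOmega^{1/2}(\vhbeta - \vbeta)$ followed by Slutsky, with the two summands handled by Propositions~\ref{prop:twostep-efficient} and~\ref{prop:optimal-normal} respectively. Your check that the normalizing matrix $\mOmega$ is the same object in both propositions is correct and worth recording.
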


\section{Performance in simulations}

\label{sec:simulations}
\label{sec:hglm-simulation}

To evaluate the performance of the moment-based estimators in practice, and to
compare these estimators to their likelihood-based counterparts, we perform
two simulation studies: one for a hierarchical linear regression model, and
one for a hierarchical logistic regression model.  This section describes the
logistic regression simulation; Appendix~\ref{sec:hlm-simulation} of the
on-line supplement describes the linear regression case.  Both simulations exhibit similar behaviors.

\begin{figure}
  \centering
  \makebox[\textwidth][c]{
    \includegraphics[width=15cm]{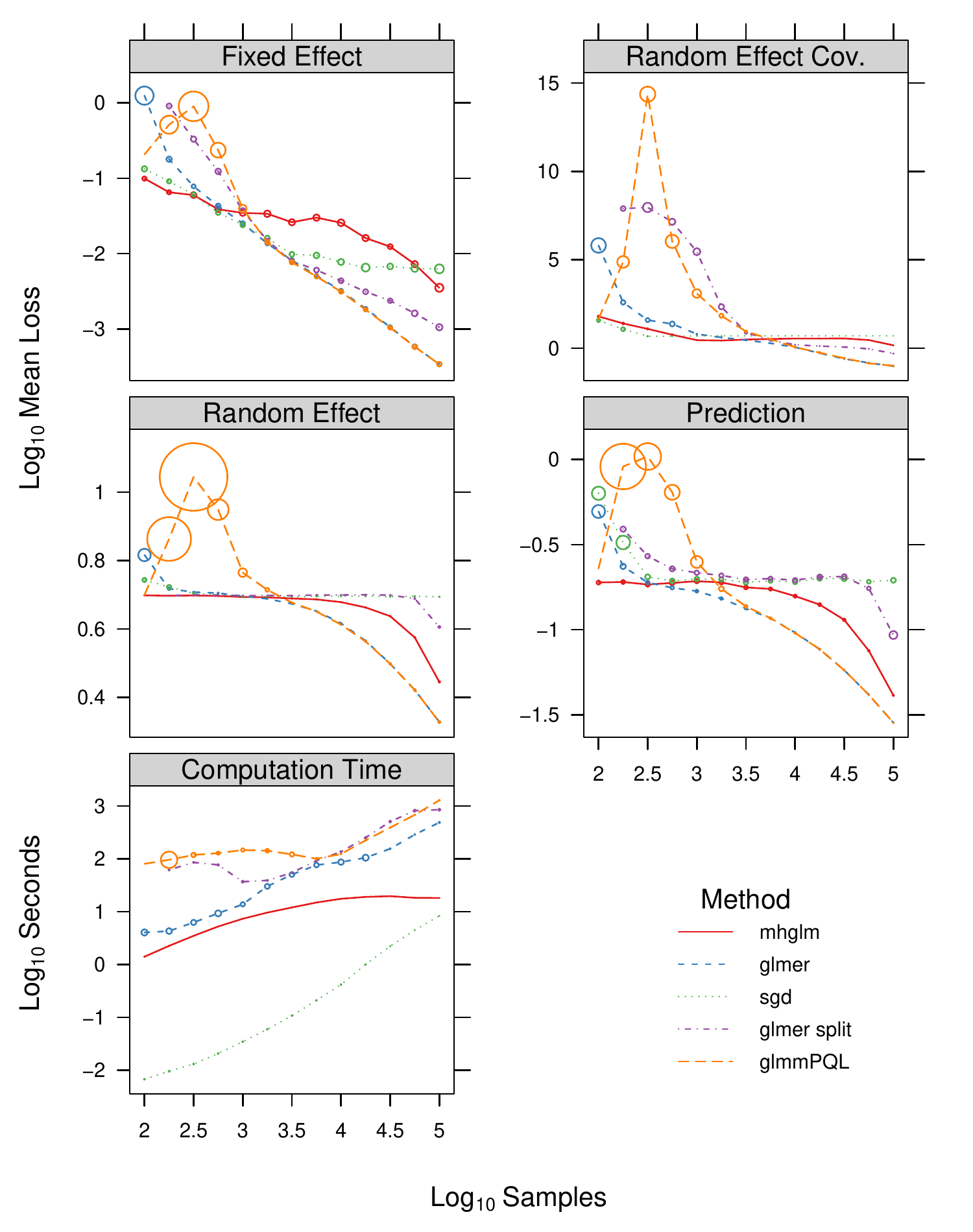}
  }
  \caption{
    Performance for the hierarchical logistic model.  Circle radii indicate
    one standard error along $y$-axis (absent when smaller than line width).
  }
  \label{fig:sim-logistic}
\end{figure}

We set the number of groups to $M = 1000$ and simulate $N$
samples, with $N$ ranging from $100$ to $100000$.  We set the dimensions of
the fixed and random effect vectors to $p = q = 5$.  For each value of $N$ we
draw $100$ replicates according to the following procedure.

For each replicate, we draw a $p$-dimensional fixed effect vector $\vbeta$
with components $\beta_k$, $k = 1, \dotsc, p$ drawn independently from a $t$
distribution with $4$ degrees of freedom.  We draw random effect covariance
matrix $\mSigma$ from an inverse Wishart distribution with shape $\mI$ and $2
q$ degrees of freedom, scaled by $0.1$.

Rather than splitting the $N$ samples evenly across all $M$ groups, in each
replicate we draw population-specific sampling rates $\lambda_i\ (i = 1,
\dotsc, M)$ as independent exponential random variables with mean $N/M$.
Then, we allocate the $N$ sample points by drawing from a multinomial on $M$
categories with probability of category~$i$ proportional to $\lambda_i$.  This
sampling scheme is equivalent to drawing $n_1, \dotsc, n_M$ as independent
geometric random variables with mean $N/M$, conditional on their sum being
$N$; it gives rise to a highly skewed distribution of sample sizes.

For each group $i = 1, \dotsc, M$, once $n_i$ has been determined we draw a
random effect vector~$\vu_i$ as multivariate normal random vector with mean
zero and covariance $\mSigma$.  We draw random population-specific fixed
effect predictor vectors $\vx_{ij}$ for $j = 1, \dotsc, n_i$ with independent
elements such that $\Pr(x_{ijk} = + 1) = \Pr(x_{ijk} = -1) = 1/2$ for $k = 1,
\dotsc, p$.  We use the same procedure to random effect predictor vectors
$\vz_{ij}$. Finally, for $j = 1, \dotsc, n_i$, we draw response variate
$y_{ij}$ as Bernoulli with success probability $\mu_{ij} = \logit^{-1}
(\vx_{ij}^\trans \vbeta + \vz_{ij}^\trans \vu_i)$.

We use a variety of methods to compute estimates of the population parameters
$\vbeta$ and $\mSigma$, along with plug-in empirical Bayes estimates
group-specific random effects $\vhu_i$, $i = 1, \dotsc, M$:
\begin{enumerate}

  \item \emph{mhglm}, the proposed moment-based estimation procedure.  To
    compute the moment-based estimates, we use two-step estimators with
    semi-weighted initial step and $\mSigma_0$ set to the identity
    matrix, after standardizing the predictors.
    The procedure is implemented in the R programming language.

  \item \emph{glmer}, maximum likelihood, using a gradient-free optimization
    procedure applied to an order-0 Laplace approximation to the profiled likelihood,
    implemented in C++ and R by the \emph{lme4} R package \citep{lme4}.

  \item \emph{sgd}, which uses stochastic gradient descent to maximize a
    regularized version of the $h$-likelihood (described in detail in
    Appendix~\ref{sec:sgd} of the on-line supplement). The compute-intensive
    inner loop is implemented in C, and the outer loop in~R.
    
  \item \emph{glmer split}, a data-splitting estimation procedure, which splits
    the data set into 10 subsets, computes separate estimates for each using
    \emph{glmer}, and then combines the estimates by averaging them.
    Implemented in R.

  \item \emph{glmmPQL}, penalized quasi-likelihood, as implemented by the
    \emph{MASS} package by iteratively calling the \emph{lme} fitting
    procedure \citep{MASS}.

\end{enumerate}
We report serial computation time for each procedure, and we do not include
cross-validation time for the tuning parameter selection for the \emph{sgd}
method.

To evaluate the performances of the estimators, we use
\(
  \norm{\vbeta - \vhbeta}^2
\)
for the fixed effect loss,
\(
  \tr\{(\mhSigma \mSigma^{-1} - \mI)^2\}
\)
for the random effet covariance loss,
\(
  M^{-1} \sum_{i=1}^{M} \norm{\mSigma^{-1/2} (\vu_i - \vhu_i)}^2
\)
for the random effect loss, and
\(
  2 N^{-1} \sum_{i=1}^{M} \sum_{j=1}^{n_i} [
    \mu_{ij} \log(\mu_{ij} / \hat \mu_{ij})
    +
    (1 - \mu_{ij}) \log\{(1 - \mu_{ij}) / (1 - \hat \mu_{ij})\}
  ]
\)
for the prediction loss,
where $\mu_{ij} = \logit^{-1} (\vx_{ij}^\trans \vbeta + \vz_{ij}^\trans \vu_i)$
and $\hat \mu_{ij} = \logit^{-1} (\vx_{ij}^\trans \vhbeta + \vz_{ij}^\trans \vhu_i)$.


Fig.~\ref{fig:sim-logistic} shows the mean loss, averaged over all replicates,
with circle radii indicating standard errors along the vertical axes (absent
when less than the visible line width).  For moderate to large sample sizes,
there is a noticeable loss in statistical efficiency between the proposed
method (\emph{mhglm}) and the methods based on maximum likelihood
(\emph{glmer} and \emph{glmmPQL}).  Still, the proposed method appears to be
consistent.  Moreover, in terms of prediction loss, it performs better than
\emph{glmer split} and \emph{sgd}.

The lower-left panel of Fig.~\ref{fig:sim-logistic} shows the sequential
computation times for all methods.  For the largest values of $N$ tried in the
simulation, the proposed method is faster than the exact and approximate
maximum likelihood procedures by factor ranging from $10$ to $100$.  Without
including cross-validation time, the \emph{sgd} method is faster than all
other methods tried in the simulation.

In this simulation, it appears that the \emph{sgd} method trades substantial
statistical efficiency for improvements in computational efficiency.  The
proposed \emph{mhglm} method makes a similar trade-off, but delivers
noticeably higher statistical efficiency.

\section{Application to recommender systems}
\label{sec:application}

\subsection{Motivation}

To demonstrate the potential utility of the proposed moment-based estimators,
we apply them to a large-scale recommender system application.  Specifically,
we use them to fit a hierarchical model to the MovieLens~10M dataset: the
$N = 10000054$ ratings of $M = 69878$ users for $10681$ movies
\citep{Movi09}.  Using a moment-based
estimation procedure to fit a hierarchical model to this dataset required
approximately 10 minutes of serial computation time; the $glmer$ method
required approximately 9 hours to fit the same model.
In Sections~\ref{sec:estimating-user-prefs}--\ref{sec:predicting-user-ratings}
we demonstrate the ability of a hierarchical model, fitted using moment based
estimation, to estimate user preferences and predict user ratings.

\subsection{Estimating user preferences}

\label{sec:estimating-user-prefs}

One goal with a recommender system is to estimate user-specific preferences.
This information can be used to characterize the user population and to
cluster the users into meaningful groups, possibly for targeting promotions or
advertisements.  Formally, we represent a user's preferences by a vector of
coefficients which relate observable covariates to the user's ratings.
We will try to estimate these user-specific coefficients from the available
movie rating data.

Each rating consists of a user, and movie, a time, and a star value between 0
and 5.  We binarize the ratings, 
then use a logistic regression model to relate the binarized
ratings to review-specific predictors.  We use the same predictors for the
fixed and random effects, so that the model reduces to a random coefficient
model. Letting $\vbeta_i = \vbeta + \vu_i$ be a user-specific coefficient vector
(fixed plus random effect), the model specifies
\(
\logit \Pr( y_{ij} = 1 \mid \vx_{ij}, \vu_i)
  =
  \vx_{ij}^\trans \vbeta + \vx_{ij}^\trans \vu_i
  =
  \vx_{ij}^\trans \vbeta_i
\)
where $y_{ij}$ indicates whether or not rating $ij$ is favourable (at least 4
stars) and $\vx_{ij}$ is a set of rating predictors.

\begin{table}
\caption{Predictors associated with review $ij$}
\label{tab:ml-10m-features}

\centering
\begin{tabular}{l p{.70\textwidth}}
\toprule
Predictor & Description \\
\midrule
$\text{Genre}_{ij}$
& A 4-component vector with movie-specific genre scores for Action,
    Children, Comedy, and Drama of the rated movie. Movies belonging to
    multiple genres have fractional scores for individual categories.
We use effect coding, so that the coefficients for the $4$ genre
components sum to zero.
\\
$\text{Popularity}_{ij}$
& A robust estimate of the logit of the current popularity of the rated movie,
computed from recent ratings of the movie:
$\logit \{ (l_{ij} + 0.5) / (n_{ij} + 1.0)\}$, where $l_{ij}$ is the number of
users who recently liked the movie and $n_{ij}$ is the number of recent
reviews of the movie.  Here, ``recent'' reviews of the movie are the 30 or
fewer most recent reviews at the time of rating $ij$.
\\
$\text{Previous}_{ij}$
& An indicator of whether or not user $i$ gave a favourable star
value ($\geq 4$) in his or her previous rating.
This predictor is designed to capture the user's current overall mood.  
\\
\bottomrule
\end{tabular}

\end{table}

\begin{figure}
  \centering
  \makebox[\textwidth][c]{
    \includegraphics[width=13.5cm]{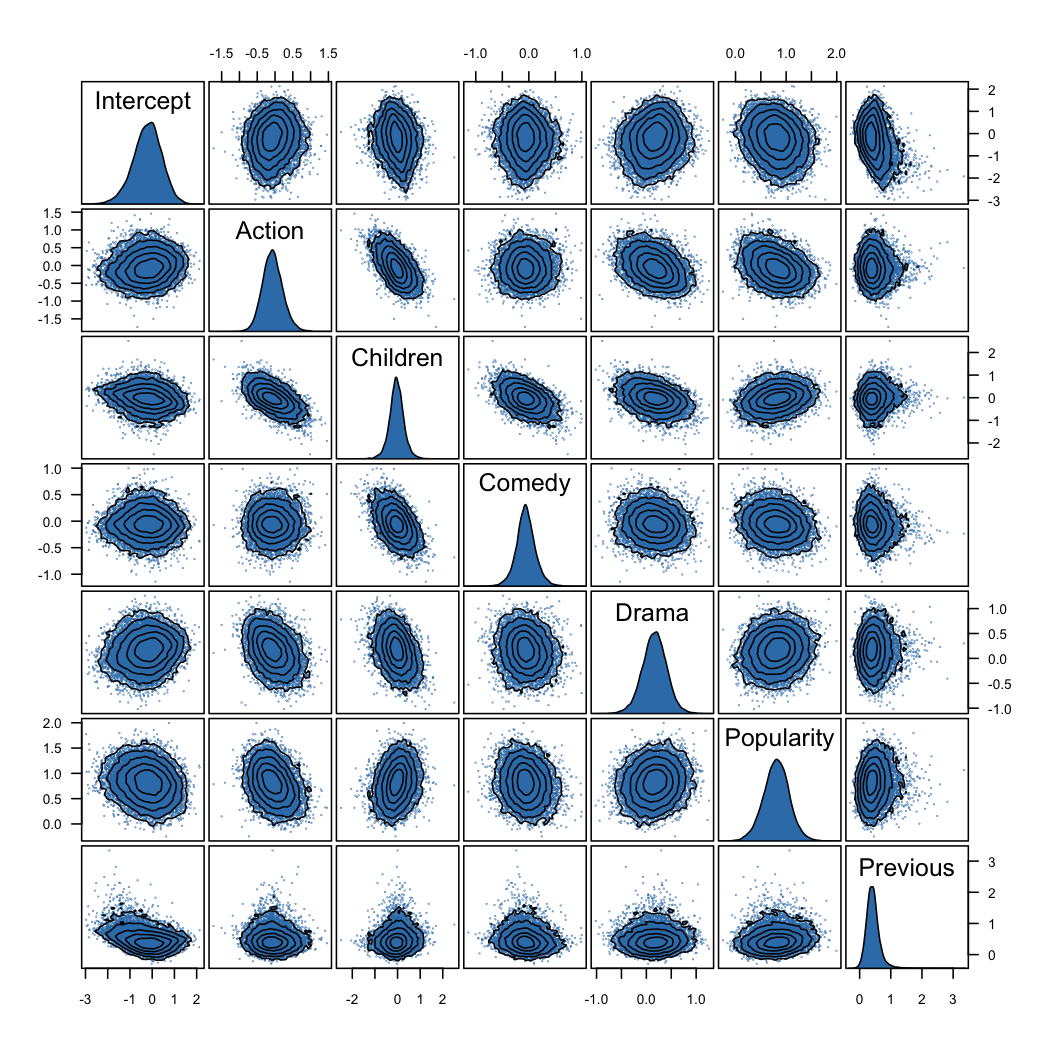}
  }
  \caption{Empirical Bayes coefficient estimates for the 26884 users with at
  least 100 reviews.}
  \label{fig:ml-10m-coef-pairs}
\end{figure}

%
%

Our first set of predictors encodes the genre of the movie being rated.
The remaining rating-specific predictors are motivated by
intuition derived from the BellKor movie recommender system \citep{Kore09}.
One predictor, $\text{Popularity}_{ij}$ captures the current popularity of the
movie being rating.
The other predictor, $\text{Previous}_{ij}$, indicates whether or not
the user's previous rating was positive; 
Table~\ref{tab:ml-10m-features} describes
these predictors in detail.


We assume a hierarchical model for the coefficient vectors with $\E(\vbeta_i) =
\vbeta$ and $\cov(\vbeta_i) = \mSigma$ for $i = 1,\dotsc, M$.  We use
moment-based estimators for $\vbeta$ and $\mSigma$ computed from all $N$
ratings, and then compute approximate empirical Bayes
estimates for $\vbeta_i\ (i = 1, \dotsc, M)$ assuming that the coefficients
come from a multivariate normal population.
Fig.~\ref{fig:ml-10m-coef-pairs} shows the one- and two-dimensional marginal
distributions of the empirical Bayes coefficient estimates for those users
with at least 100 ratings.  In the two-dimensional marginals, contour lines
show approximately 38\%, 68\%, 87\%, 95\%, and 99\% of coefficient pairs;
these lines should be elliptical and evenly spaced for bivariate
normally-distributed pairs.  For the most part, the bivariate distributions
look approximately normal, excepting the coefficient of
$\text{Previous}_{ij}$.  

By looking at the associations between the estimated coefficients, we can
conclude that (a)~affinity for particular genres appears
unrelated to the intercept, which encodes a user's overall tendency to give
positive ratings;
(b)~users who like action movies tend to
dislike children's and drama movies, users who like children's movies tend to
dislike other genres, and users who like drama movies tend to dislike action and
children's movies;
(c)~users who like action movies tend to prefer unpopular movies, and users who
like children's movies tend to prefer popular movies;
(d)~users who tend to give ratings similar to their previous ratings do not
tend to have preferences for particular genres.
Not only does the hierarchical coefficient model allow for a
diversity of user preferences (encoded in regression coefficients), it also
reveals associations between these preferences.

\subsection{Predicting user ratings}

\label{sec:predicting-user-ratings}

\begin{figure}
  \centering
  \includegraphics[height=7.5cm]{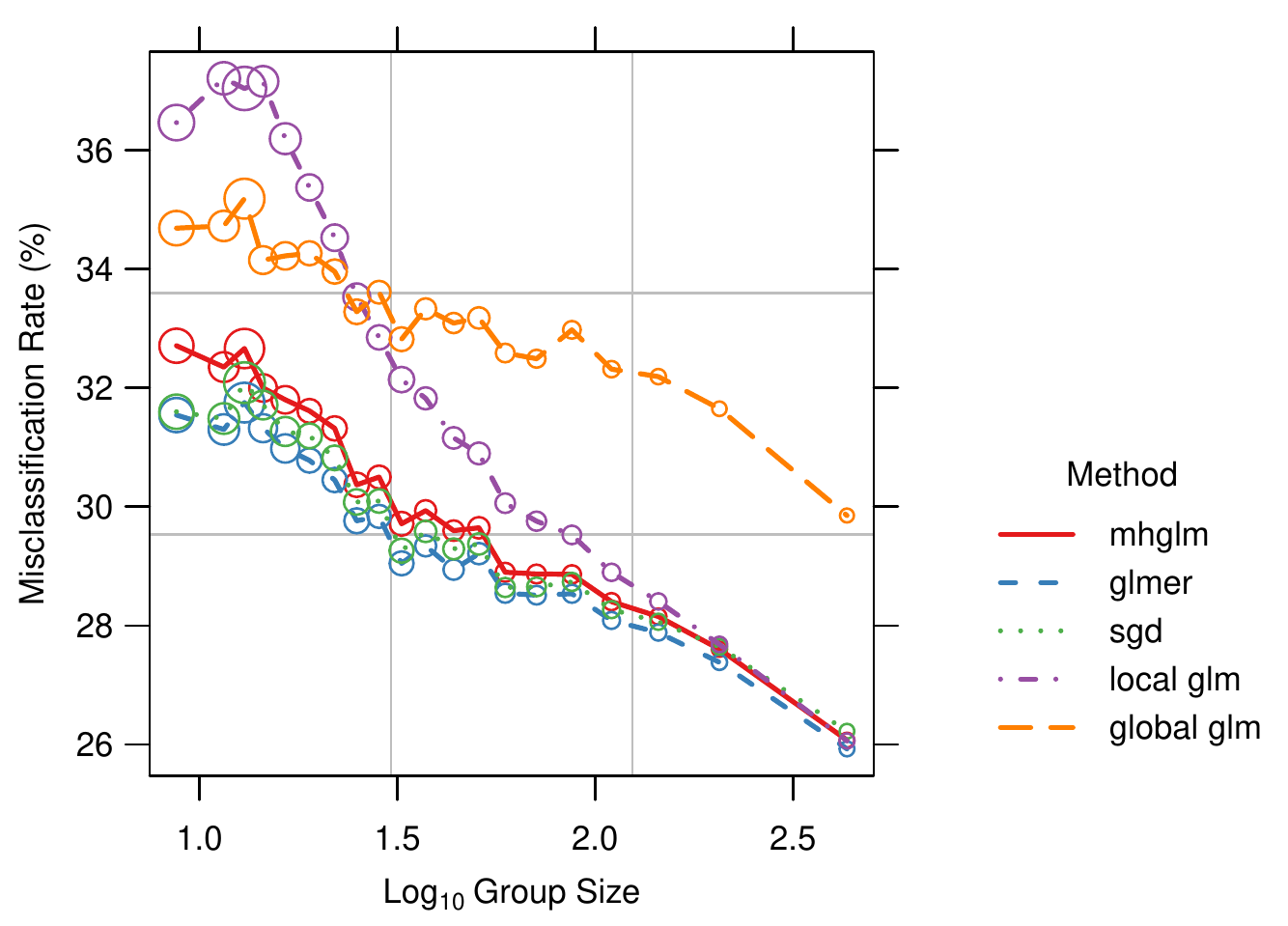}
  \caption{Misclassification rate for each user, $i$, aggregated by group size, $n_i$.}
  \label{fig:ml-10m-misclass}
\end{figure}

Often, the primary goal of a recommender system is to predict item ratings.
For this task, one advantage a hierarchical method holds over competing
methods is its ability to borrow estimation strength across similar users,
often obtaining better estimates than a model which estimates user-specific
coefficients in isolation.  To demonstrate this ability, we compare the
out-of-sample prediction performances of three models: a ``global''
generalized linear model, using a single coefficient vector for all users,
estimated by Firth's penalized maximum likelihood; a ``local'' generalized
linear model, which uses separate
coefficient vectors for all users, independently estimated with user-specific
data and penalized maximum likelihood; and a hierarchical logistic regression
model, which uses approximate empirical Bayes posterior means of the coefficients
in the hierarchical model.  We fit the hierarchical model using three
different methods: moment-based estimation (\emph{mhglm}), maximum profile
likelihood (\emph{glmer}), and stochastic gradient descent (\emph{sgd}).

We randomly split the reviews into 50\% for a training set and 50\% for a test
set.  We fit all three models on the training set, then use the fitted models
to predict the values in the test set.
Fig.~\ref{fig:ml-10m-misclass} shows
the misclassification loss performances of the fitted models on the test set for each user $i$,
aggregated by group size, $n_i$.  The lines shows the averages, and the radii
of the circles indicate standard errors along with $y$-axis.  All three
fitting methods for the hierarchical models perform comparably.
The hierarchical methods uniformly beat the local and the global models.
By combining the flexibility of the local model with the
stability of the global model, the hierarchical model is able to outperform
both extremes.

\section{Discussion}
\label{sec:discussion}

We have extended Cochran's moment-based estimators to general hierarchical
models.  Unlike other extensions, our proposal allows for both fixed and random
effects, and it accommodates rank-degenerate predictor matrices.  The proposed
estimation procedure has three main properties which make it appealing in
large-scale data regimes.  First, the procedure does not rely on strong
distributional assumptions.  Second, even when distributional assumptions are
in force, in large sample settings the method can exhibit estimation and
prediction performance comparable to likelihood-based estimators.  Finally,
and most importantly, the method has good computational performance, sometimes
10 to 100 times faster than existing maximum likelihood procedures.

We have analyzed the proposed method, both theoretically and empirically.  We
have shown that, subject to mild regularity assumptions, the moment-based
estimation procedure is consistent.  Moreover, the two-step estimation
procedure is asymptotically relatively efficient and asymptotically normal,
facilitating inference for the fixed effect vector.

The assumptions required for the theoretical results hold for most
hierarchical linear models.  However, for hierarchical generalized linear
models, these assumptions will only be good approximations when the
group-specific sample sizes $n_i$ are large; when this is not the case, the
theoretical consistency results will no longer apply.  In
Sections~\ref{sec:hglm-simulation} and~\ref{sec:application}, we demonstrate that
even without theoretical guarantees, the proposed method can perform well.  It
is an open question to derive exact theoretical conditions to guarantee that
the moment-based estimators for hierarchical generalized linear models are
consistent.


It is natural to ask if the moment-based estimators discussed in this article
can be extended to handle more general models.  For more general hierarchical
models with additional levels of hierarchy, this extension seems feasible, but
implementing this procedure in practice and deriving the appropriate
theoretical conditions to guarantee consistency will require some finesse.

To extend the proposed estimators to more general mixed models with non-nested
random effects, it is not obvious how to proceed.  We rely crucially on the
ability to get conditionally independent subpopulation-specific coefficient
estimates.  This is likely impossible with crossed random effects.  In our
recommender system application, we were able to obviate the need for
item-specific random effects by introducing a data-dependent predictor to
capture item popularity.  While this is not a perfect solution, it falls
within our modelling framework, and it is simple to implement.   It is likely
that similar predictors can be used in other contexts
where one would normally use crossed random effects.

As data volumes continue to outpace computational capacity, it becomes
increasingly advantageous to trade statistical for computational efficiency.
This is sometimes difficult, and it is only achievable if computational
demands are a primary concern throughout the development of the methodology.
We have demonstrated that when using moment-based estimates for hierarchical
models, it is sometimes possible to gain substantial improvements in speed
without sacrificing too much estimation performance.

\section*{Acknowledgement}

The author thanks Brendan O'Connor, Marc Scott, Jeff Simonoff, and the
anonymous referees for providing references and for
suggesting edits that greatly improved the article.

\appendix

\bibliographystyle{chicago}
\bibliography{refs}

\begin{thebibliography}{}

\bibitem[\protect\citeauthoryear{Adomavicius and Tuzhilin}{Adomavicius and
  Tuzhilin}{2005}]{Adom05}
Adomavicius, G. and A.~Tuzhilin (2005).
\newblock Toward the next generation of recommender systems: A survey of the
  state-of-the-art and possible extensions.
\newblock {\em IEEE T. Knowl. Data En.\/}~{\em 17}, 734--749.

\bibitem[\protect\citeauthoryear{Agarwal}{Agarwal}{2008}]{Agar08}
Agarwal, D. (2008).
\newblock Statistical challenges in internet advertising.
\newblock In W.~Jank and G.~Shmueli (Eds.), {\em Statistical Methods in
  {e}-Commerce Research}. Wiley.

\bibitem[\protect\citeauthoryear{Agarwal and Chen}{Agarwal and
  Chen}{2009}]{Agar09}
Agarwal, D. and B.-C. Chen (2009).
\newblock Regression-based latent factor models.
\newblock In {\em KDD'09}, Paris, France, pp.\  19--27.

\bibitem[\protect\citeauthoryear{Ansari, Essegaier, and Kohli}{Ansari
  et~al.}{2000}]{Ansa00}
Ansari, A., S.~Essegaier, and R.~Kohli (2000).
\newblock Internet recommendations systems.
\newblock {\em J. Marketing Research\/}~{\em 37}, 363--375.

\bibitem[\protect\citeauthoryear{Armagan and Dunson}{Armagan and
  Dunson}{2011}]{Arma11}
Armagan, A. and D.~Dunson (2011).
\newblock Sparse variational analysis of linear mixed models for large data
  sets.
\newblock {\em Stat. Probabil. Lett.\/}~{\em 81}, 1056--1062.

\bibitem[\protect\citeauthoryear{Bates, Maechler, Bolker, and Walker}{Bates
  et~al.}{2013}]{lme4}
Bates, D., M.~Maechler, B.~Bolker, and S.~Walker (2013).
\newblock {\em lme4: Linear mixed-effects models using Eigen and S4}.
\newblock R package version 1.1-7.

\bibitem[\protect\citeauthoryear{Carter and Yang}{Carter and
  Yang}{1986}]{Cart86}
Carter, R.~L. and M.~C.~K. Yang (1986).
\newblock Large sample inference in random coefficient regression models.
\newblock {\em Commun. Stat. -- Theor. M.\/}~{\em 15}, 2507--2525.

\bibitem[\protect\citeauthoryear{Cochran}{Cochran}{1937}]{Coch37}
Cochran, W.~G. (1937).
\newblock Problems arising in the analysis of a series of similar experiments.
\newblock {\em Supplement to the Journal of the Royal Statistical
  Society\/}~{\em 4}, 102--118.

\bibitem[\protect\citeauthoryear{Cochran}{Cochran}{1954}]{Coch54}
Cochran, W.~G. (1954).
\newblock The combination of estimates from different experiments.
\newblock {\em Biometrics\/}~{\em 10}, 101--129.

\bibitem[\protect\citeauthoryear{Condliff, Lewis, Madigan, and Posse}{Condliff
  et~al.}{1999}]{Cond99}
Condliff, M.~K., D.~D. Lewis, D.~Madigan, and C.~Posse (1999).
\newblock Bayesian mixed-effects models for recommender systems.
\newblock In {\em Proc. ACM SIGIR '99 Workshop on Recommender Systems:
  Algorithms and Evaluation}.

\bibitem[\protect\citeauthoryear{Cox and Solomon}{Cox and
  Solomon}{2002}]{Cox02}
Cox, D.~R. and P.~J. Solomon (2002).
\newblock {\em Components of Variance}.
\newblock Monographs on Statistics \& Applied Probability. Chapman \& Hall/CRC.

\bibitem[\protect\citeauthoryear{Dempster, Rubin, and Tsutakawa}{Dempster
  et~al.}{1981}]{Demp81}
Dempster, A.~P., D.~B. Rubin, and R.~K. Tsutakawa (1981).
\newblock Estimation in covariance components models.
\newblock {\em J. Am. Stat. Assoc.\/}~{\em 76}, 341--353.

\bibitem[\protect\citeauthoryear{Dror, Koenigstein, and Koren}{Dror
  et~al.}{2011}]{Dror11}
Dror, G., N.~Koenigstein, and Y.~Koren (2011).
\newblock Yahoo! music recommendations: Modeling music ratings with temporal
  dynamics and item taxonomy.
\newblock In {\em Proceedings of the fifth ACM conference on Recommender
  systems}, pp.\  165--172. ACM.

\bibitem[\protect\citeauthoryear{Firth}{Firth}{1993}]{Firt93}
Firth, D. (1993).
\newblock Bias reduction of maximum likelihood estimates.
\newblock {\em Biometrika\/}~{\em 80}, 27--38.

\bibitem[\protect\citeauthoryear{Gebregziabher, Egede, Gilbert, Hunt, Nietert,
  and Mauldin}{Gebregziabher et~al.}{2012}]{Gebr12}
Gebregziabher, M., L.~Egede, G.~E. Gilbert, K.~Hunt, P.~J. Nietert, and
  P.~Mauldin (2012).
\newblock Fitting parametric random effects models in very large data sets with
  application to {VHA} national data.
\newblock {\em BMC Medical Research Methodology\/}~{\em 12}, 1--14.

\bibitem[\protect\citeauthoryear{Goldstein}{Goldstein}{1986}]{Gold86}
Goldstein, H. (1986).
\newblock Multilevel mixed linear model analysis using iterative generalized
  least squares.
\newblock {\em Biometrika\/}~{\em 73}, 43--56.

\bibitem[\protect\citeauthoryear{Group{L}ens}{Group{L}ens}{2009}]{Movi09}
Group{L}ens (2009).
\newblock {MovieLens 10M Dataset}.
\newblock \url{http://grouplens.org/datasets/movielens/}.

\bibitem[\protect\citeauthoryear{Heinze and Schemper}{Heinze and
  Schemper}{2002}]{Hein02}
Heinze, G. and M.~Schemper (2002).
\newblock A solution to the problem of separation in logistic regression.
\newblock {\em Stat. Med.\/}~{\em 21}, 2409--2419.

\bibitem[\protect\citeauthoryear{Huang and Gelman}{Huang and
  Gelman}{2005}]{Huan05}
Huang, Z. and A.~Gelman (2005).
\newblock Sampling for {B}ayesian computation with large datasets.
\newblock Unpublished.

\bibitem[\protect\citeauthoryear{Jennrich and Schluchter}{Jennrich and
  Schluchter}{1986}]{Jenn86}
Jennrich, R.~I. and M.~D. Schluchter (1986).
\newblock Unbalanced repeated-measures models with structured covariance
  matrices.
\newblock {\em Biometrics\/}~{\em 42}, 805--820.

\bibitem[\protect\citeauthoryear{Khanna, Zhang, Agarwal, and Chen}{Khanna
  et~al.}{2013}]{Khan13}
Khanna, R., L.~Zhang, D.~Agarwal, and B.-C. Chen (2013).
\newblock Parallel matrix factorization for binary response.
\newblock In {\em Big Data, 2013 IEEE International Conference on}, pp.\
  430--438. IEEE.

\bibitem[\protect\citeauthoryear{Koren}{Koren}{2009}]{Kore09}
Koren, Y. (2009).
\newblock The {B}ell{K}or solution to the {N}etflix {G}rand {P}rize.
\newblock
  \url{http://www.netflixprize.com/assets/GrandPrize2009\_BPC\_BellKor.pdf}.

\bibitem[\protect\citeauthoryear{Koren, Bell, and Volinksy}{Koren
  et~al.}{2009}]{Kore09b}
Koren, Y., R.~Bell, and C.~Volinksy (2009).
\newblock Matrix factorization techniques for recommender systems.
\newblock {\em Computer\/}~{\em 42}, 30--37.

\bibitem[\protect\citeauthoryear{Kosmidis and Firth}{Kosmidis and
  Firth}{2009}]{Kosm09}
Kosmidis, I. and D.~Firth (2009).
\newblock Bias reduction in exponential family nonlinear models.
\newblock {\em Biometrika\/}~{\em 96}, 793--804.

\bibitem[\protect\citeauthoryear{Lee and Nelder}{Lee and Nelder}{1996}]{Lee96}
Lee, Y. and J.~A. Nelder (1996).
\newblock Hierarchical generalized linear models.
\newblock {\em J. R. Statist. Soc. B\/}~{\em 58}, 619--678.

\bibitem[\protect\citeauthoryear{Lindstrom and Bates}{Lindstrom and
  Bates}{1988}]{Lind88}
Lindstrom, M.~J. and D.~M. Bates (1988).
\newblock Newton-{R}aphson and {EM} algorithms for linear mixed-effects models
  for repeated-measures data.
\newblock {\em J. Am. Stat. Assoc.\/}~{\em 83}, 1014--1022.

\bibitem[\protect\citeauthoryear{Longford}{Longford}{1987}]{Long87}
Longford, N.~T. (1987).
\newblock A fast scoring algorithm for maximum likelihood estimation in
  unbalanced mixed models with nested random effects.
\newblock {\em Biometrika\/}~{\em 74}, 817--827.

\bibitem[\protect\citeauthoryear{Morris}{Morris}{1983}]{Morr83}
Morris, C.~N. (1983).
\newblock Parametric empirical {Bayes} inference: {T}heory and applications.
\newblock {\em J. Am. Stat. Assoc.\/}~{\em 78}, 47--55.

\bibitem[\protect\citeauthoryear{Naik, Wedel, Bacon, Bodapati, Bradlow,
  Kamakura, Kreulen, Lenk, Madigan, and Montgomery}{Naik et~al.}{2008}]{Naik08}
Naik, P., M.~Wedel, L.~Bacon, A.~Bodapati, E.~Bradlow, W.~Kamakura, J.~Kreulen,
  P.~Lenk, D.~M. Madigan, and A.~Montgomery (2008).
\newblock Challenges and opportunities in high-dimensional choice data
  analyses.
\newblock {\em Market. Lett.\/}~{\em 19}, 201--213.

\bibitem[\protect\citeauthoryear{Rao}{Rao}{1965}]{Rao65}
Rao, C.~R. (1965).
\newblock The theory of least squares when the parameters are stochastic and
  its application to the analysis of growth curves.
\newblock {\em Biometrika\/}~{\em 52}, 447--458.

\bibitem[\protect\citeauthoryear{Reinsel}{Reinsel}{1985}]{Rein85}
Reinsel, G.~C. (1985).
\newblock Mean squared error properties of empirical {B}ayes estimators in a
  multivariate random effects general linear model.
\newblock {\em J. Am. Stat. Assoc.\/}~{\em 80}, 642--650.

\bibitem[\protect\citeauthoryear{Scott, Simonoff, and Marx}{Scott
  et~al.}{2013}]{Scot13}
Scott, M.~A., J.~S. Simonoff, and B.~D. Marx (Eds.) (2013).
\newblock {\em The SAGE Handbook of Multilevel Modeling}.
\newblock SAGE Publications.

\bibitem[\protect\citeauthoryear{Scott, Blocker, Bonassi, Chipman, George, and
  McCulloch}{Scott et~al.}{2013}]{Scot13B}
Scott, S.~L., A.~W. Blocker, F.~V. Bonassi, H.~A. Chipman, E.~I. George, and
  R.~E. McCulloch (2013).
\newblock Bayes and big data: The consensus {M}onte {C}arlo algorithm.
\newblock In {\em Bayes 250}.

\bibitem[\protect\citeauthoryear{Snijders and Bosker}{Snijders and
  Bosker}{2012}]{Snij12}
Snijders, T. A.~B. and R.~J. Bosker (2012).
\newblock {\em Multilevel Analysis\/} (2nd ed.).
\newblock Sage.

\bibitem[\protect\citeauthoryear{Swamy}{Swamy}{1970}]{Swam70}
Swamy, P. A. V.~B. (1970).
\newblock Efficient inference in a random coefficient regression model.
\newblock {\em Econometrica\/}~{\em 38}, 311--323.

\bibitem[\protect\citeauthoryear{Venables and Ripley}{Venables and
  Ripley}{2002}]{MASS}
Venables, W.~N. and B.~D. Ripley (2002).
\newblock {\em Modern Applied Statistics with S\/} (4th ed.).
\newblock New York: Springer.

\bibitem[\protect\citeauthoryear{Yates and Cochran}{Yates and
  Cochran}{1938}]{Yate38}
Yates, F. and W.~G. Cochran (1938).
\newblock The analysis of groups of experiments.
\newblock {\em J. Agr. Sci.\/}~{\em 28}, 556--580.

\bibitem[\protect\citeauthoryear{Zhang and Agarwal}{Zhang and
  Agarwal}{2009}]{Zhan08}
Zhang, L. and D.~Agarwal (2009).
\newblock Fast computation of posterior mode in multi-level hierarchical
  models.
\newblock In D.~Koller, D.~Schuurmans, Y.~Bengio, and L.~Bottou (Eds.), {\em
  Advances in Neural Information Processing Systems 21}, pp.\  1913--1920.
  Curran Associates, Inc.

\bibitem[\protect\citeauthoryear{Zhang and Koren}{Zhang and
  Koren}{2007}]{Zhan07}
Zhang, Y. and J.~Koren (2007).
\newblock Efficient {B}ayesian hierarchical user modeling for recommendation
  systems.
\newblock In {\em Proceedings of the 30th Annual International {ACM} {SIGIR}
  Conference on Research and Development in Information Retrieval}, New York,
  NY, USA, pp.\  47--53. ACM.

\end{thebibliography}


\includepdf[pages={1-11}]{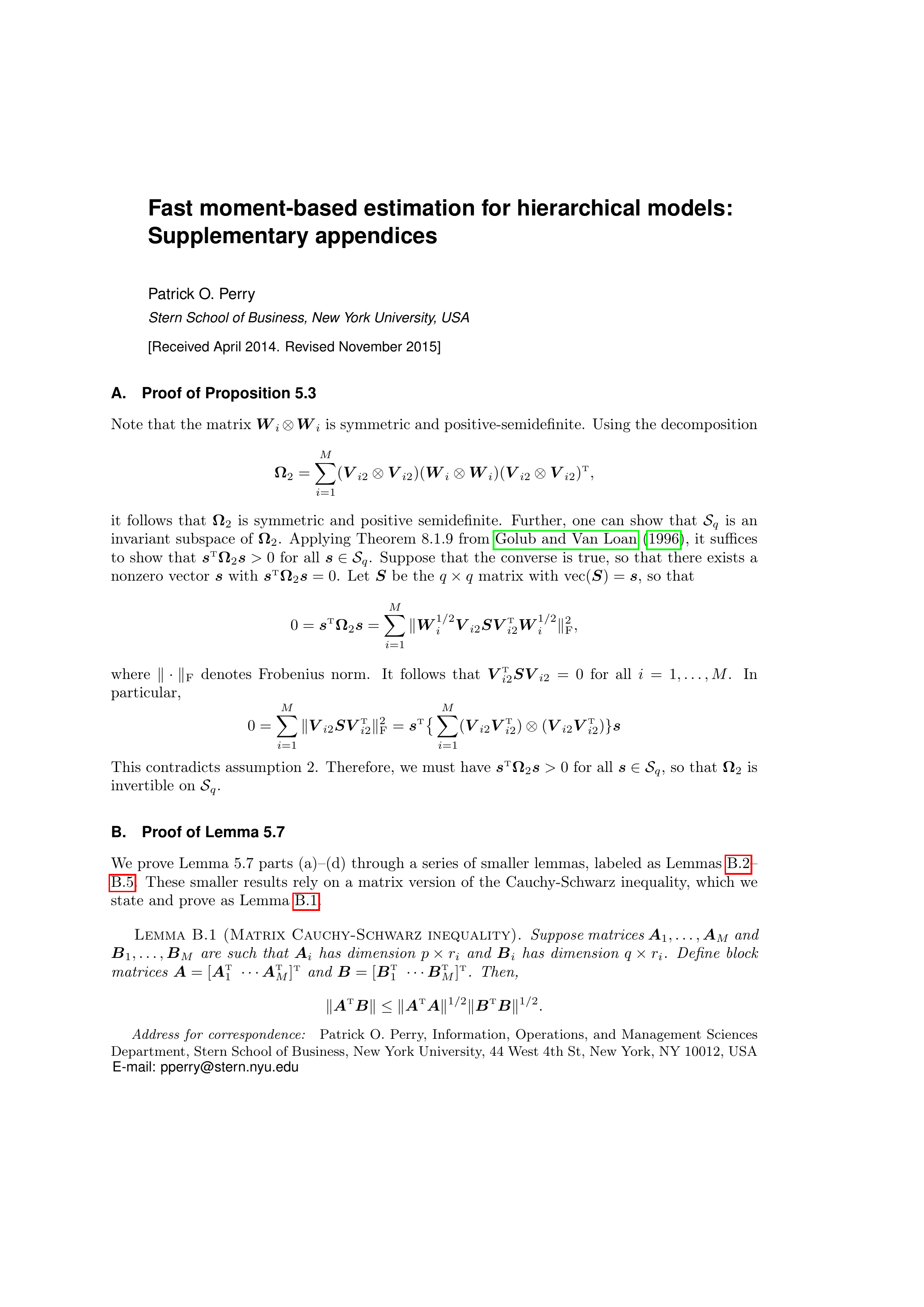}
\end{document}